\title{An Equilibrium Model for Schedule-Based Transit Networks with Hard Vehicle Capacities}
\author[1]{Tobias Harks}
\author[2]{Sven Jäger}
\author[1]{Michael Markl}
\author[3]{Philine Schiewe}
\affil[1]{University of Passau, Germany}
\affil[2]{RPTU Kaiserslautern-Landau, Germany}
\affil[3]{Aalto University, Finland}
\date{}
\newtheorem{theorem}{Theorem}
\newtheorem{remark}[theorem]{Remark}
\newtheorem{lemma}[theorem]{Lemma}
\newtheorem{proposition}[theorem]{Proposition}
\newtheorem{corollary}[theorem]{Corollary}
\newtheorem{claim}[theorem]{Claim}
\theoremstyle{definition}
\newtheorem{definition}[theorem]{Definition}
\newtheorem{notation}[theorem]{Notation}
\newtheorem{example}[theorem]{Example}
\newtheoremstyle{problemstyle}%
  {}%
  {}%
  {}%
  {}%
  {\bfseries}%
  {.}%
  {\newline}%
  {}%
\theoremstyle{problemstyle}
\newenvironment{problem}[1]%
  {%
   \innerproblem}%
  {\endinnerproblem}
\crefname{innerproblem}{Problem}{Problems}
\newenvironment{proofClaim}{\proof}{\endproof}
\newcommand{\BFzero}{\mathbf{0}}
\DeclareMathOperator*{\argmin}{arg\,min}
\crefname{algocf}{Algorithm}{Algorithms}
\Crefname{algocf}{Algorithm}{Algorithms}
\newcommand{\optDisplay}[1]{\[ #1 \]}
\pgfplotsset{compat=1.18}
\tikzset{
 >=latex
}
\newcommand{\oxc}{,\xspace}
\newcommand*{\R}{\mathbb R}
\newcommand*{\Z}{\mathbb Z}
\newcommand*{\N}{\mathbb N}
\newcommand*{\bigO}{\mathcal O}
\DeclarePairedDelimiter{\abs}{\lvert}{\rvert}
\newcommand*{\norm}[1]{\left\lVert #1 \right\rVert}
\newcommand*{\smallnorm}[1]{\lVert #1 \rVert}
\DeclarePairedDelimiterX{\scalar}[2]{\langle}{\rangle}{#1, #2}
\newcommand*{\demand}{Q} %
\newcommand*{\stations}{S}
\newcommand*{\vehicles}{Z}
\newcommand*{\nodetime}{\theta}
\newcommand*{\capacity}{\nu}
\newcommand*{\pathsWithOutside}{\mathcal{P}}
\newcommand*{\pathsWithoutOutside}{\mathcal{P}^\circ}
\newcommand*{\outEdges}[1]{\delta^-_v}
\newcommand*{\inEdges}[1]{\delta^+_v}
\newcommand*{\addmEpsDev}{admissible \epsDev} 
\newcommand*{\admissibleDeviations}{D}
\newcommand*{\epsDev}{$\varepsilon$-deviation}
\newcommand{\VI}{\textrm{VI}}
\newcommand{\QVI}{\textrm{QVI}} %
\newcommand{\outopt}{p^{\mathrm{out}}} %
\newcommand{\avPaths}{A} %
\newcommand*{\demFeasFlows}{\mathcal F_\demand} %
\newcommand*{\departureTimes}{\Theta}
\newcommand*{\targetTime}{T}
\newcommand*{\patharrival}{\textnormal{arr}}
\newcommand{\ordinalth}{th}
\newcommand{\sbTime}[2]{\ensuremath{#1^{#2}}}
\definecolor{customgray}{RGB}{221,221,221}
\definecolor{color1}{HTML}{190085}
\definecolor{color2}{HTML}{f21d1d}
\definecolor{color3}{HTML}{21c021}
\definecolor{color4}{HTML}{ee96ff}
\newcommand{\colourone}{blue\xspace}
\newcommand{\colourtwo}{red\xspace}
\newcommand{\colourthree}{green\xspace}
\newcommand{\colourfour}{pink\xspace}
\newcommand{\lifelineCompact}[2]{
    \draw[dashed] (#1, \height - 0.5) -- (#1, -0.5);
    \node[anchor=base] at (#1, \height + 0.1) {#2};
}
\newcommand{\lifelineCompactC}[2]{
    \draw[dashed, |-|] (#1, \height - 0.5) -- (#1, -0.5);
    \node[anchor=base] at (#1, \height + 0.1) {#2};
}
\newcommand{\waitnode}[3]{
    \node[fill,circle,inner sep=0.05cm] (#1) at (#2, \height - #3) {};
}
\newcommand{\departurenode}[4]{
    \node[fill,circle,inner sep=0.05cm] (#1) at (#2 + #4*\depxoffset, \height - #3) {};
}
\newcommand{\arrivalnode}[4]{
    \node[fill,circle,inner sep=0.05cm] (#1) at (#2 + #4*\depxoffset, \height - #3) {};
}
\newcommand{\lifelineVerbose}[2]{
    \fill[fill=customgray] (#1 - 0.1, -0.5) rectangle (#1 + 0.1, \height - 0.5);
    \node[anchor=base] at (#1, \height + 0.1) {#2};
}
\DeclareDocumentCommand{\qq}{m}{“#1”}
\renewcommand{\todo}[2][]{\tikzexternaldisable\@todo[#1]{#2}\tikzexternalenable}
\begin{document}

    \maketitle

    \begin{abstract}
        Modelling passenger assignments in public transport networks is a fundamental task for city planners, especially when deliberating
        network infrastructure decisions.
        A key aspect of a realistic model is to integrate passengers' selfish routing behaviour under limited vehicle capacities.
        We formulate a side-constrained user equilibrium model in a schedule-based transit network, where passengers are modelled via a continuum of non-atomic agents that travel from their origin to their destination.
        An agent's route may comprise several rides along given lines, each using vehicles with hard loading capacities.
        We give a characterization of (side-constrained) user equilibria via a quasi-variational inequality and prove their existence for fixed departure times by generalizing a well-known result of Bernstein and Smith~(Transp.\,Sci., 1994).
        We further derive a polynomial time algorithm for single-commodity instances with fixed departure times.
        For the multi-commodity case with departure time choice, we show that deciding whether an equilibrium exists is NP-hard, and we devise an exponential-time algorithm that computes an equilibrium if it exists, and signals non-existence otherwise.
        Using our quasi-variational characterization, we formulate a heuristic for computing multi-commodity user equilibria in practice, which is tested on multiple real-world instances.
        In terms of social cost, the computed user-equilibria are quite efficient compared to a system optimum.
    \end{abstract}

    \clearpage
    \tableofcontents
    \clearpage

\section{Introduction}
In the domain of public transport,
models describing the assignment of passengers over a transit network are crucial for infrastructure planners to
understand congestion phenomena and assess possible investments into the infrastructure.
With new advances in technology, the information available to passengers on effective schedules (adjusted by real-time delays), capacities\oxc and utilization of vehicles is ever-increasing.
As a result, the routing behaviour of the passengers is affected by their (close to) full information on the current and future network state.

The existing approaches for modelling transit networks can roughly
be categorized into \emph{frequency-based}
and \emph{schedule-based} models, see \cite{GentileFHCN16,Fu2012} for a survey.
The former model class operates with line frequencies
and implicitly defines resulting travel times
and capacities of lines and vehicles, cf.~\citep{Spiess1989,CeaF93,WuFM94,Bouzaiene-AyariGN98,CominettiC01,CepedaCF06,Larrain21}.
With variations in the demand profile during peak hours, the frequency-based approach only leads to approximate vehicle loads,
with the error increasing as variability grows.
In contrast, schedule-based approaches are more fine-grained and capable of explicitly modelling irregular timetablesof lines.
They are usually based on a \emph{time-expanded transit network} derived from the physical transit network and augmented by (artificial) edges
such as waiting, boarding, alighting, dwelling\oxc and
driving edges to connect different stations.
\Cref{fig:intro} illustrates this construct, also known as diachronic graph~\citep{NuzzoloR96} or space-time network~\citep{CarraresiMP96}.

    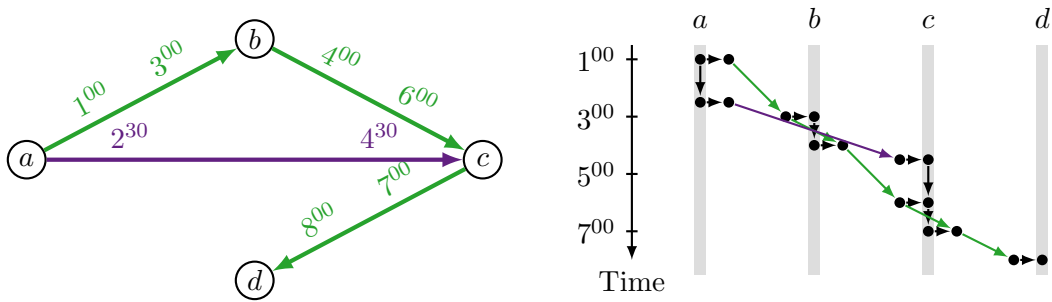
\begin{figure}[b]
        \centering
        \begin{minipage}[c]{0.5\textwidth}%
    \centering
    \begin{tikzpicture}[x=3cm, y=1.6cm, thick]
        \foreach[count=\i] \v in {a,b,c,d}
            \node[draw, circle, inner sep=0pt, minimum size=3ex] (\v) at ({270-90*\i}:1) {$\v$};
        \foreach \i/\j/\d/\a in {a/b/\sbTime{1}{00}/\sbTime{3}{00}, b/c/\sbTime{4}{00}/\sbTime{6}{00}, c/d/\sbTime{7}{00}/\sbTime{8}{00}}
            \draw[ultra thick, ->, color2] (\i) -- node[above, sloped, pos=0.3] {\d} node[above, sloped, pos=0.7] {\a} (\j);
        \draw[ultra thick, ->, color1] (a) -- node[above, pos=0.2] {\sbTime{2}{30}} node[above, pos=0.8] {\sbTime{4}{30}} (c);
    \end{tikzpicture}
\end{minipage}%
\begin{minipage}[c]{0.5\textwidth}%
    \centering
    \begin{tikzpicture}[y=0.38cm, x=0.75cm, thick]
        \newcommand{\height}{8}
        \newcommand{\depxoffset}{0.5}

        \def\yAxisCoord{-1.2}
        \draw[->] (\yAxisCoord,\height - 0.5) -- (\yAxisCoord, 0) node[below] {Time};
        \foreach \y in {1,3,5,7}
            \draw (\yAxisCoord + 0.1,\height - \y) -- (\yAxisCoord -0.1,\height - \y) node[left] {\sbTime{\y}{00}};

        \lifelineVerbose{0}{$a$}
        \lifelineVerbose{2}{$b$}
        \lifelineVerbose{4}{$c$}
        \lifelineVerbose{6}{$d$}
                \waitnode{W_0_1}{0}{1}
        \waitnode{W_4_4d5}{4}{4.5}
        \waitnode{W_2_4}{2}{4}
        \waitnode{W_6_8}{6}{8}
        \waitnode{W_4_6}{4}{6}
        \waitnode{W_2_3}{2}{3}
        \waitnode{W_0_2d5}{0}{2.5}
        \waitnode{W_4_7}{4}{7}
        \draw[->] (W_0_1) -- (W_0_2d5);
        \draw[->] (W_4_4d5) -- (W_4_6);
        \draw[->] (W_4_6) -- (W_4_7);
        \draw[->] (W_2_3) -- (W_2_4);
        \departurenode{D_0_1}{0}{1}{1}
        \arrivalnode{A_2_3}{2}{3}{-1}
        \draw[->] (W_0_1) -- (D_0_1);
        \draw[->, draw=color2] (D_0_1) -- (A_2_3);
        \draw[->] (A_2_3) -- (W_2_3);
        \departurenode{D_2_4}{2}{4}{1}
        \draw[->, draw=color2] (A_2_3) -- (D_2_4);
        \arrivalnode{A_4_6}{4}{6}{-1}
        \draw[->] (W_2_4) -- (D_2_4);
        \draw[->, draw=color2] (D_2_4) -- (A_4_6);
        \draw[->] (A_4_6) -- (W_4_6);
        \departurenode{D_4_7}{4}{7}{1}
        \draw[->, draw=color2] (A_4_6) -- (D_4_7);
        \arrivalnode{A_6_8}{6}{8}{-1}
        \draw[->] (W_4_7) -- (D_4_7);
        \draw[->, draw=color2] (D_4_7) -- (A_6_8);
        \draw[->] (A_6_8) -- (W_6_8);
        \departurenode{D_0_2d5}{0}{2.5}{1}
        \arrivalnode{A_4_4d5}{4}{4.5}{-1}
        \draw[->] (W_0_2d5) -- (D_0_2d5);
        \draw[->, draw=color1] (D_0_2d5) -- (A_4_4d5);
        \draw[->] (A_4_4d5) -- (W_4_4d5);
    \end{tikzpicture}
\end{minipage}%
        \caption{Two scheduled vehicle trips in the physical network (left) and their representation in the time-expanded transit network (right).}\label{fig:intro}
    \end{figure}

An assignment of passengers to paths in this network encompasses their entire travel strategies, including line changes, waiting times, etc.
It corresponds to a path-based multi-commodity network flow satisfying all demand and supply.
A key challenge in the analysis of such a schedule-based model is the integration of \emph{strategic behaviour} of passengers,
opting for shortest routes, and the \emph{limited vehicle capacity},
which restricts the number of passengers able to use a vehicle at any point in time.
If a vehicle is already at capacity, further passengers might not be able to enter this vehicle at the next boarding station, which can make their (shortest) route infeasible.
On the other hand, the passengers already in the vehicle are not affected by the passengers wishing to board.

A key issue of such a capacitated model is to choose the right equilibrium concept.
Consider for instance the simple example in \Cref{fig:intro}, and suppose that the vehicles operating the \colourone and the \colourtwo line have a capacity of $1$ unit each.
A demand volume of $2$ units start their trip at node~$a$ at time \sbTime{1}{00}, and all particles want to travel as fast as possible to the destination node $c$.
The \colourone line arrives at $4^{30}$, while the \colourtwo line arrives at \sbTime{6}{00}.
Then there exists no capacity-feasible \emph{Wardrop equilibrium}~\citep{Wardrop52,CorreaS2011}, i.e., a flow only using quickest paths.

Most works in the literature deal with this non-existence by either assuming soft vehicle capacities (cf.~\citealt{Crisalli99,NuzzoloRC2001,NguyenPM01}) or by considering more general travel strategies and a probabilistic loading mechanism (cf.~\citealt{MarcotteN98,Kurauchi03,Marcotte04,HamdouchMN04}).
An alternative approach that inherently supports capacities are so-called \emph{side-constrained user equilibria}, for which different defintions have been proposed \citep{Daganzo77,Hearn1980,CorSS04}. In our study, we consider a definition based on \emph{admissible deviations} \citep{Smith84} for schedule-based time-expanded transit networks. Whether a deviation is admissible depends only on the available capacity of the vehicle when the passenger boards it, but not on whether capacity is exceeded on a later edge of the vehicle trip.
Hence, a path can be an available alternative for some user even if arbitrarily small deviations to that path make the resulting flow infeasible (for some other users).

As in \cite{NguyenPM01}, the priority of passengers in the vehicle can be modelled by expressing the capacity limitations using discontinuous costs on the boarding edges in the time-expanded transit network.
The resulting cost map is not separable, and it turns out that it does not satisfy the regularity conditions imposed by \citet{BernsteinS94} to prove existence of equilibria.

To model realistic passenger behaviour, we extend our assumptions about user preferences in two ways: First, users are generally concerned not only with travel time but also with departing or arriving at a preferred time, and they will choose their departure time accordingly. This is taken into account by using general personal costs that incorporate penalty terms for the deviation of the arrival time from the desired time. Second, users may have a personal limit on their travelling cost. If the travel cost exceeds this limit, they may cancel the trip or opt for an alternative mode of transport, such as a private car. This is known as so-called \emph{elastic demand}, i.e., the demand of the system depends on the offer. This elastic demand model is quite standard in the transportation science literature, see  \cite{Wu03} and references therein.

\subsection{Our Contribution}

We define a user equilibrium for schedule-based time-expanded transit networks using the notion of admissible deviations.
For a given flow, an \emph{admissible $\varepsilon$-deviation} corresponds to shifting an $\varepsilon$-amount of flow from a path $p$ to another path $q$ without exceeding the capacity of any boarding edge along~$q$.
A feasible flow is a \emph{side-constrained user equilibrium} if there are no improving admissible $\varepsilon$-deviations for arbitrarily small $\varepsilon$.
We summarize our contribution as follows.
\begin{enumerate}
    \item
        We characterize side-constrained user equilibria in schedule-based time-expanded tansit networks by a quasi-variational inequality defined over the set of admissible deviations (\Cref{thm:QVI}) and as equilibria in the sense of \citet{BernsteinS94} (subsequently referred to as \emph{BS-equilibria}) in an extended network with discontinuous user cost functions (\cref{thm:BS-equilibrium iff SC-equilibrium}).
    \item
        We study the central question of the existence of side-constrained equilibria. While existence is not guaranteed if departure times can be chosen freely (\cref{example:non-existence-dtc}), we can guarantee existence for the important special case of fixed departure times.
        For this, we generalize a result of \citet{BernsteinS94} who showed that BS-equilibria exist for \emph{regular} cost maps.
        While our cost map does not fall into the category of \emph{regular} cost maps, we introduce a more general condition for cost maps, which we term \emph{weakly regular}.
        We prove that BS-equilibria do exist for weakly regular cost maps (\cref{thm:main}) and that the cost maps in schedule-based time-expanded transit networks are weakly regular for fixed departure times (\cref{thm:existence-metro}). The general existence result for weakly regular cost maps might be of interest also for other traffic models.
    \item 
        We then turn to the computation of user equilibria.
        For single-commodity time-expanded networks with fixed departure times, we present an algorithm that computes a BS-equilibrium in quadratic time relative to the number of edges of the input graph (\Cref{thm:single_comp}).
        For multi-commodity networks, we show that it is NP-hard to decide whether user equilibria (with departure time choice) exist; NP-hardness also applies to related decision problems even when restricting to instances with fixed departure times.
        Lastly, we give an exact finite-time algorithm for the multi-commodity scenario.
        As this algorithm is too slow for practical computations, we further develop a heuristic based on our quasi-variational inequality formulation.
        It starts with an arbitrary feasible flow and updates this flow along elementary admissible deviations in the sense of \Cref{thm:QVI}.
    \item
        Finally, we test our heuristic on realistic instances drawn from the TimPassLib \citep{SchieweGL23} database.
        While even approximate equilibria are not guaranteed to exist in the case of departure time choice, we find that the heuristic computes flows that are close to user equilibria in practice:
        It computes flows with a 99\ordinalth{} percentile equilibrium-approximation factor of up to \num{1.24} in 5 of 7 instances for the case of fixed departure times, and a factor of up to \num{2.48} in 5 of 7 instances for the case of departure time choice.
        Compared to a system optimum, which neglects equilibrium constraints and minimizes total travel cost,
        the total travel cost of the flows computed by the heuristic is at most 8\% higher in 6 of 7 instances and 20\% higher in the remaining instance.
\end{enumerate}

\paragraph{Comparison to the Conference Version.}

This paper is an extension of the conference paper~\citep{Harks2024}.
In comparison, this paper not only contains the fully-worked out proofs of all statements and a more detailed discussion of the related literature but also introduces several new results:
Firstly, we extend the model to incorporate departure time choice of users and generalize the characterization of user equilibria.
Secondly, to prove the existence of user equilibria in schedule-based time-expanded transit networks for fixed departure times, we present several insights on the structure of these user equilibria.
Thirdly, we show that the price of stability is unbounded for the considered model by presenting a small and concrete problem instance.
Based on this problem instance, we show that deciding whether a user equilibrium exists is NP-hard.
This hardness also applies to related decision problems even when restricting to instances with fixed departure times.
Furthermore, we present the algorithm for computing user equilibria for single commodities with fixed departure times and its correctness proof in detail; we also give an example for a multi-commodity instance for which the main assumption of the single-commodity algorithm fails.
Lastly, we provide a thorough discussion of the heuristic for multi-commodity user equilibria, present techniques to improve its performance, and conduct an extended computational study.

\subsection{Related Work}\label{related-work}

A large body of research deals with schedule-based transit assignment, see for example the two proceedings volumes \citep{WilsonN04,NuzzoloW09}.
We limit ourselves in the following literature review to schedule-based models that incorporate congestion; for frequency-based and uncongested models, see \citep{Fu2012,GentileFHCN16,GentileNSTC16}.
Most work uses the schedule-based time-expanded transit network as a modelling basis, which can be traced back to \citet{NuzzoloR96} and \citet{CarraresiMP96}.

\Citet{CarraresiMP96} consider a model with hard capacity constraints. They are interested in finding a transit assignment where the cost of every passenger is only a factor of $1+\varepsilon$ worse than the optimal cost in an uncongested network. Such a routing is only possible when the delays due to congestion are not too large. %
In heavily congested networks, %
passengers are satisfied with a route that is best possible under the given congestion conditions.
This is approximated in several papers \citep{Crisalli99,NguyenPM01,NuzzoloRC2001} by incorporating the vehicle capacities as continuous penalties representing the discomfort experienced by using an overcrowded edge.

\Citet{MarcotteN98} deal with hard capacities by defining the \emph{strategy} of an agent as preference orderings of outgoing edges at each node and by assuming a random loading mechanism for congested edges, where the probability of being able to enter an edge is proportional to the capacity and decreases with the number of agents desiring to traverse it as well.
Every passenger wants to minimize the \emph{expected} travel cost resulting from their strategy.
\Citet{Marcotte04} investigate the model further, including some computational experiments.
\Citet{Zimmermann2021} integrate this concept with Markovian traffic equilibria introduced by \Citet{Baillon2006} where perceived travel costs are subject to stochastic variations.
This leads to a loading mechanism based on choice probabilities between pairs consisting of a node and an availability vector for its outgoing edges; the authors suggest a heuristic for computing equilibria in this setting.

Instead of a stochastic loading mechanism, \citet{NguyenPM01} introduce a model in which the incoming edges of every departure node are ordered, and the outgoing driving edge is filled with passengers in the order of the edge through which they arrived.
This allows to model \emph{FIFO queues} of passengers aiming to board a vehicle.
Their model is closest to the one considered in this paper.
However, they do not compute equilibria for this model, but switch to an approximate model with continuous penalty terms for the computation. A similar assumption is made by \citet{Akamatsu2023}, who incorporate this aspect in the Markovian setting by using non-separable, differentiable cost functions that tend to infinity when approaching the capacity. For this model, the authors analyze existence, uniqueness\oxc and the global stability of the day-to-day dynamics.

\Citet{HamdouchMN04} combine this priority-based approach with the random loading described before, by assuming that the passengers already in the vehicle can always stay there, while the others take part in the random loading mechanism, introduced by \citeauthor{MarcotteN98}.
In this way, they model passengers mingling at stations.
Again, an agent's cost function is defined as the expected travel time for their chosen strategy, for which an according variational inequality has a solution due to the Ky Fan Inequality.
This model has been extended in multiple ways:
In \citep{Hamdouch2008}, the expected strategy costs are generalized by allowing early departure and late/early arrival penalties as well as a crowding discomfort.
\Citet{Hamdouch2011} differentiate between seated and standing passengers, affecting the discomfort (and thus, the expected strategy cost), and \citet{RochauNB12} account for risk-aversion.
Finally, \citet{Hamdouch14} incorporate uncertainties in link travel times modeling variations due to weather effects, incidents, etc., and \citet{Kumar23} take into account that transfers may be missed due to delays.
\Citet{Nuzzolo2012} study the computation of strategy profiles through learning in an iterative heuristic.
Similarly, \Citet{Patzner2024} propose an agent-based assignment method that considers vehicle capacities explicitly and that incorporates a learning mechanism.

Another approach without time-expanded graphs is pursued by \citet{PapolaFGM09} who consider a network combining scheduled public-transport edges and continuous pedestrian edges.
They define dynamic flows that traverse the public-transport edges in discrete chunks.
To find approximate equilibria, they employ the method of successive averages (MSA).
\Citet{GrafH23} study side-constrained equilibria for dynamic flows.
They consider a dynamic variant of BS-equilibria but give no existence result for them.

As mentioned in the introduction, different concepts have been proposed to model equilibria under side constraints. An early concept of side-constrained equilibria~\citep{Daganzo77,Hearn1980} simply adds side constraints to the Beckman-McGuire-Winsten (BMW) formulation~\citep{BMW56}, whose solutions correspond to Wardrop equilibria in the absence of side constraints. %
\Citet{Larsson95,Larsson99} show that under some natural conditions, %
solutions of this convex program have the property that for any used path there is no alternative path with available residual capacity and lower cost in the original cost function.
This \emph{extended Wardrop principle}~\citep{Marcotte04} is used as the definition of side-constrained equilibria by \citet{CorSS04}, who show that it may lead to strictly more equilibria than the side-constrained BMW formulation. In particular, it allows for equilibria where some particles could actually change from their used path~$p$ to a shorter path~$q$ that shares some saturated edges with $p$. This can be avoided by amending the definition of admissible deviations to allow deviations to alternative paths whenever the resulting flow still complies with the side constraints. This idea, originally formulated by \citet{Smith84}, corresponds to the equilibrium concept of \citet{BernsteinS94}, when the constraints are modelled using discontinuous edge cost functions, and leads to a set of equilibria that lies between the two equilibrium sets described above.

\section{Side-Constrained Equilibria for Schedule-Based Transit Networks}\label{sec:model}

We  first describe a schedule-based time-expanded network (cf.~\citealt{NuzzoloR96,CarraresiMP96}) and then formally define the side-constrained user equilibrium concept.

\subsection{The Time-Expanded Transit Network}

Consider a set of geographical stations $\stations$ (e.g., metro stations or bus stops) and a set of vehicle trips~$\vehicles$ (e.g., trips of metro trains or buses), specified by their sequence of served stations and adhering to a fixed, reliable timetable.
This timetable specifies the arrival and departure times at all stations of the trip, where the arrival time at a station is always strictly later than the departure time at the previous station.
Each vehicle trip $z\in \vehicles$ also has an associated capacity $\capacity_z$ which represents the maximum number of users the corresponding vehicle may hold at any time.
Throughout this work, we use the term \emph{vehicle} synonymously with \emph{vehicle trip}.

To represent the passengers' routes through the network, we construct a time-expanded directed acyclic graph $G=(V,E)$ with a time~$\nodetime(v)\in\R$ assigned to each node~$v\in V$.

There are three types of nodes:
An \emph{on-platform node} represents that a user is located on a station's platform where the user may board a vehicle or wait on the platform; we generate an on-platform node for each station~$s\in\stations$ and each time at which at least one vehicle departs or arrives in $s$.
A \emph{departure node} represents that the user is on a vehicle which is about to depart from a station; thus, we create a departure node for each vehicle $z\in\vehicles$ and each time~$\theta$ at which $z$ departs from a station~$s$.
Similarly, an \emph{arrival node} represents that the user is on a vehicle which has just arrived at a station; an arrival node is created for each vehicle $z\in\vehicles$ and time~$\theta$ at which $z$ arrives at a station~$s$.
 
\begin{figure}[t]
            \centering
        \newlength{\lenseXdim}
\newlength{\lenseXOffset}
    \setlength{\lenseXdim}{0.7cm}
    \setlength{\lenseXOffset}{9cm}

\begin{tikzpicture}[y=0.38cm, x=\lenseXdim, thick]
    \newcommand{\height}{8}
    \newcommand{\depxoffset}{0.5}

    \definecolor{lensbackground}{RGB}{255, 255, 255}
    \definecolor{focusbackground}{RGB}{255, 255, 255}
    \fill[rounded corners=8pt, fill=focusbackground] (3,\height - 5) rectangle (5, \height - 7.5);

    \def\yAxisCoord{-1.2}
    \draw[->] (\yAxisCoord,\height - 0.5) -- (\yAxisCoord, 0) node[below] {Time};
    \foreach \y in {1,3,5,7}
        \draw (\yAxisCoord + 0.1,\height - \y) -- (\yAxisCoord -0.1,\height - \y) node[left] {\sbTime{\y}{00}};

    \lifelineVerbose{0}{$a$}
    \lifelineVerbose{2}{$b$}
    \lifelineVerbose{4}{$c$}
    \lifelineVerbose{6}{$d$}
        \waitnode{W_0_1}{0}{1}
    \waitnode{W_4_4d5}{4}{4.5}
    \waitnode{W_2_4}{2}{4}
    \waitnode{W_6_8}{6}{8}
    \waitnode{W_4_6}{4}{6}
    \waitnode{W_2_3}{2}{3}
    \waitnode{W_0_2d5}{0}{2.5}
    \waitnode{W_4_7}{4}{7}
    \draw[->] (W_0_1) -- (W_0_2d5);
    \draw[->] (W_4_4d5) -- (W_4_6);
    \draw[->] (W_4_6) -- (W_4_7);
    \draw[->] (W_2_3) -- (W_2_4);
    \departurenode{D_0_1}{0}{1}{1}
    \arrivalnode{A_2_3}{2}{3}{-1}
    \draw[->] (W_0_1) -- (D_0_1);
    \draw[->, draw=color2] (D_0_1) -- (A_2_3);
    \draw[->] (A_2_3) -- (W_2_3);
    \departurenode{D_2_4}{2}{4}{1}
    \draw[->, draw=color2] (A_2_3) -- (D_2_4);
    \arrivalnode{A_4_6}{4}{6}{-1}
    \draw[->] (W_2_4) -- (D_2_4);
    \draw[->, draw=color2] (D_2_4) -- (A_4_6);
    \draw[->] (A_4_6) -- (W_4_6);
    \departurenode{D_4_7}{4}{7}{1}
    \draw[->, draw=color2] (A_4_6) -- (D_4_7);
    \arrivalnode{A_6_8}{6}{8}{-1}
    \draw[->] (W_4_7) -- (D_4_7);
    \draw[->, draw=color2] (D_4_7) -- (A_6_8);
    \draw[->] (A_6_8) -- (W_6_8);
    \departurenode{D_0_2d5}{0}{2.5}{1}
    \arrivalnode{A_4_4d5}{4}{4.5}{-1}
    \draw[->] (W_0_2d5) -- (D_0_2d5);
    \draw[->, draw=color1] (D_0_2d5) -- (A_4_4d5);
    \draw[->] (A_4_4d5) -- (W_4_4d5);

    \draw[rounded corners=8pt] (3,\height - 5) rectangle (5, \height - 7.5);

    \node[inner sep=0, outer sep=0] (FOCUS_NORTH_WEST) at ($(3, \height - 5) + (4pt, -1.1pt)$) {};
    \node[inner sep=0, outer sep=0] (FOCUS_SOUTH_WEST) at ($(3, \height - 7.5) + (6pt, 0.175pt)$) {};

    \begin{scope}[xshift=\lenseXOffset, yshift=0.38cm]
        \footnotesize
        \contourlength{0.11em}
        \newcommand{\nodelabelrotation}{35}
        \newcommand{\nodelabeldistance}{0}

        \begin{scope}
        \clip[rounded corners=16pt] (-6,\height - 0) rectangle (6, \height - 10);
        \fill[rounded corners=16pt, fill=lensbackground] (-6,\height - 0) rectangle (6, \height - 10);

        \fill[fill=customgray] (0 - 0.5, -2) rectangle (0 + 0.5, \height);

        \node[inner sep=0, outer sep=0] (LENS_NORTH_WEST) at ($(-6, \height - 0) + (8pt, -2.1pt)$) {};
        \node[inner sep=0, outer sep=0] (LENS_SOUTH_WEST) at ($(-6, \height - 10) + (15pt, 0pt)$) {};

        \begin{scope}[yshift=-0.19cm]
            \node[fill,circle,inner sep=0.05cm, outer sep=0.05cm] (LENS_W_4_6) at (0, \height - 3) {};
            \node[fill,circle,inner sep=0.05cm, outer sep=0.05cm] (LENS_W_4_7) at (0, \height - 7) {};

            \node[fill,circle,inner sep=0.05cm, outer sep=0.05cm] (LENS_ARR) at (-4, \height - 3) {};
            \node[rotate=\nodelabelrotation, above right=\nodelabeldistance of LENS_ARR, anchor=west] {\contour{lensbackground}{arrival}};

            \node[fill,circle,inner sep=0.05cm, outer sep=0.05cm] (LENS_DEP) at (4, \height - 7) {};
            \node[rotate=\nodelabelrotation, above right=\nodelabeldistance of LENS_DEP, anchor=west] {\contour{lensbackground}{departure}};

            \draw[->, draw=color2] (LENS_ARR) to[out=315, in=150] node[pos=0.27] {\contour{lensbackground}{dwelling}} (LENS_DEP);
            \draw[->, draw=color2] ($(LENS_ARR) + (-2, 3)$) -- (LENS_ARR) node[midway] {\contour{lensbackground}{driving}};
            \draw[draw=color2] (LENS_DEP) -- ($(LENS_DEP) + (2, -2)$) node[midway] {\contour{lensbackground}{driving}};

            \draw[->] (LENS_ARR) -- (LENS_W_4_6) node[midway] {\contour{lensbackground}{alighting}};
            \draw[->] ($(LENS_W_4_6) + (0, 3.5)$) -- (LENS_W_4_6) node[pos=0.33] {\contour{lensbackground}{waiting}};
            \draw[->] (LENS_W_4_6) -- (LENS_W_4_7) node[pos=0.33] {\contour{lensbackground}{waiting}};
            \draw[->] (LENS_W_4_7) -- (LENS_DEP) node[midway] {\contour{lensbackground}{boarding}};
            \node[rotate=\nodelabelrotation, above right=\nodelabeldistance of LENS_W_4_7, anchor=west] {\contour{lensbackground}{on-platform}};
            \node[rotate=\nodelabelrotation, above right=\nodelabeldistance of LENS_W_4_6, anchor=west] {\contour{lensbackground}{on-platform}};
        \end{scope}
        \end{scope}
        \draw[rounded corners=16pt] (-6,\height - 0) rectangle (6, \height - 10);
    \end{scope}

    \draw [-] (FOCUS_NORTH_WEST) -- (LENS_NORTH_WEST);
    \draw [-] (FOCUS_SOUTH_WEST) -- (LENS_SOUTH_WEST);

\end{tikzpicture}
        \caption{Visualization of the time-expanded transit network from \Cref{fig:intro}. Each station is represented by a vertical timeline. The driving and dwelling edges are coloured according to their vehicle trip.}\label{fig:intro-lens}
    \end{figure}

There are five categories of edges connecting these nodes:
For every station $s$, we use \emph{waiting edges} to connect the on-platform nodes of $s$ in a chain $(v^1, \dots, v^k)$ with increasing times, i.e., $\nodetime(v^1) < \dots < \nodetime(v^k)$.
Users may board a vehicle using a \emph{boarding edge} which connects an on-platform node with a departure node of a vehicle~$z$ of common time $\theta$ and station $s$.
Once boarded, the user stays on the vehicle until it arrives at the next station, which is represented by a \emph{driving edge} connecting the departure node with the next station's arrival node of the same vehicle~$z$.
After arriving at a station, the user may alight from the vehicle using an \emph{alighting edge} which connects the arrival node of the vehicle~$z$with the on-platform node of common time $\theta$ and station~$s$.
Unless the vehicle has arrived at its last stop, the user may also choose to stay on the vehicle, which we represent by connecting the arrival node of vehicle $z$with the corresponding departure node at the same station~$s$ using a \emph{dwelling edge}.

For ease of notation, let $E_B$ and $E_D$ denote the set of all boarding and driving edges, respectively.
We denote the traversal time of an edge $e=vw$ by $\tau_e \coloneqq \nodetime(w) - \nodetime(v)$, the traversal time of a $v$-$w$-path $p=(e_1, \dots, e_k)$ by $\tau_p\coloneqq \sum_{e\in p} \tau_e = \nodetime(w) - \nodetime(v)$.
For a driving edge $e \in E_D$ belonging to a vehicle~$z \in Z$, we write $\capacity_e \coloneqq \capacity_z$.
Waiting and driving edges are always time-consuming, dwelling edges may be time-consuming, and boarding and alighting edges are instantaneous.
For a boarding edge $e\in E_B$, we denote the succeeding driving edge by $e^+$.

\Cref{fig:intro-lens} shows a possible generated graph for two vehicles, a \colourtwo one and a \colourone one, and four stations $a$, $b$, $c$\oxc and $d$.
The nodes on the grey rectangles represent the on-platform nodes, the other nodes are the departure and arrival nodes.
As the driving edges together with their associated trips already describe the entire graph, we use a more compact visual representation, encoding trips using a colour scheme.
\Cref{fig:compact} illustrates this compact representation for the graph of \Cref{fig:intro-lens}.

Let us now formalize how the agents of the network are modelled.
We first partition the non-atomic agents into a finite set of groups $J$:
Each group $i\in J$ is assigned an origin station $s_i\in S$ and a destination station $t_i\in S$, a feasible departure time interval $\departureTimes_i \subseteq \mathbb R$ and a target arrival time $\targetTime_i\in \mathbb R$, as well as a total demand $\demand_i$.
Let $\pathsWithoutOutside_i$ denote the set of paths in the time-expanded graph that start during the interval $\departureTimes_i$ and lead from an on-platform node at station $s_i$ to an on-platform node at station $t_i$.
The experienced cost of a path $p\in\pathsWithoutOutside_i$ for agents of group $i$ is then given by
\[
    \pi_{i,p} \coloneqq \beta_i \cdot \tau_p + \gamma^+_i \cdot\max\{0, \patharrival_p -T_i\} + \gamma^-_i\cdot\max\{0, T_i - \patharrival_p\},
\]
where $\patharrival_p$ denotes the arrival time of the path $p$, and $\beta_i$, $\gamma^+_i$\oxc and $\gamma^-_i$ denote the group-specific non-negative penalty factors of the travel time, of late arrival\oxc and of early arrival, respectively.

\newcommand*{\maxwillingness}{\pi_{\mathrm{max}}}
\newcommand*{\elasticDemand}{Q^{\mathrm{el}}}

Some agents may decide to forego using the transit service if their experienced cost would exceed their willingness to travel.
This elastic demand is modelled by a non-increasing function $\elasticDemand_i\colon\R_{\geq0}\to\R_{\geq0}$ that given some cost $\pi$ returns the volume of particles of group $i$ that are willing to travel if the experienced cost does not exceed $\pi$.
We assume $\elasticDemand_i(\maxwillingness)=0$ for some $\maxwillingness\in\R$.
We now subdivide each group $i\in J$ into a finite number of \emph{commodities} $I$ of common willingness to travel:
Let $\{ \pi_{i,1},\dots,\pi_{i,k_i} \} = \{ \pi_{i,p} \mid p\in \pathsWithoutOutside_i \}$ be the set of travel times of all paths $p\in\pathsWithoutOutside_i$ ordered by $\pi_{i,1} < \cdots < \pi_{i,k_i}$.
For each $j\in\{1,\dots,k_i+1\}$, we introduce a commodity $i_j$ consisting of all particles of group $i$ whose willingness to travel is contained in the interval $[ \pi_{i,j-1}, \pi_{i,j} )$ with $\pi_{i,0}\coloneqq 0$ and $\pi_{i,k_i+1}\coloneqq \maxwillingness$. In other words, commodity~$i_j$ contains all particles from group~$i$ that are willing to use the path with cost~$\pi_{i,j-1}$ but not the one with cost~$\pi_{i,j}$.
Thus, commodity~$i_j${} has a demand volume of \[
    \demand_{i_j} \coloneqq \elasticDemand_i(\pi_{i,j-1}) - \elasticDemand_i(\pi_{i,j}),
\]
and we assign it an outside option $\outopt_{i_j}$ with some constant cost $\pi_{\outopt_{i_j}}$ chosen from $(\pi_{i,j-1}, \pi_{i,j})$,
such that the outside option is perceived strictly worse than any path of cost at most $\pi_{i,j-1}$, but strictly better than any other path.
Finally, a particle of commodity $i_j$ can choose a strategy from the set $\pathsWithOutside_{i_j} \coloneqq \pathsWithoutOutside_i \cup \{ \outopt_{i_j} \}$.
\Cref{fig:elastic-demand} demonstrates this classification of the particles into commodities.

In the remainder of the work, we will no longer refer to the groups $J$, but only to the commodities~$I$, where each commodity $i$ is assigned the parameters $s_i$, $t_i$, $\departureTimes_i$, $T_i$, $\beta_i$, $\gamma^+_i$, $\gamma^-_i$, $Q_i$ and $\pi_{\outopt_i}$.
We denote the set of all commodity-path pairs by $\pathsWithOutside\coloneqq \{ (i,p) \mid p\in\pathsWithOutside_i \}$.

\newcommand{\compactrepr}{
 \begin{tikzpicture}[baseline, y=0.5cm, x=0.75cm, thick]
  \newcommand{\height}{8}
  \newcommand{\depxoffset}{0.5}
  \def\yAxisCoord{-1.2}
  \draw[->] (\yAxisCoord,\height - 0.5) -- (\yAxisCoord, -0.5) node[below] {Time};
  \foreach \y in {1,3,5,7}
  \draw (\yAxisCoord + 0.1,\height - \y) -- (\yAxisCoord -0.1,\height - \y) node[left] {\sbTime{\y}{00}};

  \lifelineCompact{0}{$a$}
  \lifelineCompact{1.5}{$b$}
  \lifelineCompact{3}{$c$}
  \lifelineCompact{4.5}{$d$}
    \draw[->, draw=color2] (0, \height - 1) -- (1.5, \height - 3);
  \draw[->, draw=color2] (1.5, \height - 4) -- (3, \height - 6);
  \draw[->, draw=color2] (3, \height - 7) -- (4.5, \height - 8);
  \draw[->, draw=color1] (0, \height - 2.5) -- (3, \height - 4.5);
 \end{tikzpicture}
}

\newcommand{\drawdemandbrace}[3]{%
    \draw [decorate,decoration={brace}, thick] ($(axis cs:0,#1)+(0.25mm,-0.25mm)$) -- ($(axis cs:0,#2)+(0.25mm,0.25mm)$) node [midway, right] {$Q_{i_{#3}}$};%
}
\newcommand{\elasticdemand}{
 \begin{tikzpicture}
  \begin{axis}[
          axis lines = left,
          xlabel = Travel Time,
          ylabel = {Flow volume},
          ymax=1.1,
          xmax=1.1,
          xtick={0,0.3,0.5,0.7,0.9},
          ytick={0},
          xticklabels={$0$,\(\pi_{i,1}\),\(\pi_{i,2}\),\(\pi_{i,3}\),\(\pi_{i,4}\)},
          clip mode=individual,
          scale=0.8
      ]
      \addplot [
          domain=0:1,
          samples=100,
          color=red
      ]
      {-x^2 + 1};

      \foreach \i/\x in {1/0.3,2/0.5,3/0.7,4/0.9} {
              \addplot[mark=none, black, dotted] coordinates {(\x,0) (\x,{1-\x*\x})};
              \addplot[mark=none, black, dotted] coordinates {(\x,{1-\x*\x}) (0,{1-\x*\x})};
          }
      \addlegendentry{\(\elasticDemand_i(\pi)\)}

      \drawdemandbrace{1}{0.91}{1}
      \drawdemandbrace{0.91}{0.75}{2}
      \drawdemandbrace{0.75}{0.51}{3}
      \drawdemandbrace{0.51}{0.19}{4}
      \drawdemandbrace{0.19}{0}{5}
  \end{axis}
 \end{tikzpicture}
}

\begin{figure}[ht]
    \begin{minipage}[c]{0.48\textwidth}
                    \centering
            \compactrepr
            \vspace*{6pt}
            \caption{Compact representation of the time-expanded transit network from \cref{fig:intro-lens}.}
            \label{fig:compact}
            \end{minipage}
    \hfill
    \begin{minipage}[c]{0.48\textwidth}
        \newcommand{\figureCaption}{$\elasticDemand_i(\pi)$ is the volume of particles of group $i$ that are willing to travel at a cost of at most~$\pi$.}
                    \centering
            \elasticdemand
            \caption{\figureCaption}
            \label{fig:elastic-demand}
            \end{minipage}
\end{figure}

\subsection{Side-Constrained User Equilibrium}

A \emph{(path-based) flow $f$} is a vector $(f_{i,p})_{(i, p)\in\pathsWithOutside}$ with $f_{i,p}\in\R_{\geq0}$.
We call the flow~$f$
\begin{itemize}
    \item \emph{demand-feasible},  if $\sum_{p\in\pathsWithOutside_i} f_{i,p} = \demand_i$ holds for all $i\in I$,
    \item \emph{capacity-feasible}, if $f_e\coloneqq \sum_{i,p \in \pathsWithOutside: e\in p} f_{i,p} \leq \nu_e$ holds for all driving edges $e\in E_D$,
    \item \emph{feasible}, if $f$ is both demand- and capacity-feasible.
\end{itemize}
\newcommand*{\capFeasFlows}{\mathcal F^\nu}
\newcommand*{\feasFlows}{\mathcal F^\nu_\demand}
Let $\demFeasFlows$, $\capFeasFlows$, $\feasFlows$ denote the sets of all demand-feasible, capacity-feasible\oxc and feasible flows, respectively.
For a given demand-feasible flow $f$ and two paths $p,q\in\pathsWithOutside_i$ with $f_{i,p}\geq \varepsilon$, we define the \emph{$\varepsilon$-deviation} from $p$ to $q$ %
\[
    f_{i, p\to q}(\varepsilon) \coloneqq f+\varepsilon\cdot(1_{i,q} - 1_{i,p})
\]
as the resulting flow when shifting an $\varepsilon$-amount of flow of commodity $i$ from $p$ to $q$.
We say that $f_{i, p\to q}(\varepsilon)$ is an \emph{admissible} deviation if $(f_{i,p\to q}(\varepsilon))_{e^+} \leq \nu_{e^+}$ holds for all boarding edges $e$ of~$q$.
If $f_{i,p\to q}(\varepsilon)$ is an admissible deviation for some positive $\varepsilon$, then we call $q$ an \emph{available alternative} to $p$ for $i$ given flow $f$.
In other words, $q$ is an available alternative if, after switching some small amount of flow from $p$ to $q$, the path~$q$ does not involve boarding overcrowded vehicles.
Equivalently, all boarding edges $e\in q$ fulfil $f_{e^+} < \nu_{e^+}$ if $e^+ \notin p$, and $f_{e^+}\leq \nu_{e^+}$ if $e^+\in p$.
We denote the set of available alternatives to $p$ for $i$ given $f$ by $\avPaths_{i,p}(f)$.

\begin{definition}\label{def:sc-equilibrium}
    A feasible flow $f$ is a \emph{(side-constrained) user equilibrium} if for all $i\in I$ and $p\in\pathsWithOutside_i$ the following implication holds: \[
        f_{i,p} > 0
        \,\implies\,
        \forall q\in \avPaths_{i,p}(f): \pi_{i,p} \leq \pi_{i,q}.
    \]    For the rest of this work, we use the shorthand \emph{user equilibrium}.
\end{definition}

This means, a feasible flow is a %
user equilibrium if and only if a path is only used if all its better alternative routes are unavailable due to the boarding capacity constraints.

\section{Characterization, Existence\oxc and Price of Stability}\label{sec:characterization-existence-pos}

We characterize user equilibria as defined above in two different ways: as solutions to a quasi-variational inequality and as equilibria in an extended graph with discontinuous cost functions.
In general, the question of what constitutes an equilibrium in a graph with discontinuous cost functions has no clear answer: various equilibrium concepts have been introduced in the literature, which differ in terms of which portions of flow can switch to an alternative path.
However, this distinction is pointless in our setting:
The two extremes where any amount of flow can deviate, as considered by \citet{DafS69}, and where only an infinitesimal small portion may change its path, as defined by \citet{BernsteinS94}, are equivalent in (extended) time-expanded transit networks (\cref{thm:BS-equilibrium iff SC-equilibrium}).
In the following, we refer to the characterization of \citeauthor{BernsteinS94} as \emph{BS-equilibria}.
For these, there is a known existence result under a certain regularity condition.
However, since this condition is not applicable in our setting, we generalize this existence result by showing that a weaker regularity condition is sufficient to guarantee existence of equilibria (\cref{thm:main}).
This allows us to prove that user equilibria exist in schedule-based transit networks with fixed departure times (\cref{thm:existence-metro}).
Finally, we analyse the price of stability.

\subsection{Quasi-Variational Inequalities}

\newcommand*{\tanCone}[2]{T_{#1}({#2})}

Traditional types of user equilibria without hard capacity constraints can be equivalently formulated as a solution to a variational inequality of the form
    \begin{equation}\label{eq:VI}\tag{\ensuremath{\VI(c,D)}}
        \text{Find $f^* \in  D$ such that:} \quad\quad
        \scalar{c(f^*)}{f - f^*} \geq 0 \quad \text{ for all } f \in D,
    \end{equation}
where $D$ is a closed, convex set and $c$ is a continuous cost function.

With the introduction of hard capacity constraints together with boarding priorities, an admissible $\varepsilon$-deviation might lead to capacity violations.
Therefore, such deviations may leave the feasible set $\feasFlows$ and are thus not representable in such a variational inequality, leading us to the concept of quasi-variational inequalities.
We define the set-valued function
\begin{align*}
    \admissibleDeviations\colon \feasFlows \rightrightarrows \R_{\geq0}^{\pathsWithOutside}, \quad
    f \mapsto \{ f_{i,p\to q}(\varepsilon) \mid f_{i,p\to q}(\varepsilon) \text{ is an admissible $\varepsilon$-deviation}, \varepsilon > 0 \}
\end{align*}
that returns for any given flow $f$ the set of all possible flows obtained by any \addmEpsDev\ with respect to $f$.
We now consider the following quasi-variational inequality:
    \begin{equation}\label{eq:QVI-SCDE}\tag{\ensuremath{\QVI}}
        \text{Find }f^* \in  \feasFlows  \text{ such that:} \quad\quad
        \scalar{\pi}{f - f^*} \geq 0 \quad \text{ for all } f\in \admissibleDeviations(f^*).
    \end{equation}

Then, we can characterize user equilibria as follows:

\begin{theorem}\label{thm:QVI}
    A feasible flow $f^*$ is a user equilibrium if and only if it is a solution to the quasi-variational inequality \eqref{eq:QVI-SCDE}.
\end{theorem}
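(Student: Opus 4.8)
The plan is to prove both directions of the equivalence by unwinding the definitions of a user equilibrium (Definition~\ref{def:sc-equilibrium}) and of the set-valued map $M$. The key observation is that a flow $f\in M(f^*)$ is, by construction of $M$, exactly a flow of the form $f^*_{i,p\to q}(\varepsilon)$ for some commodity $i$, some pair of paths $p,q\in\pathsWithOutside_i$ with $f^*_{i,p}\geq\varepsilon>0$, and such that the deviation is admissible. For such a flow we compute
\[
    \scalar{(\tau_r)_{i,r}}{f - f^*} = \scalar{(\tau_r)_{i,r}}{\varepsilon\cdot(1_{i,q}-1_{i,p})} = \varepsilon\,(\tau_q - \tau_p),
\]
so the inequality $\scalar{(\tau_r)_{i,r}}{f-f^*}\geq 0$ is equivalent to $\tau_p\leq\tau_q$. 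Hence \eqref{eq:QVI-SCDE} holds for $f^*$ if and only if, for every commodity $i$, every pair $p,q$ admitting a positive admissible $\varepsilon$-deviation from $p$ to $q$, we have $\tau_p\leq\tau_q$.

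For the direction ``user equilibrium $\Rightarrow$ solution of \eqref{eq:QVI-SCDE}'': assume $f^*$ is a user equilibrium and let $f\in M(f^*)$, say $f = f^*_{i,p\to q}(\varepsilon)$ with $f^*_{i,p}\geq\varepsilon>0$ and the deviation admissible. Then $f^*_{i,p}\geq\varepsilon>0$, so $f^*_{i,p}>0$, and since the $\varepsilon$-deviation to $q$ is admissible, $q\in\avPaths_{i,p}(f^*)$ by the definition of available alternative. The equilibrium implication then gives $\tau_p\leq\tau_q$, and by the computation above $\scalar{(\tau_r)_{i,r}}{f-f^*}=\varepsilon(\tau_q-\tau_p)\geq0$. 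Since $f^*$ is feasible by assumption, it solves \eqref{eq:QVI-SCDE}.

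For the converse: assume $f^*\in\feasFlows$ solves \eqref{eq:QVI-SCDE}, and suppose $i\in I$, $p\in\pathsWithOutside_i$ with $f^*_{i,p}>0$, and $q\in\avPaths_{i,p}(f^*)$. By the definition of available alternative there is some $\varepsilon'>0$ with $f^*_{i,p\to q}(\varepsilon')$ an admissible deviation; taking $\varepsilon\coloneqq\min\{\varepsilon',f^*_{i,p}\}>0$, the flow $f^*_{i,p\to q}(\varepsilon)$ is still an admissible $\varepsilon$-deviation (shifting less flow can only help the boarding-capacity constraints on $q$, and $f^*_{i,p}\geq\varepsilon$ is now satisfied), hence lies in $M(f^*)$. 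Applying \eqref{eq:QVI-SCDE} to this flow yields $\varepsilon(\tau_q-\tau_p)\geq0$, i.e.\ $\tau_p\leq\tau_q$. This establishes the equilibrium implication for all such $p,q$, so $f^*$ is a user equilibrium.

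I do not expect a serious obstacle here; the statement is essentially a reformulation, and the proof is a careful bookkeeping exercise. The one point that needs a small argument is the monotonicity claim used in the converse, namely that if $f^*_{i,p\to q}(\varepsilon')$ is admissible then so is $f^*_{i,p\to q}(\varepsilon)$ for every $0<\varepsilon\leq\varepsilon'$: this follows because for each boarding edge $e$ of $q$, the flow value $f^*_{i,p\to q}(\varepsilon)_{e^+}$ is a non-decreasing function of $\varepsilon$ (when $e^+\notin p$) or constant in $\varepsilon$ (when $e^+\in p$), so the bound $\leq\nu_{e^+}$ is preserved — and simultaneously we need $\varepsilon\leq f^*_{i,p}$ for the $\varepsilon$-deviation to be well-defined, which is why we take the minimum. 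Everything else is immediate from the definitions.
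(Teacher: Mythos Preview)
Your proof is correct and follows essentially the same route as the paper's own proof: compute $\scalar{\tau}{f-f^*}=\varepsilon(\tau_q-\tau_p)$ for any $f=f^*_{i,p\to q}(\varepsilon)\in M(f^*)$ and read off both implications. Your extra care in the converse direction (taking $\varepsilon=\min\{\varepsilon',f^*_{i,p}\}$) is harmless but unnecessary, since by the paper's definition an $\varepsilon$-deviation already presupposes $f^*_{i,p}\geq\varepsilon$, so $q\in\avPaths_{i,p}(f^*)$ directly yields a suitable $\varepsilon$.
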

\begin{proof}
    Assume $f^*$ is a user equilibrium, and let $f \coloneqq f^*_{i,p\to q}(\varepsilon)\in \admissibleDeviations(f^*)$ be arbitrary.
    This means $q$ is in $\avPaths_{i,p}(f^*)$ and $\scalar{\pi}{f - f^*} = \varepsilon \cdot (\pi_{i,q} - \pi_{i,p})\geq 0$ holds by the equilibrium condition.
    Thus, $f^*$ solves \eqref{eq:QVI-SCDE}.

    Similarly, if $f^*$ is a solution to \eqref{eq:QVI-SCDE}, we know for all $p\in\pathsWithOutside_i$ and $q\in \avPaths_{i,p}(f)$ that there is some $\varepsilon>0$ such that $f\coloneqq f^*_{i,p\to q}(\varepsilon)\in \admissibleDeviations(f^*)$.
    Therefore, $\pi_{i,q} - \pi_{i,p} = \scalar{\tau}{f - f^*} / \varepsilon \geq 0$, and thus, $f^*$ is a user equilibrium.
\end{proof}

While the existence of solutions to customary variational inequalities in the form of~\eqref{eq:VI} can be shown using Brouwer's fixed point theorem, the existence of solutions to quasi-variational inequalities is not clear upfront.
To establish an existence result, we therefore introduce an alternative characterization of our problem in the next section.

\subsection{Equilibria for Discontinuous Cost Functions}\label{subsec:BS}

\newcommand*{\succE}[1]{{#1}^+}

In this section, we will reformulate the side-constrained user equilibrium as an equilibrium for suitably chosen edge cost functions~$c_{i,e} \colon \demFeasFlows \to \R_{\geq 0}$.
This way we dispense with the explicit side-constraints and instead incorporate them as discontinuities into the cost functions, so that any equilibrium must correspond to a feasible flow.

To translate our path-based cost functions to edge-based functions, we augment the time-expanded transit network $G=(V, E)$ as follows, resulting in the graph $G'=(V', E')$:
For each commodity $i\in I$, we introduce a source node $\alpha_i$ and a sink node $\omega_i$.
For each on-platform node $v\in V$ at the commodity's origin station $s_i$ with $\theta(v)\in \departureTimes_i$, we add an edge $(\alpha_i, v)$ with zero-cost $c_{i, (\alpha_i, v)}\equiv 0$.
This edge represents the departure of particles of commodity $i$ at time $\theta(v)$.
Similarly, for every on-platform node $v\in V$ at the commodity's destination station $t_i$ we add an edge $(v, \omega_i)$ with a cost of $c_{i,(v,\omega_i)}(f) \coloneqq \gamma^+_i \cdot \max\{0, T_i - \theta(v)\} + \gamma^-_i \cdot \max\{0, \theta(v) - T_i\}$, representing the arrival at the destination at time $\theta(v)$.
Finally, for the outside option, we add an edge $(\alpha_i, \omega_i)$ with cost $c_{i,(\alpha_i,\omega_i)} \equiv \pi_{\outopt_i}$.

Note that the strategy set $\pathsWithOutside_i$ of a commodity $i$ corresponds one-to-one to the $\alpha_i$-$\omega_i$-paths in the graph $G'$.
We now define cost functions on the remaining edges such that we can express the cost of a path $p\in\pathsWithOutside_i$ in terms of the costs of its edges (in $E'$) as $c_{i,p}(f)\coloneqq \sum_{e\in p} c_{i,e}(f)$:
The cost of a non-boarding edge $e\in E\setminus E_B$ is given by the time it takes to traverse the edge weighted by $\beta_i$, i.e., $c_{i,e}(f)\coloneqq \beta_i\cdot \tau_e \geq 0$.
Passing a boarding edge takes no time; however, it is only possible to board until the capacity of the vehicle is reached.
We realize this by raising the cost of the boarding edge when the capacity is exceeded to a sufficiently large constant $M$, which is higher than the cost of any available path, e.g., $M\coloneqq \max_{i\in I, p\in\pathsWithOutside_i} \pi_{i,p} + 1$.
This means, for a boarding edge $e\in E_B$, the experienced cost is $c_{i,e}(f)\coloneqq 0$, if $f_{\succE e}\leq \capacity_{\succE e}$, and $c_{i,e}(f)\coloneqq M$, if $f_{\succE e}>\capacity_{\succE e}$.

As a result, the assigned cost of a path $p\in\pathsWithoutOutside_i$ equals
\begin{equation}\label{telescoping-sum}
    c_{i,p}(f) = \pi_{i,p} + \sum_{e\in p\cap E_B} c_{i,e}(f).
\end{equation}

Note that we can define edge cost functions independent of the commodity if all $\beta_i$ are zero (every commodity only cares about arrival time), or all $\beta_i$ are non-zero (no commodity is indifferent about travel time), in which case we can normalize the $\beta_i, \gamma_i^+, \gamma_i^-$, so that $\beta_i = 1$ for every commodity.

The equilibria with respect to these cost functions are exactly the user equilibria in $G$, as the following theorem shows. Here, it does not matter whether coordinated deviations within one commodity are allowed or not. In the formulation, we identify a multi-commodity flow in $G$ with the corresponding flow in $G'$.

\begin{theorem}\label{thm:BS-equilibrium iff SC-equilibrium}
 Let $f$ be a demand-feasible flow. The following statements are equivalent:
 
 \begin{enumerate}[label=(\roman*)]  \item \label{thm:charac-SC-equilibrium} $f$ is a side-constrained user equilibrium;
  \item \label{thm:charac-DS-equilibrium}
   for all $i\in I$, $p\in\pathsWithOutside_i$ with $f_{i,p}>0$, $q\in\pathsWithOutside_i$, and $0 < \varepsilon < f_{i,p}$ it holds that    \[
    c_{i,p}(f) \leq c_{i,q}(f_{i,p\to q}(\varepsilon));
   \]
     \item \label{thm:charac-BS-equilibrium} 
   $f$ is a BS-equilibrium, i.e., for all $i\in I$, $p\in\pathsWithOutside_i$ with $f_{i,p}>0$, and $q \in \pathsWithOutside_i$ it holds that
      \[
    c_{i,p}(f) \leq \liminf_{\varepsilon\downarrow0} c_{i,q}(f_{i,p\to q}(\varepsilon)).
   \]
    \end{enumerate}
\end{theorem}
This means, side-constrained user equilibria in a time-expanded transit network~$G$ can be modelled either as equilibria in the sense of \citet{DafS69} or as equilibria in the sense of \citet{BernsteinS94} in the extended graph~$G'$ with the discontinuous edge cost functions~$c_{i,e}$.

\newcommand{\itemref}[1]{\ref{#1}{}}
\begin{proof}
   \begin{description}
    \item[\qq{\itemref{thm:charac-SC-equilibrium} $\Rightarrow$ \itemref{thm:charac-DS-equilibrium}}:] Let $f$ be a side-constrained user equilibrium, and let $i \in I$, $p \in \pathsWithOutside_i$ with $f_{i,p} > 0$, $q \in \pathsWithOutside_i$\oxc and $0 < \varepsilon < f_{i,p}$. Since $f$ is capacity-feasible, it holds that $c_{i,p}(f) \stackrel{\eqref{telescoping-sum}}= \pi_{i,p} + \sum_{e \in p \cap E_B} c_{i,e}(f) = \pi_{i,p}$.
   
    If $f_{i,p\to q}(\varepsilon)$ is an admissible deviation, then $q \in A_{i,p}(f)$ and we have $c_{i,q}(f_{i,p\to q}(\varepsilon)) \stackrel{\eqref{telescoping-sum}}= \pi_{i,q} + \sum_{e \in q \cap E_B} c_{i,e}(f_{i,p\to q}(\varepsilon)) = \pi_{i,q}$ as well. Then, by the definition of a side-constrained user equilibrium, it follows that $c_{i,p}(f) = \pi_{i,p} \le \pi_{i,q}  = c_{i,q}(f_{i,p\to q}(\varepsilon))$.

    If $f_{i,p\to q}(\varepsilon)$ is not an admissible deviation, then there is a boarding edge~$e$ of $q$ with $(f_{i,p\to q}(\varepsilon))_{e^+} > \nu_{e^+}$. Therefore, $c_{i,q}(f_{i,p\to q}(\varepsilon)) \ge c_{i,e}(f_{i,p \to q}(\varepsilon)) = M$. The definition of $M$ then implies that $c_{i,p}(f) = \pi_{i,p} < M \le c_{i,q}(f_{i,p \to q}(\varepsilon))$.
   \item[\qq{\itemref{thm:charac-DS-equilibrium} $\Rightarrow$ \itemref{thm:charac-BS-equilibrium}}:] Let $i \in I$, $p \in \pathsWithOutside_i$ with $f_{i,p} > 0$\oxc and $q \in \pathsWithoutOutside_i$. If $c_{i,p}(f) \le c_{i,q}(f_{i,p \to q}(\varepsilon))$ for all $\varepsilon > 0$ small enough, then the inequality also holds for the limit inferior.
   \item[\qq{\itemref{thm:charac-BS-equilibrium} $\Rightarrow$ \itemref{thm:charac-SC-equilibrium}}:] Note that $f$ is capacity-feasible:
    Assuming otherwise implies that there is some boarding edge $e\in E_B$ with $f_e > 0$ for which the driving edge $\succE{e}$ is overfilled, i.e., $f_{\succE{e}}>\nu_e$.
    For any path $p$ containing $e$, we have $c_{i,p}(f) \geq M$, which is larger than the cost $\pi_{\outopt_i}$ of the outside option, and therefore it follows $f_{i,p} = 0$ and $f_e = 0$, a contradiction.

    Let $i \in I$, $p \in \pathsWithOutside_i$ with $f_{i,p} > 0$ and $q \in A_{i,p}(f)$. Then $\pi_{i,p} = \pi_{i,p} + \sum_{e \in p \cap E_B} c_{i,e}(f) \stackrel{\eqref{telescoping-sum}}= c_{i,p}(f) \le \liminf_{\varepsilon \downarrow 0} c_{i,q}(f_{i, p \to q}(\varepsilon))$. Since $q$ is an available alternative, there is $\varepsilon^* > 0$ such that $f_{i,p\to q}(\varepsilon^*)$ is an admissible deviation. Then for all $0 < \varepsilon < \varepsilon^*$ it holds that $c_{i,q}(f_{i,p \to q}(\varepsilon)) \stackrel{\eqref{telescoping-sum}}= \pi_{i.q}$. Therefore, $\pi_{i,p} = \liminf_{\varepsilon \downarrow 0} c_{i,q}(f_{i, p \to q}(\varepsilon)) \le \pi_{i,q}$, implying that $f$ is a side-constrained user equilibrium. \qedhere%
  \end{description}%
\end{proof}

\subsection{Fixed Departure Times}

We now introduce the important special case where users care only about their arrival time and are indifferent to whether they depart later or travel for longer.
This scenario arises when people want to travel home after an event with a fixed end time, such as after school, after a plane has landed, or after a concert.
It is precisely in these situations that public transport systems reach their capacity limits and users therefore begin to behave strategically.
We will show later that this assumption is sufficient for the existence of an equilibrium.

Fixed departure times can be retrieved as a special case of our general model by setting either $\beta_i = 0$ (so thattravel time is not considered) or $\departureTimes_i$ to a singleton (so thatany waiting time at the start station is counted as travel time).
The following \lcnamecref{lem:no-departure-time-choice} shows that these two representations are indeedequivalent.
For a path $p \in \pathsWithOutside$ that starts at an on-platform node~$v$ of station~$z$ at time $\theta(v) \in \Theta_i$, let $\bar p$ be the path that starts at the earliest on-platform node $w$ of $z$ with $\theta(w)\in \Theta_i$, uses waiting edges until time~$\theta(v)$, and then continues as $p$.
Let $\theta_i'\coloneqq \theta(w)$, and let $\bar f$ be the flow obtained by rerouting each path flow $f_{i,p}>0$ to the extended path~$\bar p$ for all $i \in I$.

\begin{lemma}\label{lem:no-departure-time-choice}
 Let $\beta_i, \gamma_i^+, \gamma_i^-$ as well as demands~$Q_i$, target arrival times~$T_i$\oxc and feasible departure time intervals~$\departureTimes_i$ be given for all $i \in I$. For $(i,p) \in \mathcal P$ let
 \begin{align*}
  \pi_{i,p} &\coloneqq \beta_i \cdot \tau_p + \gamma_i^+ \cdot \max\{0,\, \patharrival_p -T_i\} + \gamma^-_i\cdot\max\{0,\, T_i - \patharrival_p\},\\
  \pi_{i,p}' &\coloneqq (\gamma_i^+ + \beta_i) \cdot \max\{0,\, \patharrival_p -T_i\} + (\gamma^-_i - \beta_i) \cdot\max\{0,\, T_i - \patharrival_p\}.
 \end{align*}
 For a demand-feasible flow~$f$ for the departure time intervals~$\Theta_i$, the following are equivalent:
 \begin{enumerate}[label=(\roman*)]  \item $f$ is a user equilibrium for the departure time intervals~$\Theta_i$ and path costs~$\pi_{i,p}'$. \label{no-traveltime-costs}
  \item $\bar f$ is a user equilibrium for the departure times $\Theta_i' \coloneqq \{\theta_i'\}$ (i.e., each $\mathcal P_i^\circ$ contains only paths starting at time~$\theta_i'$) and the path costs~$\pi_{i,p}'$. \label{singleton-without-traveltime-costs}
  \item $\bar f$ is a user equilibrium for the departure times $\Theta_i' \coloneqq \{\theta_i'\}$ and the path costs~$\pi_{i,p}$. \label{singleton}
 \end{enumerate}
\end{lemma}
\begin{proof}
 Consider an arbitrary commodity~$i$, and let \optDisplay{\pathsWithOutside_i' \coloneqq \bigl\{p \in \mathcal P_i \bigm| p = \outopt_i \text{ or } p \text{ is a } v\text{-}w\text{-path with } \theta(v) = \theta_i'\bigr\}.}
    \begin{description}
   \item[\qq{\itemref{no-traveltime-costs} $\Leftrightarrow$ \itemref{singleton-without-traveltime-costs}}:]
  For every path~$p \in \pathsWithOutside_i$ it holds that $\bar p \in \mathcal \pathsWithOutside'$ and $\pi_{i,p}' = \pi_{i,\bar p}'$ because the cost only depends on the arrival time. This implies that there are no improving available alternative paths for $f$ if and only this holds for $\bar f$.
  \item[\qq{\itemref{singleton-without-traveltime-costs} $\Leftrightarrow$ \itemref{singleton}}:] For every path~$p \in \pathsWithOutside_i'$ it holds that
     \begin{align*}
    \pi_{i,p}' &= (\gamma_i^+ + \beta_i) \cdot \max\{0,\,\patharrival_p - \targetTime_i\} + (\gamma_i^- - \beta_i) \cdot \max\{0,\,\targetTime_i - \patharrival_p\} \\
    &= \beta_i \cdot (\patharrival_p - \targetTime_i) + \gamma_i^+ \cdot \max\{0,\,\patharrival_p - \targetTime_i\} + \gamma_i^- \cdot \max\{0,\,\targetTime_i - \patharrival_p\} \\
    &= \beta_i \cdot (\patharrival_p - \theta_i') + \gamma_i^+ \cdot \max\{0,\,\patharrival_p - \targetTime_i\} + \gamma_i^- \cdot \max\{0,\,\targetTime_i - \patharrival_p\} - \beta_i \cdot (\targetTime_i - \theta_i') \\
    &= \pi_{i,p} - \beta_i \cdot (\targetTime_i - \theta_i'),
   \end{align*}
      which implies that the costs of corresponding paths differ only by a commodity-specific constant.
   Consequently, a path is an improving alternative for $\pi'$ if and only if it is one for $\pi$.\qedhere%
  \end{description}%
\end{proof}

\begin{definition}
 We say that a given instance has \emph{fixed departure times} (FDT) if $\Theta_i$ is a singleton for every commodity $i\in I$.
\end{definition}

From \cref{lem:no-departure-time-choice} it follows that, when considering fixed departure times, we may assume for each commodity~$i$ both that $\Theta_i$ is a singleton and that $\beta_i = 0$.
Therefore, the cost of a path depends only on the arrival time, so it can be written as $\pi_{i,p} = \pi_i'(\patharrival_p)$, where
\begin{equation}
 \pi_i'(t) \coloneqq \gamma_i^+ \cdot \max\{0,\, t - T_i\} + \gamma_i^- \cdot \max\{0,\, T_i - t\}. \label{eq:pi-prime}
\end{equation}
Note that since each commodity has a separate destination node~$\omega_i$, the edges leading to $\omega_i$ are only used by commodity~$i$.
Therefore, the functions $c_{i,e}$ are independent of the commodity:
\[
    c_e(f) = \begin{cases}  c_{i,e}(f) &\text{ if } e = (v, \omega_i) \text{ for some } i \in I, \\
    M &\text{ if } e \text{ is a boarding edge with } f_{e^+} \ge \capacity_{e^+}, \\
    0 &\text{ else.} \end{cases}
\]
The case when $T_i = 0$, $\gamma_i^+ > 0$ means that commodity~$i$ simply aims to minimize its arrival time, i.e., $\pi_i'(t) = \gamma_i^+ t$.

\subsection{Generalization of Bernstein and Smith's Existence Result}

In preparation for the analysis of existence of user equilibria in schedule-based transit networks, we revise the theory developed by \Citet{BernsteinS94} for general networks in this subsection.
They proved the existence of BS-equilibria in the case that each path cost function has the form $c_{i,p} = \sum_{e \in p} c_e$, where $c_e \colon \demFeasFlows \to \R_{\ge 0}$, $e \in E$, are lower-semicontinuous, bounded functions that satisfy the following regularity condition.

\begin{definition}\label{def:regular-cost-function}
    A cost function $c \colon \demFeasFlows \to \R_{\geq 0}^E$ is \emph{regular} if it satisfies
    \[
        \liminf_{\varepsilon\downarrow0} c_{i,q}(f_{i,p\to q}(\varepsilon))
        =
        \sum_{e\in p\cap q} c_e(f) + \sum_{e\in q\setminus p} \bar c_e(f)
    \]
    for all $f\in\demFeasFlows$, $i\in I$\oxc and paths $p,q\in\pathsWithOutside_i$ with $f_{i,p} > 0$,
    where $\bar c_e$ is the upper hull of $c_e$ defined as  \optDisplay{
        \bar c_e(f) \coloneqq \lim_{\varepsilon\downarrow 0} \sup\{ c_e(x) \mid x\in \demFeasFlows, \norm{x-f} < \varepsilon \}.
    }
\end{definition}

\begin{remark}   
    The cost function defined in \Cref{subsec:BS} for schedule-based transit networks is not regular even for fixed departure times.
    This is illustrated by the network in \Cref{fig:intro-lens}:
    Assume there is a single commodity with origin $a$, destination $d$ and demand $2$ minimizing its arrival time (i.e., $T_i = 0, \gamma_i^+ = 1$),
    and assume that both vehicles have capacity $1$.
    Let $p$ be the $a$-$d$-path using only the \colourtwo vehicle, and let $q$ be the $a$-$d$-path using both vehicles.
    Let $f$ be the flow sending one unit along $p$ and the remaining unit along the commodity's outside option $\outopt_i$.
    Then, $\bar c_e(f) = M$ holds for the boarding edge $e$ of the \colourtwo vehicle at station $c$ (as $\demFeasFlows$ contains $f_{i,\outopt_i\to p}(\varepsilon)$ for $\varepsilon \leq 1$).
    This implies
    \[
        \liminf_{\varepsilon\downarrow0} c_{q}(f_{i,p\to q}(\varepsilon))
        = \tau_q < M \leq \sum_{e\in p\cap q} c_e(f) + \sum_{e\in q\setminus p} \bar c_e(f).
    \]
    On the left-hand side it is noticed that the flow on the last driving edge is unchanged and boarding remains possible, whereas the right-hand side is oblivious to the flow reduction along $p$.
\end{remark}

The goal of this subsection is to show that the following weaker regularity condition is actually sufficient for existence.
Here, we also allow edge costs to be commodity-specific, i.e., the path costs have the form $c_{i,p} = \sum_{e \in p} c_{i,e}$ for lower-semicontinuous functions~$c_{i,e}$.

\begin{definition}
    A cost function $c \colon \demFeasFlows \to \R^{I\times E}_{\ge 0}$ %
    is called \emph{weakly regular} if
            the following implication holds for all demand-feasible flows $f\in\demFeasFlows$, $i\in I$\oxc and $p\in\pathsWithOutside_i$ with $f_{i,p} > 0$: \[
        c_{i,p}(f) \leq \min_{q\in\pathsWithOutside_i} \sum_{e\in p\cap q} c_{i,e}(f) + \sum_{e\in q\setminus p} \bar c_{i,e}(f)
        \,\implies\,
        c_{i,p}(f) \leq \min_{q\in\pathsWithOutside_i} \liminf_{\varepsilon \downarrow 0} c_{i,q}(f_{i,p\to q}(\varepsilon)).
        \]
    \end{definition}

Note that a regular cost structure $c\colon\feasFlows \to \R^E_{\geq0}$ can be interpreted as a function $c'\colon \feasFlows \to \R^{I\times E}_{\geq0}$ by defining $c'_{i,e} \coloneqq c_e$ for all $i\in I$, $e\in E$.
\begin{proposition}
    A regular cost structure~$c$ is also weakly regular.
\end{proposition}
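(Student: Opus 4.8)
The plan is to exploit the fact that regularity is an \emph{equality} for every individual pair of paths, which, once we minimize over the second path, turns directly into an equality between the two minimum expressions that appear in the definition of weak regularity. Concretely, fix a demand-feasible flow $f\in\demFeasFlows$, a commodity $i\in I$, and a path $p\in\pathsWithOutside_i$ with $f_{i,p}>0$. Since the defining quantifier of regularity only requires $f_{i,p}>0$ and lets the second path range over all of $\pathsWithOutside_i$, I would apply it to every $q\in\pathsWithOutside_i$ to obtain
\[
    \liminf_{\varepsilon\downarrow0} c_{i,q}(f_{i,p\to q}(\varepsilon)) = \sum_{e\in p\cap q} c_e(f) + \sum_{e\in q\setminus p} \bar c_e(f) \qquad\text{for all } q\in\pathsWithOutside_i.
\]

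Next I would take the minimum over $q\in\pathsWithOutside_i$ on both sides. As $\pathsWithOutside_i$ is finite and non-empty, the minima are attained, and the pointwise identity above yields
\[
    \min_{q\in\pathsWithOutside_i} \liminf_{\varepsilon\downarrow0} c_{i,q}(f_{i,p\to q}(\varepsilon)) = \min_{q\in\pathsWithOutside_i}\Bigl( \sum_{e\in p\cap q} c_e(f) + \sum_{e\in q\setminus p} \bar c_e(f) \Bigr).
\]
Consequently, the antecedent of the implication defining weak regularity, namely $c_{i,p}(f) \le \min_{q}\bigl(\sum_{e\in p\cap q}c_e(f) + \sum_{e\in q\setminus p}\bar c_e(f)\bigr)$, is literally the same inequality as its consequent, $c_{i,p}(f)\le\min_q\liminf_{\varepsilon\downarrow0}c_{i,q}(f_{i,p\to q}(\varepsilon))$. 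Hence the implication (in fact an equivalence) holds trivially, and $(c_e)_e$ is weakly regular.

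There is essentially no obstacle here: the claim is a soft consequence of regularity being an identity rather than a one-sided estimate. The only points worth stating carefully are that regularity may indeed be invoked for every $q$ over which we minimize (which holds because $p$ with $f_{i,p}>0$ is fixed while $q$ ranges freely over $\pathsWithOutside_i$), and that finiteness of $\pathsWithOutside_i$ legitimizes passing the equality through the minimum. Notably, no properties of the specific boarding cost functions in \eqref{eq:bs-cost}, and no lower-semicontinuity or boundedness assumptions on the $c_e$, are required for this proposition.
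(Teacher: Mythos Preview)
Your proof is correct and follows essentially the same approach as the paper: both use regularity to identify, for each $q\in\pathsWithOutside_i$, the quantity $\liminf_{\varepsilon\downarrow0} c_{i,q}(f_{i,p\to q}(\varepsilon))$ with $\sum_{e\in p\cap q}c_e(f)+\sum_{e\in q\setminus p}\bar c_e(f)$, and then pass to the minimum over $q$. Your observation that this makes the antecedent and consequent of the weak-regularity implication literally identical (so that the implication is in fact an equivalence) is a slightly crisper way to phrase the same argument.
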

\begin{proof}
    Assume, the left side of the implication in the definition of weak regularity holds true for some $p\in\pathsWithOutside_i$, $f\in\demFeasFlows$, and let $q \in \pathsWithOutside_i$.
    Then, it also holds
    \iftrue
    \[
        c_{i,p}(f)
        \leq \sum_{e\in p\cap q} c_{i,e}(f) + \sum_{p \setminus q} \bar c_{i,e}(f)
        = \liminf_{\varepsilon\downarrow 0} c_{i,q}(f_{i,p\to q}(\varepsilon)),
    \]
    \else
    \begin{align*}
        c_{i,p}(f)
        &\leq \sum_{e\in p\cap q} c_{i,e}(f) + \sum_{p \setminus q} \bar c_{i,e}(f)
        \\
        &= \liminf_{\varepsilon\downarrow 0} c_{i,q}(f_{i,p\to q}(\varepsilon)),
    \end{align*}
    \fi
    where we apply regularity for the last equation.
    Taking the minimum over all $q\in\pathsWithOutside_i$ yields weak regularity.
\end{proof}

\begin{theorem}\label{thm:main}
    If $c \colon \demFeasFlows \to \R_{\geq 0}^{I\times E}$ is a lower-semicontinuous, bounded\oxc and weakly regular cost structure, a BS-equilibrium exists.
\end{theorem}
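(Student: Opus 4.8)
The plan is to realize $f^*$ as a limit of Wardrop equilibria for a sequence of \emph{continuous} cost structures that approximate $c$ from above, and then to let weak regularity do the final bookkeeping.

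First I would record that $\demFeasFlows$ is a non-empty, compact, convex polytope — a product of simplices, one per commodity, scaled by the demands $\demand_i$. Since each $c_e$ is bounded, its upper hull $\bar c_e$ is bounded and upper-semicontinuous, hence the pointwise limit of a non-increasing sequence of continuous functions $c^k_e \colon \demFeasFlows \to \R_{\ge 0}$ with $c^k_e \ge \bar c_e \ge c_e$ for all $k$; concretely one can take $c^k_e(f) \coloneqq \sup_{x \in \demFeasFlows}\bigl(\bar c_e(x) - k\norm{x-f}\bigr)$. For each fixed $k$ the cost structure $c^k \coloneqq (c^k_e)_e$ is continuous, so (Brouwer / Hartman--Stampacchia, as already noted for variational inequalities of the form~\eqref{eq:VI}) there is a Wardrop equilibrium $f^k \in \demFeasFlows$ for the additive path costs $c^k_{i,p}(f)\coloneqq\sum_{e\in p}c^k_e(f)$: whenever $f^k_{i,p} > 0$, we have $\sum_{e \in p} c^k_e(f^k) \le \sum_{e \in q} c^k_e(f^k)$ for all $q \in \pathsWithOutside_i$. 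Passing to a subsequence, let $f^k \to f^* \in \demFeasFlows$.

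The heart of the argument is to check that $f^*$ satisfies the left-hand side of the weak-regularity implication: for every $i$, every $p \in \pathsWithOutside_i$ with $f^*_{i,p} > 0$, and every $q \in \pathsWithOutside_i$, one should get $c_{i,p}(f^*) \le \sum_{e \in p \cap q} c_e(f^*) + \sum_{e \in q \setminus p}\bar c_e(f^*)$. The key move: since $f^k_{i,p} \to f^*_{i,p} > 0$, for large $k$ the Wardrop inequality holds, and after cancelling the shared summands $\sum_{e \in p \cap q} c^k_e(f^k)$ it reads $\sum_{e \in p \setminus q} c^k_e(f^k) \le \sum_{e \in q \setminus p} c^k_e(f^k)$. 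I would now take limits edge by edge on the finite sets $p \setminus q$ and $q \setminus p$: on the left, $c^k_e(f^k) \ge c_e(f^k)$ together with lower-semicontinuity of $c_e$ gives $\liminf_k c^k_e(f^k) \ge c_e(f^*)$; on the right, a standard argument using $c^k_e \downarrow \bar c_e$, the continuity of each $c^m_e$, and the convergence $f^k \to f^*$ gives $\limsup_k c^k_e(f^k) \le \bar c_e(f^*)$. Combining these (superadditivity of $\liminf$ and subadditivity of $\limsup$ over finitely many edges) yields $\sum_{e \in p \setminus q} c_e(f^*) \le \sum_{e \in q \setminus p}\bar c_e(f^*)$, and adding back $\sum_{e \in p \cap q} c_e(f^*)$ gives exactly the claimed inequality.

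Finally I would take the minimum over $q \in \pathsWithOutside_i$, so that $f^*$ meets the hypothesis of weak regularity at every used path, and conclude $c_{i,p}(f^*) \le \min_{q \in \pathsWithOutside_i} \liminf_{\varepsilon \downarrow 0} c_{i,q}(f^*_{i,p \to q}(\varepsilon))$ whenever $f^*_{i,p} > 0$ — i.e.\ $f^*$ is a BS-equilibrium. I expect the delicate point to be precisely the limiting step in the previous paragraph: passing to the limit \emph{before} cancelling the common edges would only give the too-weak bound $c_{i,p}(f^*) \le \sum_{e \in q}\bar c_e(f^*)$ (lower-semicontinuity of $c_e$ is useless on edges of $p \cap q$, where we would only control $c^k_e(f^k)$ from above by $\bar c_e(f^*)$), and this need not imply the premise of weak regularity; cancelling first is what lets the sharp lower bound act on $p \setminus q$ while the approximation bound acts on $q \setminus p$. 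As a by-product, for a \emph{regular} cost structure this reproduces Bernstein and Smith's existence theorem, since then the right-hand side of the derived inequality already equals $\liminf_{\varepsilon\downarrow0} c_{i,q}(f^*_{i,p\to q}(\varepsilon))$.
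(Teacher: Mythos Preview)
Your proof is correct and follows essentially the same route as the paper: approximate by continuous cost structures, take Wardrop equilibria, pass to a subsequential limit, cancel the common edges in the Wardrop inequality \emph{before} taking limits, and then invoke weak regularity. The only difference is the direction of the monotone approximation --- the paper approximates $c_e$ from below by $c^{(n)}_e \uparrow c_e$, whereas you approximate $\bar c_e$ from above by $c^k_e \downarrow \bar c_e$ --- which makes your lower bound on $p\setminus q$ a one-liner (via $c^k_e \ge c_e$ and lower-semicontinuity) while the paper's upper bound on $q\setminus p$ is the one-liner; the two arguments are dual and equally valid.
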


The proof follows the same ideas as \cite[Theorem~2]{BernsteinS94}.

\begin{proof}
    Let $M$ be a common upper bound for all functions $c_{i,e}$, $i\in I$, $e \in E$.
    There exists for each pair $(i,e)$ a sequence of continuous functions $c^{(n)}_{i,e} \colon \demFeasFlows\to [0,M]$ such that $c^{(n)}_{i,e}(f)\uparrow c_{i,e}(f)$ holds for all $f \in \demFeasFlows$.
    For each $n\in\N$, there is a Wardrop equilibrium $f^{(n)}\in\demFeasFlows$ w.r.t.\ the path cost function~$(c^{(n)}_{i,p})_{(i,p) \in \pathsWithOutside}$ defined by $c_{i,p}^{(n)}(f) \coloneqq \sum_{e \in p} c_{i,e}^{(n)}(f)$~\citep[cf.][]{Smith79}. That means \[
        f^{(n)}_{i,p}>0 \implies c^{(n)}_{i,p}(f^{(n)})\leq  c^{(n)}_{i,q}(f^{(n)})
    \]
    holds for all $i\in I$ and paths $p,q\in\pathsWithOutside_i$.
        Equivalently, we have \begin{equation}\label{eq:sequence-eq-cond}
        f^{(n)}_{i,p}>0 \implies \sum_{e\in p\setminus q} c^{(n)}_{i,e}(f^{(n)})\leq \sum_{e\in q\setminus p} c^{(n)}_{i,e}(f^{(n)}).
    \end{equation}
    The sequence $(f^{(n)}, c^{(n)}(f^{(n)}))$ is contained in the compact set $\demFeasFlows\times [0,M]^{I\times E}$ and therefore has a convergent sub-sequence with some limit $(f, x)$; we pass to this sub-sequence.

    By the upper-semicontinuity of the upper hull and the monotonicity of the sequence of cost functions, we have for all $e \in E$ and $i\in I$
    \begin{equation}\label{eq:x_e upper bound}
        \bar c_{i,e}(f)
        \geq \limsup_{n\to\infty} \bar c_{i,e}(f^{(n)})
        \ge \limsup_{n \to \infty} c_{i,e}(f^{(n)})
        \geq \lim_{n\to\infty} c^{(n)}_{i,e}(f^{(n)})
        = x_{i,e}.
    \end{equation}

    Let $\lambda > 0$.
    First, since $(c^{(n)}_{i,e})_n$ converges pointwise to $c_{i,e}$, there exist $n_0\in\N$ such that $c^{(n_0)}_{i,e}(f) \ge c_{i,e}(f) - \lambda/2$.
    Second, since $c_{i,e}^{(n_0)}$ is continuous, there is $\delta > 0$ such that for all $g \in \demFeasFlows$ with $\norm{f-g} < \delta$ we have $c_{i,e}^{(n_0)}(g) \ge c^{(n_0)}_{i,e}(f) - \lambda/2$.
    As $(c_{i,e}^{(n)})_n$ is a pointwise increasing sequence, we then have for all $n \ge n_0$ that $c_{i,e}^{(n)}(g) \ge c_{i,e}(f) - \lambda$.
    Third, since $(f^{(n)})_n$ converges to $f$, there is $n_1$ such that $\smallnorm{f-f^{(n)}}<\delta$ holds for all $n\geq n_1$.
    In conclusion, $c^{(n)}_{i,e}(f^{(n)}) \geq c_{i,e}(f)-\lambda$ holds for $n \ge \max\{n_0,n_1\}$.
    Since $\lambda > 0$ was arbitrary, we deduce \begin{equation}\label{eq:x_e lower bound}
        x_{i,e} = \lim_{n\to\infty}c^{(n)}_{i,e}(f^{(n)}) \geq c_{i,e}(f).
    \end{equation}
        
    Let $i\in I$ and $p\in \pathsWithOutside_i$ with $f_{i,p}>0$.
    There exists $n_0\in\N$ with $f^{(n)}_{i,p}>0$ for all $n\geq n_0$.
    Let $q\in\pathsWithOutside_i$ be an arbitrary other path.
    Taking the limit of~\eqref{eq:sequence-eq-cond} yields \(
    \sum_{e\in p\setminus q} x_{i,e} \leq \sum_{e\in q\setminus p}  x_{i,e}
    \), and by applying the inequalities \eqref{eq:x_e upper bound} and \eqref{eq:x_e lower bound}, we get \optDisplay{
        \sum_{e\in p\setminus q} c_{i,e}(f) \leq \sum_{e\in q\setminus p} \bar c_{i,e}(f).
    }
    Adding $c_{i,e}(f)$ for each $e\in p\cap q$ to both sidesand taking the minimum over all $q\in\pathsWithOutside_i$, this shows \optDisplay{
        c_{i,p}(f) \leq \min_{q\in\pathsWithOutside_i} \sum_{e\in p\cap q} c_{i,e}(f) + \sum_{e\in q\setminus p} \bar c_{i,e}(f).
    }
    Thus, we can apply weak regularity, which implies that $q$ is not an improving alternative path. Since this holds for all $q \in \pathsWithoutOutside_i$ and all $p \in \pathsWithoutOutside_i$ with $f_{i,p} > 0$, the flow~$f$ is a BS-equilibrium.
\end{proof}

\subsection{Existence of Equilibria in Schedule-Based Transit Networks}\label{subsec:Existence}

In general, the existence of a user equilibrium with departure choice is not guaranteed, as was already discovered by \citet{NguyenPM01}; this is illustrated in \Cref{example:non-existence-dtc}.
In fact, deciding whether a user equilibrium exists is actually an NP-hard problem, as we will show in \cref{subsec:NP-hardness}.
On the other hand, in \cref{subsec:finite-time algorithm}, we will provide an exponential-time algorithm that decides this question and computes an equilibrium if it exists.
For fixed-departure time choice, however, we can derive existence using the generalized theorem from the previous section (see~\Cref{thm:existence-metro}).
Before diving into this proof, we first show that in the general case, user equilibria do not necessarily exist:

\begin{example}\label{example:non-existence-dtc}
    We consider the network shown in \cref{fig:non-existence-dtc}:
    All vehicles have capacity $1$, there is a single commodity with a demand of $2$, and we have $\beta = 1$ and $\gamma^+=\gamma^-=0$.
    Assume there is a user equilibrium~$f$.
    There are three reasonable paths: the path $p_1$ of minimal cost starts late and takes the direct \colourone{}-vehicle edge from $s$ to $t$, the second-best path $p_2$ starts early and uses the \colourone{} vehicle including the detour via $v$, and the worst path $p_3$ starts early, takes the \colourtwo{} vehicle and arrives later than $p_1$ and $p_2$.
    As the capacity of both vehicles is $1$, the path $p_3$ must be used by a flow volume of $1$.
    Furthermore, for particles using $p_3$ to fulfil the equilibrium condition, path $p_2$ must be an unavailable alternative to $p_3$; this implies that the first driving edge of the \colourone{} vehicle from $s$ to $v$ must already be occupied.
    Therefore, path $p_2$ must be used by a flow volume of $1$ as well.
    The particles on path $p_2$, however, perceive path $p_1$ as a better available alternative, which means that the equilibrium condition cannot be satisfied.
\end{example}

\begin{remark}
    It is worth noting that the above example also shows that $\varepsilon$-approximate user equilibria do not generally exist, even for arbitrarily large $\varepsilon$.
    Here, a feasible flow $f$ is called an \emph{$\varepsilon$-approximate user equilibrium} if it fulfils $\pi_{i,p}\leq(1+\varepsilon)\cdot \pi_{i,q}$ for all $(i,p)\in\pathsWithOutside$ with $f_{i,p}>0$ and $q\in A_{i,p}(f)$.
    To acknowledge this, let us delay the arrival of the \colourone{} vehicle at the second stop at $s$ and at the final stop at $t$ such that $\pi_{p_2} > (1+\varepsilon)\pi_{p_1}$ holds.
    Further, we delay the arrival of the \colourtwo{} vehicle to obtain $\pi_{p_3} > (1+\varepsilon)\pi_{p_2}$.
    As above, an $\varepsilon$-approximate user equilibrium would assign $p_3$ a flow volume of~$1$; by the equilibrium condition, path $p_2$ must be unavailable implying that $p_2$ is used by a flow volume of $1$.
    For these particles, however, path $p_1$ is an available alternative whose cost is smaller than $\pi_{p_2}/(1+\varepsilon)$, contradicting the approximate equilibrium condition.
\end{remark}

\newcommand{\nonexistence}{
 \begin{tikzpicture}[yscale=.5, thick]
  \newcommand\height{5.5}
  \newcommand\depxoffset{0.5}

  \def\yAxisCoord{-1.5}
  \draw[->] (\yAxisCoord,\height) -- (\yAxisCoord, -0.5) node[below] {Time};
  \foreach \y in {1,3,5}
  \draw (\yAxisCoord + 0.1,\height - \y) -- (\yAxisCoord -0.1,\height - \y) node[left] {\sbTime{\y}{00}};

  \lifelineCompact{0}{$s$}
  \lifelineCompact{2}{$v$}
  \lifelineCompact{4}{$t$}

  \draw[->, draw=color1] (0, \height - 1) -- (2, \height - 2);
  \draw[->, draw=color1] (2, \height - 2) -- (0, \height - 3);
  \draw[->, draw=color1] (0, \height - 3) -- (4, \height - 4);
  \draw[->, draw=color2] (0, \height - 1) -- (4, \height - 5);
 \end{tikzpicture}
}

\begin{figure}[t]
      \centering
     \nonexistence
     \caption{An example for non-existence when considering departure time choice.}
     \label{fig:non-existence-dtc}
 \end{figure}

\medskip
We now turn to the case of fixed departure times.

\begin{theorem}\label{thm:existence-metro}
  In schedule-based transit networks with fixed departure times, a user equilibrium always exists.
\end{theorem}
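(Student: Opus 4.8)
The plan is to derive the statement as a consequence of \cref{thm:main}. By \cref{thm:BS-equilibrium iff SC-equilibrium}, a user equilibrium is precisely a BS-equilibrium for the cost functions $c_e$ from \cref{subsec:BS}, so it suffices to verify that this cost structure is lower-semicontinuous, bounded, and weakly regular, and then to invoke \cref{thm:main}. Boundedness is immediate, as every $c_e$ takes values in $[0,M]$. Lower semicontinuity is easy too: on a non-boarding edge $c_e$ is constant, while on a boarding edge $e$ one has $c_e(f)=M$ if $f_{e^+}>\nu_{e^+}$ and $c_e(f)=0$ otherwise, i.e.\ $c_e$ is $M$ times the indicator of the open set $\{f : f_{e^+}>\nu_{e^+}\}$ and hence lower-semicontinuous. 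The substance of the proof — and the reason the technical \cref{lem:cost on common edge,lem:prefix-indepence} were set up above — is the verification of weak regularity.

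To this end, I would fix a demand-feasible flow $f$, a commodity $i\in I$, and a path $p\in\pathsWithOutside_i$ with $f_{i,p}>0$, and assume that the left-hand side of the weak-regularity implication holds. Instantiating it with $q=\outopt_i$, and using that $\outopt_i$ shares no edge with $p$ and has cost $\tau_{\outopt_i}<M$, gives $c_{i,p}(f)\le\tau_{\outopt_i}<M$ (this also holds trivially when $p=\outopt_i$). Consequently $c_{i,p}(f)=\tau_p$, and every boarding edge $e$ of $p$ satisfies $f_{e^+}\le\nu_{e^+}$; in particular \cref{lem:prefix-indepence} is applicable with this $p$.

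Now let $q\in\pathsWithOutside_i$ be arbitrary; the goal is $c_{i,p}(f)\le\liminf_{\varepsilon\downarrow0}c_{i,q}(f_{i,p\to q}(\varepsilon))$. Let $v$ be the last common node of $p$ and $q$ and let $q'$ be the concatenation of $p_{\leq v}$ and $q_{\geq v}$. By \cref{lem:prefix-indepence} it is enough to bound $\liminf_{\varepsilon\downarrow0}c_{i,q'}(f_{i,p\to q'}(\varepsilon))$ from below. The point of passing to $q'$ is that it agrees with $p$ on the prefix up to $v$ and afterwards never revisits a node of $p$. Hence, under the deviation $f_{i,p\to q'}(\varepsilon)$: every edge $e$ of the shared prefix $p_{\leq v}$ lies in $p\cap q'$ and keeps its cost $c_e(f)$ — by \cref{lem:cost on common edge} if $e$ is a boarding edge, trivially otherwise — and this value is $0$ on boarding edges because $c_{i,p}(f)<M$; meanwhile, for every boarding edge $e$ of $q_{\geq v}$ the successor driving edge $e^+$ has its tail strictly after $v$ in $q'$, hence outside $p$, so $e^+\notin p$ and the deviation genuinely raises $f_{e^+}$ by $\varepsilon$, whence $c_e(f_{i,p\to q'}(\varepsilon))$ is eventually constant in $\varepsilon$ with value $\bar c_e(f)$ (and the non-boarding edges of $q_{\geq v}$ have the constant cost $\bar c_e(f)=\tau_e$ anyway). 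Summing over the edges of $q'$ gives
\[
    \liminf_{\varepsilon\downarrow0}c_{i,q'}(f_{i,p\to q'}(\varepsilon)) \;=\; \sum_{e\in p\cap q'}c_e(f)+\sum_{e\in q'\setminus p}\bar c_e(f) \;\ge\; c_{i,p}(f),
\]
where the inequality is the weak-regularity hypothesis instantiated with $q'$. Combined with \cref{lem:prefix-indepence} this yields $c_{i,p}(f)\le\liminf_{\varepsilon\downarrow0}c_{i,q}(f_{i,p\to q}(\varepsilon))$; taking the minimum over $q$ establishes weak regularity, so \cref{thm:main} provides the desired equilibrium.

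I expect the main obstacle to be the bookkeeping in the third paragraph: one must argue carefully that, after branching away from $p$ at $v$, \emph{no} cancellation occurs on the boarding edges of $q'$ — this is exactly the place where $e^+\notin p$ is used, and exactly where our cost structure fails the stronger regularity condition of Bernstein and Smith. The auxiliary results \cref{lem:cost on common edge,lem:prefix-indepence} are tailored precisely to make this reduction go through cleanly; once they are in place, the remaining estimates are routine.
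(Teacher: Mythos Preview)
Your proposal is correct and follows essentially the same route as the paper's proof: verify boundedness and lower semicontinuity directly, then establish weak regularity by using the outside option to force $c_{i,p}(f)<M$, passing from $q$ to the spliced path $q'$ via \cref{lem:prefix-indepence}, and analysing the prefix edges with \cref{lem:cost on common edge} and the suffix boarding edges via $e^+\notin p$. The only cosmetic difference is that you record the key identity as an equality for $\liminf_{\varepsilon\downarrow0}c_{i,q'}(f_{i,p\to q'}(\varepsilon))$, whereas the paper states it as the inequality actually needed; both are valid since every edge cost is eventually constant in $\varepsilon$.
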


We want to apply \cref{thm:main} to establish the existence of user equilibria for fixed departure times.
While it is clear that the cost functions are lower-semicontinuous and bounded, some effort is required to show that they fulfil weak regularity.
The idea is that given a path $p$ and a path $q$ minimizing $\liminf_{\varepsilon \downarrow 0} c_{i,q}(f_{i,p\to q}(\varepsilon))$ we consider the last common node $v$ of $p$ and $q$, and define $q'$ as the path formed by concatenating the prefix of $p$ up until $v$ with the suffix of $q$ starting from $v$.
For boarding edges $e$ on the second part of $q'$, we can then show $\liminf_{\varepsilon\downarrow0}c_e(f_{i,p\to q'}(\varepsilon)) = \bar c_e(f)$, while boarding edges $e$ on the first part fulfil $\liminf_{\varepsilon\downarrow 0}c_e(f_{i,p\to q'}(\varepsilon)) = c_e(f)$.
Observing that $\liminf_{\varepsilon\downarrow0}c_{i, q'}(f_{i,p\to q'}(\varepsilon)) \leq \liminf_{\varepsilon\downarrow0}c_{i,q}(f_{i,p\to q}(\varepsilon))$ then concludes the argument.

\begin{proposition}\label{lem:cost on common edge}
    Let $f$ be a flow, let $i \in I$, let $p, q \in \pathsWithOutside_i$, let $e \in p \cap q \cap E_B$, and let $\varepsilon \in [0, f_{i,p}]$. Then $c_{i,e}(f_{i,p\to q}(\varepsilon)) = c_{i,e}(f)$.
\end{proposition}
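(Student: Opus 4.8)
The crux is a structural fact about the time-expanded graph: a boarding edge is always immediately followed, along any path, by its unique succeeding driving edge. Concretely, if $e = vw \in E_B$, then $w$ is a departure node, and inspecting the five edge categories shows that the only edges of $G$ leaving a departure node are driving edges; since $\succE e$ is by definition the (unique) driving edge emanating from $w$, any path traversing $e$ must traverse $\succE e$ immediately afterwards. Hence from $e \in p \cap q \cap E_B$ we obtain $\succE e \in p$ and $\succE e \in q$, i.e.\ $\succE e \in p \cap q$.

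Given this, the plan is to show that the $\varepsilon$-deviation from $p$ to $q$ leaves the aggregated edge flow on $\succE e$ unchanged. Since $f_{i,p\to q}(\varepsilon) = f + \varepsilon\,(1_{i,q} - 1_{i,p})$, the only path variables that change are $f_{i,p}$ (decreased by $\varepsilon$) and $f_{i,q}$ (increased by $\varepsilon$); as both $p$ and $q$ contain $\succE e$, the edge flow on $\succE e$ changes by $-\varepsilon + \varepsilon = 0$, so $f_{i,p\to q}(\varepsilon)_{\succE e} = f_{\succE e}$. The hypothesis $\varepsilon \in [0, f_{i,p}]$ is needed only to ensure that $f_{i,p\to q}(\varepsilon)$ is again a nonnegative flow (and for $\varepsilon = 0$ or $p=q$ the claim is trivial).

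Finally, by the definition of the boarding-edge cost in \eqref{eq:bs-cost}, the value $c_e(g)$ depends on a flow $g$ only through the scalar $g_{\succE e}$, via the comparison with the threshold $\capacity_{\succE e}$. Since $f_{i,p\to q}(\varepsilon)_{\succE e} = f_{\succE e}$, both flows fall into the same case of \eqref{eq:bs-cost}, and therefore $c_e(f_{i,p\to q}(\varepsilon)) = c_e(f)$.

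The argument is elementary; the only step needing care is the first one, justifying that every path through a boarding edge $e$ continues through $\succE e$, which rests entirely on the structure of the time-expanded network (departure nodes have driving edges as their only outgoing edges). Once that is established, the remainder is a single flow-conservation computation followed by a case inspection of \eqref{eq:bs-cost}.
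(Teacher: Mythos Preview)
Your argument is correct and follows exactly the same route as the paper's proof: observe that $c_e$ depends only on the flow value on $\succE e$, and that $\succE e \in p \cap q$ so this flow value is unchanged under the deviation. You merely spell out in more detail why $\succE e$ must lie on both paths (departure nodes have a unique outgoing driving edge), which the paper leaves implicit.
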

\begin{proof}
    The function $c_e$ depends only on the flow value on $e^+$. Since $e^+$ also lies in $p \cap q$, this flow value is the same in $f$ and in $f_{i,p\to q}(\varepsilon)$.
\end{proof}

\begin{notation}
    For a path $p$ and a node $v$ occurring in $p$, we denote the prefix of $p$ up to this node by $p_{\leq v}$ and the suffix of $p$ starting from $v$ by $p_{\geq v}$.
\end{notation}

\begin{lemma}\label{lem:prefix-indepence}
    Assume fixed departure times, and let $f$ be a flow, let $i \in I$, and let $p,q\in \pathsWithOutside_i$ with $c_{i,p}(f) < M$.
    Let $v$ be the last common node of $p$ and $q$, and let $q'$ be the concatenation of $p_{\leq v}$ and $q_{\geq v}$.
    Then it holds
    \[
        \liminf_{\varepsilon \downarrow 0} c_{i,q'}(f_{i,p\to q'}(\varepsilon)) \le \liminf_{\varepsilon \downarrow 0} c_{i,q}(f_{i,p\to q}(\varepsilon)).
    \]
\end{lemma}
\begin{proof}
    Let $\varepsilon > 0$.
    The cost of the subpaths $p_{\leq v}$ and $q_{\leq v}$, neglecting boarding edges, are both zero.
    Since the suffixes of $q$ and $q'$ coincide, the expressions $c_{i,q}(f_{i,p\to q}(\varepsilon))$ and $c_{i,q'}(f_{i,p\to q'}(\varepsilon))$ differ only in the cost of their corresponding boarding edges.
    Applying \cref{lem:cost on common edge} to the flow $f_{i,p\to q}(\varepsilon)$ and paths $q$ and $q'$, we see that the costs of the edges in $q$ after node~$v$ are equal under $f_{i,p\to q}(\varepsilon)$ and $f_{i,p\to q'}(\varepsilon)$.
    Moreover, by applying the \lcnamecref{lem:cost on common edge} to $f$ and the paths $p$ and $q'$, we conclude that $c_{e}(f_{i,p\to q'}(\varepsilon)) = c_{e}(f)$ for all $e \in p_{\leq v} = q'_{\leq v}$.
    Therefore, we have
    \begin{align*}
        c_{i,q}(f_{i,p\to q}(\varepsilon)) - c_{i,q'}(f_{i,p\to q'}(\varepsilon))
         & = \sum_{e \in q_{\leq v} \cap E_B} c_{e}(f_{i,p\to q}(\varepsilon)) - \sum_{e \in p_{\leq v} \cap E_B} c_{e}(f) \\
         & = \sum_{e \in q_{\leq v} \cap E_B} c_{e}(f_{i,p\to q}(\varepsilon)) \geq 0,
    \end{align*}
    where the last equation holds because of the assumption $c_{i,p}(f) < M$.
\end{proof}

\begin{proof}[Proof of \Cref{thm:existence-metro}]
    We show that the cost structure $c$ defined in \cref{subsec:BS} fulfils the conditions of \cref{thm:main}.
    This implies the existence of a user equilibrium as per \cref{thm:BS-equilibrium iff SC-equilibrium}.

    Clearly, $c_{i,e}$ is bounded and lower semi-continuous for all $i\in I$, $e \in E$.
    To show weak regularity, let $f\in\demFeasFlows$, $i\in I$\oxc and $p\in \pathsWithOutside_i$ with $f_{i,p} > 0${} fulfil  \[
        c_{i,p}(f) \leq \min_{q'\in\pathsWithOutside_i} \sum_{e\in p\cap q'} c_{e}(f) + \sum_{e\in q'\setminus p} \bar c_{e}(f).
    \]
    Applying the above equation for the outside option $q'=\outopt_i$ results in $c_{i,p}(f) \leq \pi_{\outopt_i} < M$.

    Let $q$ be an arbitrary path in $\pathsWithOutside_i$, let $v$ be the last common node of $p$ and $q$, and let $q'$ be the concatenation of $p_{\leq v}$ and $q_{\geq v}$.
    By \Cref{lem:prefix-indepence}, we have
    \[
        \liminf_{\varepsilon \downarrow 0} c_{i,q'}(f_{i,p\to q'}(\varepsilon)) \le \liminf_{\varepsilon \downarrow 0} c_{i,q}(f_{i,p\to q}(\varepsilon)),
    \]
    and it suffices to show that
        \[
        \sum_{e\in q'\cap p} c_e(f) + \sum_{e\in q'\setminus p} \bar c_e(f)
        \leq \liminf_{\varepsilon\downarrow0} c_{i,q'}(f_{i,p\to q'}(\varepsilon))
        = \sum_{e \in q'} \liminf_{\varepsilon \downarrow 0} c_e(f_{i,p \to q'}(\varepsilon)).
    \]
        As the cost of non-boarding edges is constant, we can restrict our analysis to boarding edges.
    Note that there is no boarding edge $e\in E_B$ on the path $q_{\geq v}$ for which $\succE e$ is also on $p$; otherwise $p$ and $q$ would not be disjoint after $v$.
    This means, for any boarding edge $e$ on the path $q_{\geq v}$, we have
    \[
        \liminf_{\varepsilon\downarrow0} c_{i,e}(f_{i,p\to q'}(\varepsilon))
        = \liminf_{\varepsilon\downarrow0} c_{i,e}(f+\varepsilon \cdot 1_{q'})
        = \left\{\begin{aligned}
            M, & \text{ if $f_{\succE e} \geq \nu_{\succE e}$,} \\
            0, & \text{ otherwise}
          \end{aligned}\right\}
        = \bar c_{i,e}(f).
    \]
    For a boarding edge $e$ on the subpath $p_{\leq v}$, \cref{lem:cost on common edge} implies
    $\liminf_{\varepsilon\downarrow 0} c_{i,e}(f_{i,p\to q'}(\varepsilon)) = c_{i,e}(f)$%
    , which concludes the proof.
\end{proof}

\subsection{Price of Stability}

The existence of user equilibria in the case of fixed departure times allows us to study the quality of these equilibria.
Well-studied measures of quality include the price of anarchy and the price of stability.
These compare the social cost of a user equilibrium to the system-optimal flow.
To this end, the \emph{social cost} of a flow $f$ is defined as the sum of the costs of all paths weighted by the flow on them,~i.e., \[
    \pi(f)\coloneqq \sum_{(i,p)\in\pathsWithOutside} f_{i,p} \cdot \pi_{i,p}.
\]
We call a feasible flow \emph{system-optimal} if it minimizes the social cost among all feasible flows.

For a given problem instance $I$ consisting of a network and a set of commodities, the \emph{price of stability} is defined as the social-cost ratio of the \emph{best} user equilibrium and a system optimum,~i.e., \[
    \mathrm{PoS}(I) \coloneqq \frac{\inf_{f\in \mathrm{EQ}(I)} \pi(f)}{\min_{f \in \feasFlows(I)} \pi(f)},
\]
where $\mathrm{EQ}(I)$ is the set of user equilibria for $I$ and $\feasFlows(I)$ is the set of feasible flows for instance $I$.
The price of stability is a lower bound on the so-called \emph{price of anarchy}, which is the social-cost ratio of the \emph{worst} user equilibrium compared to the system optimum.

Note that the price of stability and the price of anarchy differ only if the social cost is not unique across all user equilibria of a fixed instance.
In fact, such instances exist even for single-commodity networks with fixed departure times, as the following example shows:

\begin{example}\label{ex:non-uniqueness}
    Consider the network in \Cref{fig:non-uniqueness}.
    We assume that the single commodity has a demand of~$2$ with fixed departure time $\departureTimes = \{ \sbTime{1}{00} \}$ and parameters $\beta=1$ and $\gamma^+ = \gamma^- = 0$,
    and that all vehicles have a capacity of~$1$.
    Let $p_1$ denote the \colourthree{} path, $p_2$ the \colourone-\colourone-\colourthree{} path, $p_3$ the \colourone-\colourtwo{} path\oxc and $p_4$ the \colourfour{} path.
    Clearly, $\tau_{p_1} = \tau_{p_2} < \tau_{p_3} < \tau_{p_4}$.
    In a user equilibrium, the \colourthree{} edge is fully utilized as it leads to the earliest arrival at~$t$; the remaining particles try to use either~$p_3$ or, as a last resort,~$p_4$.
    More specifically, for every $\lambda\in[0,1]$, we can define a user equilibrium~$f^\lambda$ with $f^\lambda_{p_1} \coloneqq f^\lambda_{p_3} \coloneqq \lambda$ and $f^\lambda_{p_2} \coloneqq f^\lambda_{p_4} \coloneqq 1 - \lambda$.
    The social cost of~$f^\lambda$ can be computed as $\pi(f)\coloneqq \tau_{p_1} + \lambda \tau_{p_3} + (1-\lambda) \tau_{p_4}$.
    In particular, the user equilibrium~$f^1$ has strictly smaller social cost than the user equilibrium~$f^0$.
\end{example}

\newcommand{\nonuniqueness}{
 \begin{tikzpicture}[yscale=0.5, thick]
  \newcommand{\height}{6}
  \lifelineCompact{0}{$s$}
  \lifelineCompact{2}{$v$}
  \lifelineCompact{4}{$t$}

  \def\yAxisCoord{-1.5}
  \draw[->] (\yAxisCoord,\height - .5) -- (\yAxisCoord, -0.5) node[below] {Time};
  \foreach \y in {1,3,5}
  \draw (\yAxisCoord + 0.1,\height - \y) -- (\yAxisCoord -0.1,\height - \y) node[left] {\sbTime{\y}{00}};

  \draw[->, draw=color1] (0, \height - 1) -- (2, \height - 1.5);
  \draw[->, draw=color1] (2, \height - 1.5) -- (0, \height - 2);
  \draw[->, draw=color2] (2, \height - 2) -- (4, \height - 4.5);
  \draw[->, draw=color3] (0, \height - 2.5) -- (4, \height - 3.5);
  \draw[->, draw=color4] (0, \height - 4.5) -- (4, \height - 5.5);
 \end{tikzpicture}
}

In the following, we give an example that shows that both the price of anarchy and the price of stability are unbounded even if we restrict to single-commodity and fixed-departure-time instances.

\begin{proposition}
    Even for single-commodity instances with fixed departure time choice, the price of stability is unbounded.
\end{proposition}

\newcommand{\unboundedpoa}{
 \begin{tikzpicture}[yscale=0.5, thick]
  \newcommand{\height}{6}
  \lifelineCompact{0}{$s$}
  \lifelineCompact{2}{$v$}
  \lifelineCompact{4}{$t$}

  \def\yAxisCoord{-1.5}
  \draw[->] (\yAxisCoord,\height - .5) -- (\yAxisCoord, -0.5) node[below] {Time};
  \foreach \y in {1,3,5}
  \draw (\yAxisCoord + 0.1,\height - \y) -- (\yAxisCoord -0.1,\height - \y) node[left] {\sbTime{\y}{00}};

    \draw[->, draw=color1] (0, \height - 1) -- (2, \height - 2);
  \draw[->, draw=color1] (2, \height - 2) -- (4, \height - 5);
  \draw[->, draw=color4] (2, \height - 2) -- (0, \height - 3);
  \draw[->, draw=color4] (0, \height - 3) -- (4, \height - 4);
  \draw[->, draw=color2] (0, \height - 5) -- (4, \height - 6);
 \end{tikzpicture}
}

\begin{figure}[t]
    \begin{minipage}{0.45\textwidth}
        \newcommand{\figureCaption}{A network with two user equilibria with different social costs.}
                    \centering
            \nonuniqueness
            \caption{\figureCaption}
            \label{fig:non-uniqueness}
            \end{minipage}\hfill%
    \begin{minipage}{0.45\textwidth}
                    \centering
            \unboundedpoa
            \caption{A network illustrating unboundedness of the price of stability.}
            \label{fig:price-of-stability}
            \end{minipage}
\end{figure}

\begin{proof}
    We consider the network displayed in \Cref{fig:price-of-stability} with three vehicles, each with capacity $1$, and a single commodity with demand $2$, fixed departure time $\departureTimes\coloneqq\{ \sbTime{1}{00} \}$ and parameters $\beta=1$ and $\gamma^- = \gamma^+ = 0$.
    We assume that the outside option $\tau_{\outopt}$ is larger than the travel time of any path.

    The system-optimal flow sends one flow-unit along the \colourone{} path from~$s$ to~$t$ and one unit along the second \colourfour{} edge.
    The total social cost of this flow is $\pi(f)=1\cdot 4 + 1\cdot 5 = 9$.

    Here, the user equilibrium is unique:
    It sends one unit of flow along the \colourone-\colourfour $s$-$v$-$s$-$t$~path, and the remaining unit onto the \colourtwo path $p_r$.
    Thus, the price of stability equals $(4+\tau_{p_r})/9$.
    Increasing $\tau_{p_r}$ by delaying the arrival of the \colourtwo{} edge allows us to achieve an arbitrarily large price of stability.
\end{proof}

While the price of stability is unbounded in general networks, our computational study described in \Cref{sec:comp-study} suggests that this ratio is well-behaved in real-world networks.

\section{Computation of Equilibria}\label{sec:computation}

We continue by discussing the computation of user equilibria.
After describing an $\bigO(\abs{E}^2)$ time algorithm for single-commodity networks with fixed-departure times, we consider the general multi-commodity case, for which we outline a finite algorithm that computes a user equilibrium, if one exists, and that otherwise signals non-existence.
However, as we will show, determining whether a user equilibrium exists is NP-hard.
Thus, to compute multi-commodity equilibria in practice, we propose a heuristic based on insights gained by the characterization with the quasi-variational inequality. 

\subsection{An Efficient Algorithm for Single-Commodity Networks with Fixed Departure Times}\label{subsec:efficient-algo}

We begin with the description of an efficient algorithm for single-commodity networks with fixed departure times.
To reduce notational noise, we omit the index $i$ where applicable, e.g., we write $\pathsWithOutside$ instead of $\pathsWithOutside_i$.

\begin{definition}
    Let $p,q\in\pathsWithOutside$.
    We say that a driving edge $e\in p\cap q$ is a \emph{conflicting edge of $p$ and $q$} if its corresponding boarding edge $e_B$ lies either on $p$ or on $q$ (but not on both).

    Assume $p$ and $q$ have a conflicting edge, and let $e\in E$ be the first conflicting edge.
    We say \emph{$p$ has priority over $q$} if the boarding edge $e_B$ preceding $e$ lies on $q$ (and not on $p$).
    Let $\mathord{\prec} \subseteq \pathsWithOutside\times\pathsWithOutside$ denote this relation.
\end{definition}

\begin{proposition}\label{prop:minimal-path-no-bad-conflicting-edge}
    Let $p$ be a $\prec$-minimal path, i.e., there exists no $q\in\pathsWithOutside$ with $q\prec p$.
    Then for any $q\in\pathsWithOutside$, there is no conflicting edge $e$ of $p$ and $q$ for which the corresponding boarding edge $e_B$ lies on $p$.
\end{proposition}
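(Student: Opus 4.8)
The plan is to argue by contradiction. Suppose $p$ is $\prec$-minimal, yet there is some $q\in\pathsWithOutside$ and a conflicting edge $e$ of $p$ and $q$ whose boarding edge $e_B$ lies on $p$. (If $p$ or $q$ were the outside option there would be no conflicting edge at all, so both are genuine paths in $G$.) I would derive a contradiction by constructing a path $q'\in\pathsWithOutside$ for which $e$ is the \emph{first} conflicting edge of $p$ and $q'$ and still $e_B\notin q'$; since a conflicting edge has its boarding edge on exactly one of the two paths, this says precisely that $q'\prec p$, contradicting minimality. If $e$ is already the first conflicting edge of $p$ and $q$, take $q'\coloneqq q$; otherwise, let $e'$ be the conflicting edge of $p$ and $q$ that immediately precedes $e$. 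This is well defined because any two edges common to $p$ and $q$ occur in the same order along $p$ as along $q$---otherwise the two sub-paths between them would close up to a directed cycle in the acyclic graph $G$.

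For the construction, let $v$ be the head of $e'$, a node common to $p$ and $q$, which is an arrival node since $e'$ is a driving edge. Let $q'$ be the concatenation of $p_{\leq v}$ and $q_{\geq v}$. As it is a directed walk in the acyclic graph $G$, it repeats no vertex, so $q'$ is a genuine path from the origin of $p$ to the destination of $q$ and thus lies in $\pathsWithOutside$. The heart of the argument is the claim that \emph{the conflicting edges of $p$ and $q'$ are exactly the conflicting edges of $p$ and $q$ that lie strictly after $v$ on $q$}; since $e'$ is the last conflicting edge of $p$ and $q$ before $e$, these are precisely $e$ and its successors. Granting this, $e$ is the first conflicting edge of $p$ and $q'$, its boarding edge lies on $p$, and hence $e_B\notin q'$, so $q'\prec p$---the desired contradiction.

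Proving the claim is the main obstacle. I would partition the edges of $q'$ into those inherited from $p_{\leq v}$ and those inherited from $q_{\geq v}$; these sets are disjoint, because $v$ is an arrival node, so the edge of $p$ entering $v$ is a driving edge while the first edge of $q_{\geq v}$ leaving $v$ is an alighting or a dwelling edge. For a driving edge $\hat e$ inherited from $p_{\leq v}$, its predecessor in $q'$ equals its predecessor in $p$, so $\hat e_B\in p$ if and only if $\hat e_B\in q'$, and $\hat e$ is not conflicting for $p$ and $q'$. For a driving edge $\hat e$ inherited from $q_{\geq v}$ that also lies on $p$, I would show $\hat e_B\in q'$ if and only if $\hat e_B\in q$, so that $\hat e$ is conflicting for $p$ and $q'$ exactly when it is conflicting for $p$ and $q$. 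The ``if'' direction is immediate: $\hat e$ is a driving edge, hence not the first edge of $q_{\geq v}$, so its $q$-predecessor $\hat e_B$ also lies in $q_{\geq v}$. The ``only if'' direction is the delicate point: I must rule out that $\hat e_B$ is inherited from $p_{\leq v}$. But $e'$ precedes $\hat e$ on $q$, hence on $p$ too by the order-consistency observation, so the tail of $\hat e$---and therefore the edge $\hat e_B$ entering it, if it lies on $p$ at all---appears strictly after $v$ on $p$; in any case $\hat e_B\notin p_{\leq v}$, so $\hat e_B\in q'$ forces $\hat e_B\in q_{\geq v}\subseteq q$. Since the conflicting edges of $p$ and $q$ lying in $q_{\geq v}$ are exactly those after $e'$, namely $e$ and its successors, the claim follows, which completes the argument.
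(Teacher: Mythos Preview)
Your proof is correct and follows essentially the same approach as the paper: splice the prefix of $p$ onto the suffix of $q$ at the preceding conflicting edge $e'$ to obtain a path $q'$ for which $e$ becomes the \emph{first} conflicting edge, thereby contradicting $\prec$-minimality of $p$. The paper's proof asserts in one line that ``$e$ is the first conflicting edge of $q'$ and $p$'', whereas you carefully verify this by analysing the driving edges of $q'$ according to whether they are inherited from $p_{\leq v}$ or from $q_{\geq v}$; your more detailed treatment fills in exactly the step the paper leaves implicit.
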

\begin{proof}
    Let $q\in\pathsWithOutside$ be any path, and let $e$ be a conflicting edge of $p$ and $q$.
    As $q\not\prec p$, we know that $e$ is either not the first conflicting edge or the corresponding boarding edge does not lie on $p$.
    If $e$ is not the first conflicting edge, then let $e'$ denote the previous conflicting edge, and let $q'$ be the concatenation of the prefix of $p$ up to $e'$ and the suffix of $q$ starting from $e'$.
    Then $e$ is the first conflicting edge of $q'$ and $p$, and by $q'\not\prec p$ we know that the boarding edge of $e$ cannot lie on $p$.
\end{proof}

We first describe an efficient way to compute a $\prec$-minimal path ending in a given reachable node~$w$.
In fact, this can be done by a simple backward-search on the sub-graph of reachable nodes prioritizing dwelling edges over boarding edges, which is formalized in \cref{alg:compute-minimal-path}.

\begin{algorithm}[ht]
    \caption{Computes a $\prec$-minimal path} \label{alg:compute-minimal-path}
    \KwData{Time-expanded graph $G=(V,E)$, source station $s$, departure time interval $\departureTimes$, end node~$w$ of some path in $\pathsWithoutOutside$}
    \KwResult{A $\prec$-minimal path ending in $w$}
    $V'\gets $ nodes reachable in $G$ from any on-platform node $v$ of $s$ with $\theta(v)\in\departureTimes$\;
    $P \gets $ empty path\;
    $v \gets w$\;
    \While{$v\neq v^*$}{
     \lIf{$\exists e=u'v\in E\setminus E_B$ with $ u'\in V'$}{
      $u \gets u'$%
     }\lElse{
      \label{alg-line:choose-preceding} $u \gets $ any $u\in V'$ with $e=uv\in E$ %
     }
     $P \gets (uv) \circ P$\;
     $v \gets u$\;
    }
    \KwRet{$P$}
\end{algorithm}

\begin{lemma}\label{lemma:correctness-minimal-path-algo}
    For the end node~$w$ of any path in $\pathsWithOutside$, \Cref{alg:compute-minimal-path} computes a $\prec$-minimal path ending in $w$ in $\bigO(\abs E)$ time.
\end{lemma}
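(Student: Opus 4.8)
The statement has two parts: correctness (the algorithm returns a $\prec$-minimal path ending in $w$) and running time ($\bigO(|E|)$). I would start with the running time, which is the easy part: computing the reachable set $V'$ is a single graph search in $\bigO(|E|)$, and the \textbf{while} loop performs one iteration per edge of the constructed path $P$, each iteration doing only a local scan of the incoming edges of the current node $v$; since $P$ is a simple path in the acyclic graph $G$, the total work over all iterations is bounded by $\sum_{v} |\delta^+_v| = \bigO(|E|)$. One should also observe that the loop is well-defined: every node $v$ on $P$ other than $v^*$ that is reachable from $v^*$ has at least one incoming edge from a node in $V'$, so line~\ref{alg-line:choose-preceding} always finds such a $u$, and the search terminates because $G$ is acyclic and we only move to predecessors.

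\textbf{Correctness.} The key claim is that the path $P$ produced has the property: for every boarding edge $e_B \in P$, the on-platform node $v$ at which $e_B$ starts has \emph{no} incoming non-boarding edge from a reachable node. Indeed, whenever the algorithm is at a node $v$ and chooses a boarding edge as the predecessor (which is the only way a boarding edge enters $P$), the \textbf{if}-branch must have failed, meaning no non-boarding edge $u'v$ with $u' \in V'$ exists. Using this invariant I would show $P$ is $\prec$-minimal by contradiction: suppose $q \prec P$ for some $q \in \pathsWithOutside$, and let $e$ be the first conflicting edge of $P$ and $q$, with its boarding edge $e_B$ lying on $q$ but not on $P$. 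Write $e = vw'$ where $v$ is the departure node reached by $e_B$ on $q$; let $u$ be the node preceding $v$ on $P$. Since $e_B \notin P$, the edge $uv \in P$ is not the boarding edge $e_B$; but $v$ is a departure node, and in the time-expanded graph the only edges entering a departure node are boarding edges and dwelling edges. Hence $uv$ is a dwelling edge, so $u$ is an arrival node of the same vehicle $z$ as $e$. Now I trace further back along $P$ from $u$: the edges entering an arrival node are exactly driving edges, so the edge of $P$ entering $u$ is a driving edge $e'$ of some vehicle, and it must in fact be the driving edge of $z$ arriving at $u$ — but that is precisely $e$ reversed? No: $e'$ is the driving edge of $z$ into $u$, which precedes $v$; comparing with $q$, which also uses $e$ and hence the same driving-edge-into-$u$ and dwelling-edge sequence for $z$, we find $P$ and $q$ share these edges, contradicting that $e$ was the \emph{first} conflicting edge (the earlier shared driving edges of $z$ would either be conflicting — impossible by minimality — or would push the first conflict earlier). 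This is essentially the argument behind \Cref{prop:minimal-path-no-bad-conflicting-edge}, so in the writeup I would invoke that proposition: it already tells us a $\prec$-minimal path has no conflicting edge whose boarding edge lies on it, and the cleanest route is to show directly that \emph{any} path $P$ satisfying the invariant above is $\prec$-minimal, then check the algorithm's output satisfies the invariant.

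\textbf{Main obstacle.} The delicate point is the structural bookkeeping in the time-expanded graph: one must carefully use the fact that each node type (on-platform, departure, arrival) has a restricted set of incoming edge types, so that "the boarding edge $e_B$ is not on $P$ although the conflicting driving edge $e$ is" forces $P$ to enter the departure node via a dwelling edge, and hence forces $P$ to agree with $q$ on an even earlier stretch of vehicle $z$'s trip, contradicting the choice of $e$ as the \emph{first} conflicting edge. Getting this reduction-to-an-earlier-conflict argument airtight — essentially re-deriving \Cref{prop:minimal-path-no-bad-conflicting-edge} in the concrete setting of the algorithm's output — is where the real content lies; the rest is routine. I would therefore organize the proof as: (i) running time and well-definedness; (ii) the invariant "no incoming non-boarding reachable edge at the tail of any boarding edge of $P$"; (iii) this invariant implies $\prec$-minimality, via \Cref{prop:minimal-path-no-bad-conflicting-edge} together with the observation that a violation of minimality would require a boarding edge of $P$ whose tail does admit such an incoming edge (following the path $q$ backward through the shared driving edge).
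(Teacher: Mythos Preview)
Your overall plan matches the paper's: well-definedness and the $\bigO(|E|)$ bound, then correctness via the invariant that a boarding edge is put on $P$ only when its \emph{head} (the departure node) has no incoming non-boarding edge from $V'$. But your detailed correctness argument has the direction of $\prec$ reversed. By definition, $q \prec P$ means $q$ has priority over $P$, i.e., the boarding edge $e_B$ of the first conflicting edge lies on $P$, not on $q$: the path already in the vehicle is the one with priority. The contradiction therefore runs as follows: if $q \prec P$, then $e_B \in P$ and $e_B \notin q$; since both paths contain the driving edge $e$ with tail the departure node $v$, the path $q$ must enter $v$ via the dwelling edge (the only other incoming edge type at a departure node), whose tail lies on $q$ and hence in $V'$. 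That contradicts the invariant directly --- no tracing back along $P$ through arrival nodes and earlier driving edges is needed. Your middle paragraph instead assumes $e_B \in q \setminus P$, which is the harmless case $P \prec q$, and then tries to extract a contradiction from it; this is why the argument becomes tangled and you find yourself reaching for \Cref{prop:minimal-path-no-bad-conflicting-edge} (which cannot help here, as it presupposes $\prec$-minimality rather than establishing it).

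A related minor slip: the invariant concerns the head of each boarding edge on $P$ (a departure node), not ``the on-platform node at which $e_B$ starts'', which is its tail. With these two corrections your argument collapses to exactly the short proof the paper gives.
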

\begin{proof}
    To acknowledge that the algorithm is well posed, note that in line~\ref{alg-line:choose-preceding}, the node $v\neq v^*$ is always reachable from $v^*$ and thus, there is some preceding node $u$ that is also reachable from $v^*$.

    As the graph is acyclic, an edge can only be added once to the path $P$.
    Hence, the algorithm terminates after $\bigO(\abs E)$ time.

    It remains to show that the algorithm is correct.
    Let $p$ be the returned path, and let $q\in\pathsWithOutside$ be any path.
    Let $e=vw$ be the first conflicting edge of $p$ and $q$ -- if none exists, $q\not\prec p$ holds trivially.
    Clearly, the corresponding boarding edge of $e$ must lie on $q$: Otherwise, $q$ is evidence that there exists a non-boarding edge $e'=u'v$ with $u'\in V'$ and \Cref{alg:compute-minimal-path} would have chosen $e'$ for $p$ as well (contradicting that $e$ is a conflicting edge).
    Hence, also in this case, it holds $q\not\prec p$.
\end{proof}

In order to compute single-commodity equilibrium flows, we can now successively send flow along $\prec$-minimal and $\pi$-optimal paths.
In fact, we know that such a path exists, since for fixed departure times, the cost $\pi_p$ of a path $p$ only depends on its arrival time, i.e., there is some function $\pi': \R \to \R$ with $\pi'(\patharrival_p) = \pi_{p}$ for any path $p\in\pathsWithoutOutside$; see \cref{eq:pi-prime}.
Thus, we can simply choose an on-platform node~$w$ at $t$ that minimizes $\pi'(\theta(w))$ among the reachable on-platform nodes, and then find a $\prec$-minimal path ending in $w$ using \Cref{alg:compute-minimal-path}.

In every iteration, the flow on this path $p$ is increased until an edge becomes fully saturated.
Then, we reduce the capacity on the edges of $p$ by the added flow, remove zero-capacity edges, and repeat this procedure until the demand is met.
An explicit description is given in \Cref{alg:single-commodity} where we use the notation $\abs{f}\coloneqq \sum_{p\in\pathsWithOutside} f_p$.
\begin{algorithm}[ht]
    \caption{Computes user-equilibrium for single-commodity instances}\label{alg:single-commodity}
    \KwData{Time-expanded graph $G=(V, E)$, capacities $\nu\in\R_{>0}^{E_D}$, demand $\demand\in\R_{\geq 0}$, source and destination stations~$s, t$, departure time interval~$\departureTimes$, cost function $\pi':\R\to\R$\oxc and outside cost $\pi_{\outopt}\in\R$}
    \KwResult{A user equilibrium~$f$}%
    $f \gets \BFzero;$\quad$G'\gets G$;\quad $\nu'\gets \nu$\;
    \While{$\abs{f} < \demand$}{
      $W \gets $ on-platform nodes of $t$ reachable in $G'$ from on-platform nodes of $s$ during $\departureTimes$\;
    \If{$\inf_{w\in W} \pi'(\theta(w)) > \pi_{\outopt}$} {
        \KwRet{$f + (\demand - \abs{f})\cdot 1_{\outopt}$}\;
    }
     $p \gets$ $\prec$-minimal path w.r.t. $G'$ ending in $w$ for some $w\in\argmin_{w\in W}{\pi'(\theta(w))}$\;
     $\delta\gets\min(\{ \nu'_e \mid e\in p \}\cup\{\demand - \abs{f} \})$\;
     $f \gets f + \delta\cdot 1_p$\;
     $\nu' \gets \nu' - \delta\cdot 1_{p \cap E_D}$\;
     \For{$e\in p \cap E_D$ with $\nu'_e = 0$}{
      Remove $e$ and its incident edges from $G'$\;
     }
    }
    \KwRet{$f$}
\end{algorithm}
\begin{theorem}\label{thm:single_comp}
    For single-commodity networks, \Cref{alg:single-commodity} computes a user equilibrium in $\bigO(\abs E^2)$ time.
    The resulting user equilibrium uses at most $\abs E$ paths.
\end{theorem}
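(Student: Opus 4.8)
The plan is to verify three things in turn: that \Cref{alg:single-commodity} terminates within the stated time bound, that its output is supported on few paths, and that the output is a user equilibrium (the only non-routine part). For the first two, note that in each pass of the loop $\delta$ is either $\demand-\abs f$ — in which case $\abs f=\demand$ afterwards and the loop stops — or equals $\nu'_e$ for some driving edge $e$ on the chosen path, which is then deleted from $G'$; since $G'$ only ever loses edges, the second situation occurs at most $\abs{E_D}$ times. Hence there are at most $\abs{E_D}+1\le\abs E$ iterations (using $\abs{E_B}=\abs{E_D}$, as every departure node has exactly one incoming boarding edge and one outgoing driving edge), and since each iteration augments exactly one path, the output is supported on at most $\abs E$ paths. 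A single iteration does a backward reachability search from $v$, a scan over the on-platform nodes of $t$, one call to \Cref{alg:compute-minimal-path} (costing $\bigO(\abs E)$ by \Cref{lemma:correctness-minimal-path-algo}), and $\bigO(\abs E)$ bookkeeping, so the running time is $\bigO(\abs E^2)$. Feasibility follows from the invariant $\nu'_e=\nu_e-f_e\ge 0$ for every driving edge still present: it is preserved because $\delta\le\nu'_e$ for all driving $e$ on the chosen path, and once $\nu'_e=0$ the edge — and every augmenting path through it — is deleted, freezing $f_e=\nu_e$; at termination $\abs f=\demand$.

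For the equilibrium property I would exploit two structural facts. First, since the time-expanded graph is acyclic with node times, every $v$–$v'$ path has length $\theta(v')-\theta(v)$; hence the quantity $\theta'$ computed in the loop is simply the earliest time at which an on-platform node of $t$ reachable from $v$ in the current $G'$ is visited, and — because $G'$ only shrinks — $\theta'$ is non-decreasing across iterations. Second, each chosen path $p$ is $\prec$-minimal in the current $G'$, so by \Cref{prop:minimal-path-no-bad-conflicting-edge} no conflicting edge of $p$ with any path lying in that $G'$ has its boarding edge on $p$; informally, $p$ never boards a vehicle at a leg that some other feasible path merely rides through. Now fix $p$ with $f_p>0$, let $k$ be the last iteration augmenting $p$, so that $\tau_p=\theta'_k-\theta(v)$ and $p=p_k$ is $\prec$-minimal in $G'_k$, and suppose toward a contradiction that some $q\in\avPaths_p(f)$ has $\tau_q<\tau_p$. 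Since $q$ ends at an on-platform node of $t$ reached at time $\theta(v)+\tau_q<\theta'_k$, that node is not reachable from $v$ in $G'_k$, so $q$ uses an edge deleted before iteration~$k$.

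The argument then splits according to whether $q$ uses a driving edge that got saturated before iteration~$k$. If it does not, then no edge of $q$ has been deleted (a driving edge of $q$ is unsaturated, hence never deleted; and a boarding, alighting, or dwelling edge of $q$ is deleted only jointly with a driving edge of $q$), so $q$ is a $v$–$t$ path in $G'_k$ arriving before $\theta'_k$ — contradicting minimality of $\theta'_k$. Otherwise, a short case analysis of the first deleted edge on $q$ — using that $q$ is an available alternative, so it cannot board any saturated leg not already on $p$, and that a deleted boarding, alighting, or trailing dwelling edge would force an even earlier deleted edge on $q$ — shows that this first deleted edge must be the dwelling edge just before some driving edge $d$ that saturated in an iteration $\rho<k$; in particular $q$ reaches $d$ by staying on $d$'s vehicle, having boarded it fresh at an earlier leg whose driving edge lies on $p$ or still has residual capacity. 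One then compares $q$ with the path $p_\rho$ that saturated $d$: it is $\prec$-minimal in $G'_\rho\supseteq G'_k$ and uses $d$, and \Cref{prop:minimal-path-no-bad-conflicting-edge} constrains how $p_\rho$ and $q$ may share $d$'s vehicle. Combining this with the non-decreasing arrival times, one reroutes $q$ around the saturated segment — splicing it with $p_\rho$ at a common node — to obtain a faster-or-equal available alternative to $p$ that is "closer to lying in $G'_k$" (e.g.\ that uses fewer edges deleted before iteration~$k$), and iterating returns to the first case.

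The main obstacle is precisely this last rerouting. Unlike in models where available alternatives must have residual capacity everywhere, here an available alternative may legitimately ride through a saturated vehicle leg — it needs room only at the stops where it boards — so one cannot simply argue that $q$ itself lies in $G'_k$. Making the reduction rigorous requires combining the priority relation $\prec$ (through \Cref{prop:minimal-path-no-bad-conflicting-edge}) with the monotonicity of $\theta'$ to detach $q$ from each saturated leg it traverses without increasing its travel time, while checking that the spliced path remains an available alternative to $p$; this bookkeeping is the delicate step, and everything else is routine.
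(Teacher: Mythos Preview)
Your termination, running-time, and path-count arguments are correct and match the paper's. For the equilibrium property your case split is right and the first case is fine, but the rerouting in the second case is both unjustified as stated and unnecessary. Concretely: to invoke \Cref{prop:minimal-path-no-bad-conflicting-edge} for $p_\rho$ against $q$ you need $q$ to be a path in $G'_\rho$, but taking the first deleted edge \emph{in position along $q$} only guarantees that the prefix of $q$ up to that edge lies in $G'_k\subseteq G'_\rho$; edges further along $q$ may well have been deleted in rounds earlier than $\rho$, so the proposition does not directly apply. And even granting it, splicing with $p_\rho$ does not obviously make progress, since $p_\rho$ itself contains $d$ (and the dwelling edge before it), so the spliced path is no closer to lying in $G'_k$; you would also still have to verify it remains in $\avPaths_p(f)$.

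The paper avoids all of this by choosing the saturated edge differently. Let $l^*$ be the first \emph{round} at which some edge of $q$ gets deleted (so $q\subseteq G'_j$ for every $j\le l^*$, which is precisely what makes \Cref{prop:minimal-path-no-bad-conflicting-edge} applicable), and let $e$ be the first driving edge of $q$, in position, that is saturated at round $l^*$. Since $\tau_q<\tau_p$ forces $q\not\subseteq G'_k$, we have $l^*<k$ and hence $e\notin p$. If the boarding edge of $e$ lies on $q$, then $e\notin p$ together with $f_e=\nu_e$ already contradicts $q\in\avPaths_p(f)$. Otherwise $q$ dwells into $e$, and \Cref{prop:minimal-path-no-bad-conflicting-edge} --- now legitimately applied to each $p_j$, $j\le l^*$, against $q$ --- shows that every such $p_j$ using $e$ also dwells into it and hence uses the preceding driving leg $e''\in q$; thus $f^{l^*}_{e''}\ge f^{l^*}_e=\nu_e=\nu_{e''}$, contradicting the positional minimality of $e$. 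No rerouting is needed. The paper wraps this argument in an induction over the partial flows $f^l$, but your direct, non-inductive framing works just as well once $l^*$ and $e$ are chosen this way.
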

\begin{proof}
    To verify that the algorithm terminates, note that in each round after which the algorithm does not terminate, at least one driving edge is removed from the graph.
    Thus, there can be at most $\abs{E_D}$ many rounds.
    By \Cref{lemma:correctness-minimal-path-algo}, each round takes $\bigO(\abs{E})$ time.

    Let $f$ be the flow returned by the algorithm.
    Then $f$ is of the form \(
    f = \delta_1\cdot 1_{P_1} + \cdots + \delta_k\cdot 1_{P_k}
    \)
    where $k$ is the number of rounds, and $\delta_j > 0$, $P_j\in \pathsWithOutside$ are the values produced in round $j\in\{1,\dots,k\}$.
    By construction, we have $\pi_{P_j} \leq \pi_{P_l}$ and $P_j \neq P_l$ whenever $j< l$.
    We show by induction over $l$ that $f^l\coloneqq \sum_{j=1}^l \delta_j\cdot 1_{P_j}$ is a user equilibrium with demand~$\abs{f^l}$.
    The base case $l=0$ is trivial.
    Assume $f^l$ is a user equilibrium with demand~$\abs{f^l}$.
    Clearly, $f^{l+1}$ is capacity-feasible w.r.t. $\nu$.
    Let $p$ be a path with $f^{l+1}_p > 0$, and let $q\in\pathsWithOutside$ such that $\pi_q < \pi_p$.

    If $p=P_{j}$ for some $j<l+1$, then $f^l_p > 0$, and by induction hypothesis we have $q\notin \avPaths_p(f^l)$.
    As $\avPaths_p(f^{l+1})$ is a subset of $\avPaths_p(f^l)$, this implies $q\notin \avPaths_p(f^{l+1})$.
    Otherwise, we have $p=P_{l+1}$.
    Then $q$ is not present in $G'$ at the beginning of round $l+1$.
    Let $l^* \le l$ be the last round of the algorithm before which every edge of $q$ was still present in the graph $G'$.
    Clearly, $\tau_q \geq \tau_{P_{l^*}}$, as otherwise $P_{l^*}$ would not have been chosen in iteration~$l^*$.
    There is a driving edge $e$ on $q$ for which $f^{l^*}_e = \nu_e$.
    Let $e$ be the first such edge on $q$. Since flow is never removed from $e$, it holds that $f^{l+1}_e = \nu_e$.
    As $e$ is removed from $G'$ in round~$l^*$, it cannot be contained in $P_{l+1}$.
    Therefore, switching from $P_{l+1}$ to $q$ immediately creates a capacity violation. Hence, if the boarding edge preceding $e$ lies on $q$, we have $q\notin \avPaths_{P_{l+1}}(f^{l+1})$.
    Assume the boarding edge does not lie on $q$, i.e., $q$ uses the dwelling edge $e'$ before traversing $e$.
    As $q$ was contained in $G'$ at rounds $l^j$, $j\in\{1,\dots,l^*\}$, we know by \Cref{prop:minimal-path-no-bad-conflicting-edge} that $e$ is not a conflicting edge of $q$ and $P^{j}$, and thus, if $P_j$ uses $e$, it must also use the same dwelling edge $e'$ and the previous driving edge~$e''$ of the vehicle. Hence, $f_{e''}^{l^*} = f_e^{l^*} = \nu_e = \nu_{e''}$.
    This contradicts the minimality of the position of $e$ in $q$.
\end{proof}

For general (aperiodic) schedules our algorithm is strongly polynomial in the input.
For compactly describable periodic schedules it is only pseudo-polynomial as it depends on the size of the time-expanded network.
The blow-up of the network depends on the ratio of time horizon and period length.

\begin{remark}
    The more general single-destination scenario, where we allow multiple commodities sharing a common destination station, can be reduced to the single-commodity case by introducing a so-called super-source node: For each original commodity we add a connection to the node of the commodity's station and start time, which is capacitated by the commodity's demand.
\end{remark}

\subsection{The General Multi-Commodity Case with Departure Time Choice}

The approach of the previous section fails for the general multi-commodity case as the set of paths $\bigcup_{i} \pathsWithOutside_i$ may not necessarily have a $\prec$-minimal element if there are commodities that do not share the same destination station (see~\Cref{fig:cyclic-behavior} for an example).
In the following, we describe a finite-time algorithm that works for multi-commodity networks and departure time choice.
As we have seen, user equilibria might not exist in the case of departure time choice, in which case the algorithm will terminate with a certificate of non-existence.
As the algorithm has an exponential runtime, we propose a heuristic for computing multi-commodity equilibria in practice.
Furthermore, we show that determining whether a user equilibrium exists is NP-hard.

\subsubsection{A finite-time algorithm} \label{subsec:finite-time algorithm}

In the following, we describe a finite-time algorithm for computing exact multi-commodity user equilibria.
Assuming an equilibrium $f$ exists, the idea is to guess the subset~$E_S$ of driving edges that are saturated, i.e., $E_S = \{ e \in E_D \mid f_e = \nu_e \}$.
If a user equilibrium saturating these edges exists, it can be found by solving a set of linear constraints.

\newcommand*{\flowset}{\mathcal F}
More specifically, we define the following set of feasible flows
\begin{equation}
    \flowset(E_S) \coloneqq \left\{
        f\in\demFeasFlows
        \middle|
        \begin{array}{c l}
            f_e = \nu_e,                  & \text{ for $e\in E_S$,}                               \\
            f_e \leq \nu_e,               & \text{ for $e \in E_D \setminus E_S$,}                \\
            f_{i,p} = 0,                  & \text{ for $i\in I$, $p\in \pathsWithOutside_i(E_S)$}
        \end{array}
    \right\},
\end{equation}
where $\pathsWithOutside_i(E_S)$ is the set of paths $p\in\pathsWithOutside_i$ for which there exists a better alternative path $q\in\pathsWithOutside_i$ with respect to the saturated edge set $E_S$.
More specifically, \[
    p\in \pathsWithOutside_i(E_S) \;:\!\Longleftrightarrow\; \exists q\in \pathsWithOutside_i : \pi_{i,q} < \pi_{i,p} \,\land\, \forall e\in E_B\cap q: e^+ \notin E_S \lor e^+ \in p.
\]

\begin{lemma}\label{lem:finite-algo}
    The set of user equilibria coincides with the union of $\flowset(E_S)$ over all $E_S\subseteq E_D$.
\end{lemma}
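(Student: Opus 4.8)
The plan is to prove the two set inclusions separately, in each case reducing the argument to a direct matching between the purely combinatorial condition defining $\pathsWithOutside_i(E_O)$ and the available-alternative relation $q \in \avPaths_{i,p}(f)$; no fixed-point or compactness input is needed here, since existence of equilibria is already in hand (\cref{thm:existence-metro}).

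For the inclusion $\bigcup_{E_O,\varepsilon}\flowset(E_O,\varepsilon)\subseteq\{\text{user equilibria}\}$, I would fix $E_O\subseteq E_D$, $\varepsilon>0$ and $f\in\flowset(E_O,\varepsilon)$, and first check that $f$ is feasible: demand-feasibility is part of $f\in\demFeasFlows$, and capacity-feasibility follows since $f_e=\nu_e$ on $E_O$ and $f_e\le\nu_e-\varepsilon\le\nu_e$ on $E_D\setminus E_O$; moreover these two constraint families show that a driving edge $e$ satisfies $f_e=\nu_e$ if and only if $e\in E_O$. Now I take $i\in I$ and $p\in\pathsWithOutside_i$ with $f_{i,p}>0$ and argue by contradiction: if some $q\in\avPaths_{i,p}(f)$ had $\tau_q<\tau_p$, then the characterization of available alternatives in terms of boarding capacities (stated just before \cref{def:sc-equilibrium}) gives $f_{e^+}<\nu_{e^+}$, hence $e^+\notin E_O$, for every boarding edge $e\in E_B\cap q$ with $e^+\notin p$; but then $q$ witnesses $p\in\pathsWithOutside_i(E_O)$, so the third constraint family of $\flowset(E_O,\varepsilon)$ forces $f_{i,p}=0$ --- a contradiction. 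Hence $f$ is a user equilibrium.

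For the reverse inclusion, given a user equilibrium $f$ I would set $E_O\coloneqq\{e\in E_D\mid f_e=\nu_e\}$ and, if $E_D\setminus E_O\ne\emptyset$, $\varepsilon\coloneqq\min_{e\in E_D\setminus E_O}(\nu_e-f_e)$ --- which is attained and strictly positive because $E_D$ is finite and $f_e<\nu_e$ for each such $e$ by capacity-feasibility --- and $\varepsilon\coloneqq1$ otherwise. The first two constraint families of $\flowset(E_O,\varepsilon)$ then hold by feasibility of $f$ and the choice of $E_O$, $\varepsilon$. For the third, let $p\in\pathsWithOutside_i(E_O)$ and suppose, for contradiction, that $f_{i,p}>0$; pick the faster path $q\in\pathsWithOutside_i$ with $\tau_q<\tau_p$ from the definition of $\pathsWithOutside_i(E_O)$. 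Every boarding edge $e\in E_B\cap q$ with $e^+\notin p$ then has $e^+\notin E_O$, hence $f_{e^+}<\nu_{e^+}$, while every boarding edge $e\in E_B\cap q$ with $e^+\in p$ has $f_{e^+}\le\nu_{e^+}$ by capacity-feasibility; since $f_{i,p}>0$, the recalled characterization yields $q\in\avPaths_{i,p}(f)$, and together with $\tau_q<\tau_p$ this contradicts the equilibrium condition at $p$. Hence $f_{i,p}=0$ and $f\in\flowset(E_O,\varepsilon)$.

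I do not expect a genuine obstacle: the lemma is essentially a bookkeeping reformulation. The only points requiring care are the two-way translation between ``$e^+\notin E_O$'' and ``$f_{e^+}<\nu_{e^+}$'' for driving edges --- legitimate precisely because $\flowset(E_O,\varepsilon)$ pins down exactly which driving edges are saturated, and because in the converse direction $E_O$ is chosen to be the saturated set --- the degenerate case in which every driving edge is saturated, where $\varepsilon$ may be taken to be any positive number, and the fact that both uses of the available-alternative characterization happen under the hypothesis $f_{i,p}>0$, so that the equivalence between $q\in\avPaths_{i,p}(f)$ and the boarding-capacity condition applies.
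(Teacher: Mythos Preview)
Your proposal is correct and follows essentially the same route as the paper's proof: both arguments hinge on the observation that, for a feasible flow $f$ with $E_O=\{e\in E_D\mid f_e=\nu_e\}$, a path $p$ has a strictly faster available alternative if and only if $p\in\pathsWithOutside_i(E_O)$, and then verify the two inclusions directly. Your write-up is somewhat more explicit about edge cases (the saturated-set identification for arbitrary $E_O$, the degenerate case $E_D\setminus E_O=\emptyset$, and the hypothesis $f_{i,p}>0$ needed for the available-alternative characterization), but there is no substantive difference in strategy.
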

\begin{proof}
    Note that for any feasible flow $f$ and path $p\in \pathsWithOutside_i$, there exists some $q\in\avPaths_{i,p}(f)$ with $\pi_{i,q} < \pi_{i,p}$ if and only if $p\in\pathsWithOutside_i(E_S^f)$ with $E_S^f \coloneqq \{ e\in E_D \mid f_e = \nu_e \}$.
    Thus, $f$ is a user equilibrium if and only if $f$ is feasible and $f_{i,p}=0$ holds for all $p\in \pathsWithOutside_i(E_S^f)$, $i \in I$.

    Now, if $f$ is a user equilibrium, it is clearly contained in $\flowset(E_S^f)$.
    Conversely, for a given set $E_S$ and a flow $f\in \flowset(E_S)$, we know that $E_S \subseteq E_S^f$ and therefore $\pathsWithOutside_i(E_S) \supseteq \pathsWithOutside_i(E_S^f)$.
    This implies that $f_{i,p}=0$ holds for all $p\in \pathsWithOutside_i(E_S^f)$, and thus $f$ is a user equilibrium.
\end{proof}

We can check the feasibility of $\flowset(E_S)$ for every subset $E_S$ of $E_D$ in finite time.
If a user equilibrium exists, we will find it; otherwise, we can terminate with the certificate that no user equilibrium exists.

\begin{corollary}
    The procedure described above checks in finite time, whether a user equilibrium exists or not, and returns one, if it exists.
\end{corollary}

\subsubsection{NP-Hardness}\label{subsec:NP-hardness}

\newcommand{\ttrue}{\textsc{true}}
\newcommand{\tfalse}{\textsc{false}}

\newcommand*{\myproblem}[4]{%
        \begin{problem}{#1}\label{#2}
            \begin{tabularx}{\textwidth}{rX}
                \emph{Input:} & #3 \\
                \emph{Question:} & #4
            \end{tabularx}
        \end{problem}%
}

In this section, we address the computational complexity of computing multi-commodity user equilibria.
While \Cref{thm:existence-metro} guarantees the existence of a user equilibrium for the case of fixed departure times, we will show in this section that it is NP-hard to decide whether a user equilibrium exists when allowing departure time choice.

\myproblem{UE-DTC}{prob:ue-dtc}{A time-expanded graph $G$, a finite set of commodities $I$.}{Is there a user equilibrium?}

Even when restricting to fixed departure time instances, we show that it is NP-hard to decide whether a user equilibrium exists that fulfils certain properties.
In particular, we will show that the following problems are NP-hard as well:

\myproblem{UE-IN}{prob:inside-ue}{A time-expanded graph $G$, a finite set of commodities $I$ with fixed departure times.}{Is there a user equilibrium in which no particle uses its outside option?}

\myproblem{UE-OPT}{prob:opt-ue}{A time-expanded graph $G$, a finite set of commodities $I$ with fixed departure times.}{Is there a user equilibrium that is also a system optimum?}

\myproblem{UE-SCT}{prob:sct}{A time-expanded graph $G$, a finite set of commodities $I$ with fixed departure times, threshold $C$.}{Is there a user equilibrium with social cost at most $C$?}

We show NP-hardness by polynomially reducing 3-SAT to these problems.
A problem instance of 3-SAT consists of a set of $n$ boolean variables $x_1,\dots, x_n$ and a set of $m$ clauses $C_1,\dots,C_m$ where each clause $C_j$ is a disjunction of up to three literals (a variable $x_i$ or its negation $\overline{x_i}$).
The associated question is whether there exists a variable assignment such that all clauses are satisfied.

Given such a 3-SAT instance, we now construct a time-expanded graph $G$ and a set of commodities~$I$ (both polynomial in the input size of the 3-SAT instance).
Without loss of generality, we may assume that no clause contains a variable and its negation at the same time: These clauses are always fulfilled and may be discarded.

For each variable $x_i$ we define a commodity with a demand volume of $1$ and with new origin and destination stations $s_{x_i}$ and $t_{x_i}$, respectively.
Furthermore, for each variable, we introduce two vehicles -- a \colourthree and a \colourtwo one -- that both start at $s_{x_i}$ and end at $t_{x_i}$ at the same times, and both have a capacity of~$1$.
Here, the \colourthree{} and the \colourtwo{} vehicles represent the states in which $x_i$ is set to \ttrue{} and \tfalse, respectively.
The journey of the two vehicles between their departure at $s_{x_i}$ and their arrival at $t_{x_i}$ will be defined later.
The commodity has the fixed departure time matching the departure time of the vehicles.
The path costs are given by their travel time, i.e., $\pi_{i,p}\coloneqq \tau_p$ for all $p\in\pathsWithoutOutside$, and we set the cost of the outside option for the commodity to some number larger than the maximum travel time of any path in $\pathsWithoutOutside$.
As the capacity of both vehicles is $1$, this commodity will never use its outside option in any user equilibrium.
\Cref{fig:variables-partial-network} illustrates the described partial network for $n=4$ variables.

\newcommand{\variablegadgets}{
 \begin{tikzpicture}[y=0.7cm, x=1.1cm, thick]
  \newcommand{\height}{1.5}
  \newcommand{\depxoffset}{0.3}

  \lifelineCompactC{3}{$t_{x_1}$}
  \lifelineCompactC{5}{$t_{x_2}$}
  \lifelineCompactC{7}{$t_{x_3}$}
  \lifelineCompactC{9}{$t_{x_4}$}

  \begin{scope}[yshift=3cm]
      \lifelineCompactC{3}{$s_{x_1}$}
      \lifelineCompactC{5}{$s_{x_2}$}
      \lifelineCompactC{7}{$s_{x_3}$}
      \lifelineCompactC{9}{$s_{x_4}$}
  \end{scope}

  \foreach \x in {3,5,7,9} {
      \draw[draw=color2] (\x, 4.25) to[out=225, in=90] (\x - 0.5, 3.25);
      \draw[draw=color2, dashed] (\x - 0.5, 3.25) to (\x - 0.5, 1.25);
      \draw[->, draw=color2] (\x - 0.5, 1.25) to[out=-90, in=135] (\x, 0.25);

      \draw[draw=color3] (\x, 4.25) to[out=-45, in=90] (\x + 0.5, 3.25);
      \draw[draw=color3, dashed] (\x + 0.5, 3.25) to (\x + 0.5, 1.25);
      \draw[->, draw=color3] (\x + 0.5, 1.25) to[out=-90, in=45] (\x, 0.25);
  }
 \end{tikzpicture}
}

\begin{figure}[hb]
            \centering
        \variablegadgets
        \caption{Origins and destinations for the commodities induced by the variables.}\label{fig:variables-partial-network}
    \end{figure}

We carefully design our network such that the following property will be satisfied:
    \begin{enumerate}[label=(\Alph*)]
        \item\label{inv:no-mixing} For every variable $x_i$, every path from $s_{x_i}$ to $t_{x_i}$ (excluding the outside option) either exclusively uses the corresponding \colourthree{} vehicle or exclusively uses the corresponding \colourtwo{} vehicle.
    \end{enumerate}

This invariant helps us to ensure that no particle of variable $x_i$ mixes between the \ttrue{} and \tfalse{} states nor (directly) influences any other vehicles on their way to their destination.

We now construct a gadget with a corresponding commodity for each clause $C_j$ of the 3-SAT instance that will use the green and red vehicles of the variables in such a way that the clause is fulfilled if and only if in the corresponding equilibrium no particle of the clause's commodity uses its outside option.
For this, we use an adaptation of the network in \Cref{fig:price-of-stability} where we interpret the last, red vehicle as the outside option of the commodity.
Recall that, in this setting the demand of the single commodity is $2$ units and the unique user equilibrium is given by sending one unit of flow along the \colourone{}-\colourfour{} $s$-$v$-$s$-$t$ path and the remaining flow along its outside option.
However, if the journey from $v$ back to $s$ was blocked (e.g. if the \colourfour{} vehicle started prior to $v$ and if it was already fully occupied when arriving at $v$), then the particles of the considered commodity would split between the \colourone{} $s$-$v$-$t$-path and the \colourfour{} $s$-$t$-path with $1$ unit each; thus no particle would have to use the outside option.
The idea is now to use the green and red vehicles of the variables to block the journey from $v$ back to $s$ depending on their occurrence in the clause.

More specifically, for a given clause~$C_j$ we define a commodity with a demand volume of $2$ and introduce a set of new stations $s_{C_j}$, $v^0_{C_j}$, \dots, $v^n_{C_j}$, $t_{C_j}$ (arranged from left to right), where $s_{C_j}$ and $t_{C_j}$ serve as the origin and destination stations of the new commodity, respectively.
The gadget is placed (temporally) between the departure at $s_{x_i}$ and arrival at $t_{x_i}$ of the variables' vehicles and thus may modify the journey of these vehicles within the gadget's relevant time period.
We introduce two vehicles (of capacity $1$): a \colourone vehicle leading from $s_{C_j}$ via $v^n_{C_j}$ to $t_{C_j}$ and a \colourfour vehicle leading from $v^0_{C_j}$ via $s_{C_j}$ to $t_{C_j}$ such that the \colourfour vehicle arrives earlier at $t_{C_j}$ than the \colourone vehicle.

Next, we construct a path that connects particles arriving at $v^n_{C_j}$ on the \colourone vehicle to the \colourfour vehicle departing from $v^0_{C_j}$.
This path uses $n$ different vehicles, each covering a step from $v^i_{C_j}$ to $v^{i-1}_{C_j}$:
If the variable $x_i$ appears as a positive literal in the clause, we use the \colourthree{} vehicle of $x_i$ for this step; if $x_i$ appears as a negative literal, we use $x_i$'s \colourtwo{} vehicle; otherwise we add a new vehicle of capacity~$1$.
Finally, the path costs of this commodity coincide with the paths' travel times, i.e., $\pi_{C_j,p} = \tau_{C_j,p}$, and the outside option cost is some time larger than the maximum of these travel times.
The set of feasible departure times $\Theta_i$ will be specified later.
\Cref{fig:clause-gadget} illustrates the described gadget for a sample clause $C_j= (x_1\lor \overline{x_2}\lor\overline{x_4})$ with $n=4$.

\newcommand{\clausegadget}{
 \begin{tikzpicture}[y=0.7cm, x=1.1cm, thick]
  \newcommand{\height}{8}
  \newcommand{\depxoffset}{0.3}

  \lifelineCompactC{0}{$s_{C_j}$}
  \lifelineCompactC{2}{$v^0_{C_j}$}
  \lifelineCompactC{4}{$v^1_{C_j}$}
  \lifelineCompactC{6}{$v^2_{C_j}$}
  \lifelineCompactC{8}{$v^3_{C_j}$}
  \lifelineCompactC{10}{$v^4_{C_j}$}
  \lifelineCompactC{12}{$t_{C_j}$}

  \draw[->, draw=color1] (0, \height - 1) -- (10, \height - 1.67);
  \draw[->, draw=color1] (10, \height - 1.67) -- (12, \height - 8);

  \draw[->, draw=color3, dash pattern=on 3pt off 3pt on 3pt off 3pt on 3pt off 3pt on 10000pt] (3.15, \height + 0.5) to [out=-90,in=135] (4, \height - 5.33);
  \draw[->, draw=color3] (4, \height - 5.33) -- (2, \height - 5.67);
  \draw[draw=color3] (2, \height - 5.67) to [out=-45,in=90] (3.15, \height - 8.5);
  \draw[->, draw=color3, dashed] (3.15, \height - 8.5) -- (3.15, \height - 8.5  - 1);

  \draw[draw=color2, dash pattern=on 3pt off 3pt on 3pt off 3pt on 3pt off 3pt on 10000pt] (2.85, \height + 0.5) -- (2.85, \height - 8.5);
  \draw[->, draw=color2, dashed] (2.85, \height - 8.5) -- (2.85, \height - 8.5  - 1);

  \draw[draw=color3, dash pattern=on 3pt off 3pt on 3pt off 3pt on 3pt off 3pt on 10000pt] (5.15, \height + 0.5) -- (5.15, \height - 8.5);
  \draw[->, draw=color3, dashed] (5.15, \height - 8.5) -- (5.15, \height - 8.5  - 1);

  \draw[->, draw=color2, dash pattern=on 3pt off 3pt on 3pt off 3pt on 3pt off 3pt on 10000pt] (4.85, \height + 0.5) to [out=-90,in=135] (6, \height - 4.33);
  \draw[->, draw=color2] (6, \height - 4.33) -- (4, \height - 4.67);
  \draw[draw=color2] (4, \height - 4.67) to [out=-45,in=90] (4.85, \height - 8.5);
  \draw[->, draw=color2, dashed] (4.85, \height - 8.5) -- (4.85, \height - 8.5  - 1);

  \draw[draw=color3, dash pattern=on 3pt off 3pt on 3pt off 3pt on 3pt off 3pt on 10000pt] (7.15, \height + 0.5) -- (7.15, \height - 8.5);
  \draw[->, draw=color3, dashed] (7.15, \height - 8.5) -- (7.15, \height - 8.5  - 1);

  \draw[draw=color2, dash pattern=on 3pt off 3pt on 3pt off 3pt on 3pt off 3pt on 10000pt] (6.85, \height + 0.5) -- (6.85, \height - 8.5);
  \draw[->, draw=color2, dashed] (6.85, \height - 8.5) -- (6.85, \height - 8.5  - 1);

  \draw[->, draw=black] (8, \height - 3.33) -- (6, \height - 3.67);

  \draw[draw=color3, dash pattern=on 3pt off 3pt on 3pt off 3pt on 3pt off 3pt on 10000pt] (9.15, \height + 0.5) -- (9.15, \height - 8.5);
  \draw[->, draw=color3, dashed] (9.15, \height - 8.5) -- (9.15, \height - 8.5  - 1);

  \draw[->, draw=color2, dash pattern=on 3pt off 3pt on 3pt off 3pt on 3pt off 3pt on 10000pt] (8.85, \height + 0.5) to [out=-90,in=135] (10, \height - 2.33);
  \draw[->, draw=color2] (10, \height - 2.33) -- (8, \height - 2.67);
  \draw[draw=color2] (8, \height - 2.67) to [out=-45,in=90] (8.85, \height - 8.5);
  \draw[->, draw=color2, dashed] (8.85, \height - 8.5) -- (8.85, \height - 8.5  - 1);

  \draw[->, draw=color4] (2, \height - 6.33) -- (0, \height - 6.67);
  \draw[->, draw=color4] (0, \height - 6.67) -- (12, \height - 7.33);

 \end{tikzpicture}
}

\begin{figure}[ht]
            \centering
        \clausegadget
        \caption{Gadget for the clause $C_j = (x_1 \lor \overline{x_2} \lor \overline{x_4})$.}\label{fig:clause-gadget}
    \end{figure}

The final network is then constructed by simply temporally stacking first all variables' origins, then one gadget for each clause, and finally the variables' destinations.

\begin{lemma}\label{prop:final-network-separable}
    The final network fulfils property~\ref{inv:no-mixing}.
\end{lemma}
\begin{proof}
    A path from $s_{x_i}$ to $t_{x_i}$ either uses the \colourthree{} vehicle or the \colourtwo{} vehicle when departing in $s_{x_i}$.
    In the gadgets of the clauses, it is never possible to alight from the used vehicle and board any other vehicle while still being able to reach $t_{x_i}$.
\end{proof}

We have an analogous property for the clause commodities.

    \begin{enumerate}[label=(\Alph*), resume]
        \item\label{inv:no-escaping-clause-flows} For every clause~$C_j$, every path from $s_{C_j}$ to $t_{C_j}$ (excluding the outside option) exclusively uses the edges in its associated gadget.
    \end{enumerate}

\begin{theorem}\label{thm:ue-dtc-np-hard}
    \Cref{prob:ue-dtc} is NP-hard.
\end{theorem}
\begin{proof}
    Consider the final network as described above and assume that the commodities which correspond to clauses have free departure time choice, i.e., $\departureTimes_i=\R$.
    For the commodity corresponding to clause $C_j$, there are three relevant paths:
    The \colourone{} path~$p_{j,1}$ from $s_{C_j}$ via $v^n_{C_j}$ to $t_{C_j}$ with the highest travel time,
    the zig-zag-path $p_{j,2}$ from $s_{C_j}$ via $v_{C_j}^n$ and $s_{C_j}$ to $t_{C_j}$ with the second-highest travel time,
    and the \colourfour{} path $p_{j,3}$ that directly connects $s_{C_j}$ to $t_{C_j}$ which comes with a later departure time and the least travel time.\par{}
    Note that in a user equilibrium, the zig-zag-path $p_{j,2}$ is never used:
    Otherwise, $p_{j,3}$ would be a better available path.

    \begin{claim}
        There exists a user equilibrium if and only if the given $3$-SAT instance is satisfiable.
    \end{claim}
    \begin{proofClaim}
        Let $f$ be a user equilibrium.
        We assign the variable $x_i$ the value \ttrue{} if exactly $1$ unit of flow boards its corresponding \colourthree{} vehicle at $s_{x_i}$, otherwise \tfalse{}.
        We now consider the flow in the gadget of a clause $C_j$.
        Note that the particles of the clause's commodity do not use their outside option, as otherwise at least one of the two paths $p_{j,1}$ or $p_{j,3}$ would be an available alternative with a lower cost.
        As observed above, the zig-zag-path $p_{j,2}$ does not carry any flow, and hence, the flow of the clause's commodity must split between the paths $p_{j,1}$ and $p_{j,3}$ each carrying exactly $1$ unit of flow.
        This also means that $p_{j,2}$ is not an available alternative to the slower path $p_{j,1}$.
        This is only the case if at least one driving edge on the path from $v_{C_j}^n$ to $v_{C_j}^1$ is fully occupied by flow from a different commodity.
        By Properties~\ref{inv:no-mixing} and \ref{inv:no-escaping-clause-flows}, the only flow using these edges comes from the commodities of the variables included in the clause.
        Therefore, one of the \colourthree{} and \colourtwo{} vehicles on the path, corresponding to some variable $x_i$, carries exactly $1$ unit of flow of $x_i$'s commodity.
        By Property~\ref{inv:no-mixing}, this means that this $1$ unit of flow must have boarded the same vehicle at $s_{x_i}$.
        Hence, the corresponding literal in the clause~$C_j$ (and thus the clause itself) is satisfied.

        Conversely, assume there is a variable assignment satisfying the $3$-SAT instance, and consider the flow $f$ defined as follows:
        For a variable $x_i$ we assign all flow of its commodity to its \colourthree{} path if $x_i$ is assigned the value \ttrue{}, and all flow to the \colourtwo{} path, otherwise.
        For a clause $C_j$, we send one unit of flow along $p_{j,1}$ and one unit of flow along $p_{j,2}$.
        This flow is clearly feasible.
        To verify the equilibrium condition, we only need to check particles of the clause commodities using the \colourone{} path $p_{j,1}$.
        Since there is a satisfied literal in the clause, the driving edge of the corresponding vehicle is fully occupied by the flow of the variable's commodity.
        Hence, this edge makes the zig-zag-path $p_{j,2}$ unavailable as an alternative to $p_{j,1}$.
        Thus, $f$ is in fact a user equilibrium.
    \end{proofClaim}

    This claim reduces $3$-SAT to the problem of deciding whether a user equilibrium (with departure time choice) exists, and 
    since the constructed network has polynomial size, the latter problem is NP-hard.
\end{proof}

\begin{theorem}
    \Cref{prob:inside-ue} is NP-hard.
\end{theorem}
\begin{proof}
    We use the same network as above, but we restrict the departure times of the commodities corresponding to the clauses to $\Theta_j = \{0\}$.
    It suffices to show that there exists a user equilibrium in which no particle uses its outside option if and only if the given 3-SAT instance is satisfiable.

    Let $f$ be a user equilibrium in which no particle uses its outside option.
    Again, we assign to the variable $x_i$ the value \ttrue{} if exactly $1$ unit of flow boards its corresponding \colourthree{} vehicle at $s_{x_i}$, otherwise we assign the value \tfalse{}.
    Consider a clause $C_j$ and the flow in its corresponding gadget.
    As no particle uses its outside option, the flow of the clause's commodity must split between the gadget's \colourone{} and \colourfour{} vehicles each carrying exactly $1$ unit of flow.
    This means, the zig-zag-path from $s_{C_j}$ via $v_{C_j}^n$ and $s_{C_j}$ to $t_{C_j}$ is not an available alternative to the path using only the \colourone{} vehicle.
    Following the same arguments of the proof of \Cref{thm:ue-dtc-np-hard}, this implies that the clause $C_j$ must be fulfilled, and the $3$-SAT instance is satisfiable.

    Conversely, assume that the 3-SAT instance is satisfiable, and let $\sigma$ be an assignment of the variables that satisfies all clauses.
    Just as above, we construct the flow $f$ by sending, for each variable, all particles along the \colourthree{} vehicle if $\sigma$ sets the variable to \ttrue{} and along the \colourtwo{} vehicle otherwise;
    for each clause, we send $1$ unit of flow along the \colourone, $s_{C_j}$-$v^n_{C_j}$-$t_{C_j}$-path and $1$ unit of flow along the faster \colourfour, $s_{C_j}$-$t_{C_j}$-path.

    This flow does not use any outside option, and it remains to verify the equilibrium condition.
    For this, we only need to consider particles of the clause commodities using the \colourone{} path.
    There is a literal in the clause that is satisfied.
    As the clause $C_j$ is fulfilled, there is at least one driving edge in the zig-zag-path that is fully occupied.
    Therefore, this flow is a user equilibrium.
\end{proof}

Note that in the considered final network in the fixed departure time setting, a system-optimal flow always splits the flow of a clause's commodity between the \colourone{} and \colourfour{} vehicles with one unit each.
In particular, in this network, the set of user equilibria that do not use any outside option coincides with the set of user equilibria that are also system optima.
This proves the following corollary:

\begin{corollary}
    \Cref{prob:opt-ue} is NP-hard.
\end{corollary}

Since the system-optimal social cost can be computed in polynomial time by solving a linear program, this implies that the decision problem of whether a user equilibrium exists whose social cost is smaller or equal to a given value is also NP-hard:

\begin{corollary}
    \Cref{prob:sct} is NP-hard.
\end{corollary}

\subsubsection{Heuristic for computing multi-commodity equilibria}

The previous section shows that deciding whether a user equilibrium exists in a general multi-commodity setting is NP-hard.
Also, for practical applications, the algorithm described in \Cref{subsec:finite-time algorithm} is not tractable due to its exponential running time.
For this reason, we propose a heuristic for computing multi-commodity user equilibria in the following.
In \Cref{sec:comp-study}, we will evaluate the performance of this heuristic on large-scale real-world networks.

The heuristic works as follows: 
Start with some initial feasible flow $f\in\feasFlows$, e.g., by sending all flow along their outside option.
Then, iteratively, find a direction $d\in\R^{\pathsWithOutside}$ and change the flow along this direction while preserving feasibility, until an equilibrium is found.
More specifically, we replace $f$ with $f'=f+\lambda\cdot d$ where~$\lambda$ is the maximal value such that $f'$ is feasible.
\begin{definition}
    Let $f$ be a feasible flow.
    A direction $d\in\R^{\pathsWithOutside}$ is called
    \begin{itemize}
        \item \emph{balanced} if $\sum_{p\in\pathsWithOutside_i} d_{i,p} = 0$ for $i\in I$, and
        \item \emph{feasible for $f$} if the flow $f+\lambda\cdot d$ is feasible for small enough $\lambda > 0$.
    \end{itemize}
\end{definition}

Clearly, the choice of the direction is essential for this heuristic to approach an equilibrium.
The characterization in \Cref{thm:QVI} indicates using a direction $d$ such that $f+\lambda\cdot d$ is a deviation violating the quasi-variational inequality.
This means that we should use some direction $d\coloneqq 1_{i,q} - 1_{i,p}$ for some paths $p,q\in\pathsWithOutside_i$ with $f_{i,p} > 0$ for which~$q$ is a better available alternative, i.e., $q\in \avPaths_{i,p}(f)$ and $\pi_q < \pi_p$.
However, not all such directions are feasible; even worse, sometimes no feasible direction is of this form.
Therefore, our approach is to start with such a direction~$d$ and, if necessary, transform it to make it feasible.
See~\Cref{alg:heuristic-multi-commodity} for a sketch of the method described so far.

\begin{algorithm}[ht]
    \caption{Heuristic for computing multi-commodity equilibria}
    \label{alg:heuristic-multi-commodity}
    \KwData{Time-expanded graph $G=(V,E)$, finite set of commodities $I$}
    \KwResult{A user equilibrium}
    Initialize $f$ by sending all flow along outside options\;
    \While{$\exists i\in I, p\in\pathsWithOutside_i,  q\in \avPaths_{i,p}(f)$ with $\pi_{i,q} < \pi_{i,p}$}{
     $d\gets 1_{i,q} - 1_{i,p}$\;
     (Potentially) transform $d$ to a feasible direction of $f$\;
     $f\gets f+\lambda \cdot d$ with $\lambda>0$ maximal such that $f+\lambda \cdot d$ is feasible\;
    }
    \KwRet{$f$}
\end{algorithm}

If this heuristic terminates, it provides an equilibrium, but termination is not always guaranteed, as we will see later.
We first describe how we achieve feasibility of the direction.
For this, a key observation is stated in the following proposition:

\begin{restatable}{proposition}{PropFeasibleDirectionBoardingEdge}\label{prop:feasible-direction-boarding-edge}
    Let $f$ be a feasible flow and $d$ a balanced direction that fulfils $f_{i,p} > 0$ whenever $d_{i,p} < 0$.
    Then, $d$ is a feasible direction for $f$ if and only if there exists no boarding edge $e$ such that $f_{e^+} = \nu_{e^+}$, $d_{e^+} > 0$ and ($f_e > 0$ or $d_{e} > 0$) hold.
\end{restatable}
\newcommand{\proofPropFeasibleDirectionBoardingEdge}[1][Proof]{
\begin{proof}[#1]
    Assume $d$ is a feasible direction for $f$, and let $e$ be any driving edge with $f_{e} = \nu_{e}$.
    Then, by the feasibility of $f+\lambda\cdot d$ for small enough $\lambda$, we must have $d_{e}\leq 0$.

    Assume now that $d$ is not a feasible direction for $f$.
    Then, there is some $i$ and a path $p$ such that $f_{i,p} + \lambda \cdot d_{i,p} < 0$ for all $\lambda > 0$, which is impossible due to our assumption on $d$, or there is a driving edge $e^+$ following some boarding edge $e$ such that $f_{e^+} + \lambda \cdot d_{e^+} > \nu_{e^+}$ for all $\lambda > 0$.
    Let $e^+$ be the first such driving edge in its corresponding vehicle.
    As $f$ is feasible, we must have $f_{e^+} = \nu_{e^+}$ and $d_{e^+} > 0$.
    If both $f_e = 0$ and $d_e\leq 0$ held, the driving edge $\tilde e^+$ of the previous stop of the same vehicle would fulfil $f_{\tilde e^+} = f_{e^+} = \nu_{e^+} = \nu_{\tilde e^+}$ and $d_{\tilde e^+}\geq d_{e^+} > 0$, a contradiction to the minimality of $e^+$.
\end{proof}
}
\proofPropFeasibleDirectionBoardingEdge

If $d=1_{i,q} - 1_{i,p}$ is an infeasible direction, we apply the following transformation:
As long as $d$ is infeasible, there exists a boarding edge such that $f_{e^+} = \nu_{e^+}$, $d_{e^+} > 0$\oxc and $f_{e} > 0 \,\lor\, d_e > 0$, and we repeat the following procedure:
Let $(i,p')$ be such that $p' \in \pathsWithOutside_i$ is a path containing $e$ either with positive flow $f_{i,p'} > 0$ or whose entry in the direction vector is positive, i.e., $d_{i,p'} > 0$.
We decrease $d_{i,p'}$ by $\delta\coloneqq d_e^+$ if $f_{i,p'} >0$, or by $\delta\coloneqq \min(d_e^+, d_{i,p'})$ otherwise.
Next, we determine a best path $q'\in\pathsWithOutside_i$ that does not use full driving edges, i.e., driving edges $\tilde e$ with $f_{\tilde e} = \nu_{\tilde e}$ and $d_{\tilde e} \geq 0$.
We increase $d_{i,q'}$ by $\min ( \{ \delta \} \cup \{ -d_e \mid e\in q', f_e = \nu_e \} )$, and afterwards decrease $\delta$ by the same amount.
We repeat this until $\delta$ is zero.
\Cref{alg:remove-overfilled-edges} describes this transformation of the direction in detail.

\begin{algorithm}[ht]
    \caption{Establishing feasible directions}
    \label{alg:remove-overfilled-edges}
    \KwData{Time-expanded graph with outside options, feasible flow $f$, balanced direction~$d\in\Z^\pathsWithOutside$ s.t. $\forall (i,p): d_{i,p} < 0\implies f_{i,p} > 0$}
    \KwResult{A feasible direction}
    \While{$\exists e\in E_B$ with $  f_{e^+} = \nu_{e^+} \,\land\, d_{e^+} > 0 \,\land\, (f_e > 0 \,\lor\, d_e > 0)$}{
     $(i,p)\gets$ any commodity $i$ and path $p$ containing $e$ with $f_{i,p} > 0$ or $d_{i,p} > 0$\;
    \label{alg-line:set-delta} $\delta\gets \begin{cases}
            d_{e^+},              & \text{if $f_{i,p} > 0$}, \\
            \min(d_{e^+}, d_{i,p}), & \text{otherwise.}
        \end{cases}$\;
     Decrease $d_{i,p}$ by $\delta$\;
     \While{$\delta > 0$}{
      $q\gets$ best alternative to $p$ not containing any $e'\in E_D$ with $f_{e'}=\nu_{e'} \,\land\, d_{e'} \geq 0$\;
      $\delta'\gets \min ( \{ \delta \} \cup \{ -d_{e'} \mid e'\in q, f_{e'} = \nu_{e'} \} )$\;
      Increase $d_{i,q}$ by $\delta'$\;
      Decrease $\delta$ by $\delta'$\;
     }
    }
    \KwRet{$d$}
\end{algorithm}

\vspace{-1ex}
\begin{restatable}{proposition}{PropRemoveOverfilledEdgesCorrect}\label{prop:remove-overfilled-edges-correct}
    \Cref{alg:remove-overfilled-edges} transforms any direction $d$ that fulfils $f_{i,p} > 0$ whenever $d_{i,p} < 0$ to a feasible direction.
\end{restatable}
\newcommand{\proofPropRemoveOverfilledEdgesCorrect}[1][Proof]{
\begin{proof}[#1]
    \Cref{prop:feasible-direction-boarding-edge} implies correctness.
    For termination, note that the inner loop always terminates as a path $q$ is selected at most once.
    For edges $e'\in E_D$ with $f_{e'} = \nu_{e'}$, while $d_{e'}$ is positive, it is monotonically decreasing with progression of the algorithm, and once it is non-positive it will never become positive again.
    In the main loop, every pair $(e, p)$ is considered at most once:
    Clearly, this is the case if $\delta$ is set to $d_{e^+}$ in line~\ref{alg-line:set-delta} as then the edge $e$ will never be considered again because the direction $d_{e^+}$ will never become positive again.
    Otherwise, $\delta$ is set to $d_{i,p}$ and, subsequently, $d_{i,p}$ is set to $0$.
    After that, $d_{i,p}$ will not become positive unless $d_{e^+}$ is non-positive.
\end{proof}
}
\proofPropRemoveOverfilledEdgesCorrect
In the remainder of this section, we analyse the heuristic in more detail:
We first discuss some undesired behaviour of the heuristic and then present a technique to reduce the input complexity.

\subsubsection*{Undesirable behaviour}

In some situations, the heuristic might apply changes along directions $d_1,\dots, d_k$ in a cyclic behaviour.
We distinguish between \emph{terminating} cycles, for which the heuristic breaks out of the cyclic behaviour after some finite but potentially large number of iterations, and \emph{non-terminating} cycles.
In practice, most terminating cycles can be detected and prohibited by changing the flow along the common direction $\sum_{i=1}^k d_i$, thereby skipping a potentially large number of iterations.
Non-terminating cycles, however, constitute a more serious problem.
We can detect these cycles, as their common direction $\sum_{i=1}^k d_i$ vanishes.
Randomizing the path selection in the main loop of the heuristic might help in breaking the cycle.
However, in some cases, even this is impossible, requiring us to restart the heuristic with a randomized path selection rule.

\begin{example}[A terminating cycle]

\newcommand{\smallsteps}{
 \begin{tikzpicture}[y=0.5cm, x=0.75cm, thick]
  \newcommand{\height}{6}

  \lifelineCompact{0}{$t_{1,3}$}
  \lifelineCompact{2}{$s_1$}
  \lifelineCompact{4}{$s_2$}
  \lifelineCompact{6}{$v$}
  \lifelineCompact{8}{$s_3$}
  \lifelineCompact{10}{$s_4$}
  \lifelineCompact{12}{$t_{2,4}$}

  \def\yAxisCoord{-1.5}
  \draw[->] (\yAxisCoord,\height - .5) -- (\yAxisCoord, -0.5) node[below] {Time};
  \foreach \y in {1,3,5}
  \draw (\yAxisCoord + 0.1,\height - \y) -- (\yAxisCoord -0.1,\height - \y) node[left] {\sbTime{\y}{00}};

    \draw[->, draw=color2] (2, \height - 1) -- (4, \height - 2);
  \draw[->, draw=color2] (4, \height - 3) -- (6, \height - 4);
  \draw[->, draw=color2] (6, \height - 5) -- (12, \height - 6);
  \draw[->, draw=color3] (10, \height - 1) -- (8, \height - 2);
  \draw[->, draw=color3] (8, \height - 3) -- (6, \height - 4);
  \draw[->, draw=color1] (6, \height - 5) -- (0, \height - 6);
 \end{tikzpicture}
}

Consider the network with four commodities in \Cref{fig:heuristic-eps-direction}:
\begin{figure}[htb]
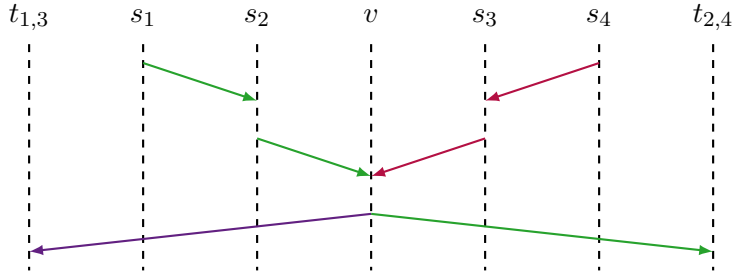

            \centering
        \smallsteps
        \caption{A network illustrating the occurrence of small step sizes in the heuristic}\label{fig:heuristic-eps-direction}
    \end{figure}
Commodity $i\in\{1,2,3,4\}$ has origin $s_i$ and destination $t_i$, and it has exactly one \qq{good} path and an outside-option not displayed in the figure.
There are three vehicles: the \colourone, the \colourthree\oxc and the \colourtwo vehicle.
Each commodity has a demand of $1$. The \colourone and the \colourtwo vehicles have a capacity of $1$ while the \colourthree vehicle has a capacity of $1-\varepsilon$ for some small $\varepsilon > 0$.

We force the selection of the first two feasible directions in the heuristic.
In the initial flow, all particles are sent along their outside option.
In the first iteration, we choose to move particles of commodity~$3$ from their outside option to their shorter path.
We can change our initial flow by $\lambda^{(1)} = 1-\varepsilon$ along this direction.
Secondly, we choose to move particles of commodity~$2$ from their outside option to their shorter path, and change the flow along this direction by $\lambda^{(2)}= 1$.

In iteration $3$, the only commodity that can move to their shorter path is commodity~$1$.
The corresponding direction is, however, not feasible, as commodity~$2$ would use an overfilled boarding edge.
Thus, applying \Cref{alg:remove-overfilled-edges} removes particles from the shorter path of commodity~$2$ at a rate of $1$.
Because the \colourone edge has a remaining capacity of $\varepsilon$, the flow can only be changed along this  resulting direction by $\lambda^{(3)}=\varepsilon$.

In iteration $4$, the only commodity that can move to their shorter path now is commodity~$4$:
By removing $\varepsilon$ flow from commodity~$2$ in the previous iteration, the \colourtwo $v$-$t_{2,4}$-edge has a remaining capacity of $\varepsilon$.
Again, the direction of moving particles of commodity~$4$ to their shorter route is not feasible, and hence by \Cref{alg:remove-overfilled-edges}, particles of commodity~$3$ are removed as well.
The flow is changed along this direction by again $\lambda^{(4)}=\varepsilon$.

Note that after that, we are faced with the same scenario as before iteration $3$; the only difference is that an $\varepsilon$ of flow was shifted away from the good paths of commodities~$2$ and $3$ to the good paths of commodities~$1$ and $4$.
Thus, the same directions of iterations~$3$ and $4$ are applied repeatedly until both commodities 1 and 4 are entirely sent along their good paths.
Hence, the heuristic needs at least $2\cdot (1-\varepsilon)/\varepsilon$ iterations before termination.
\end{example}

\begin{example}[A non-terminating cycle]

The heuristic might also run into cycles from which it does not recover.
A simple network demonstrating this behaviour is shown in \Cref{fig:cyclic-behavior}. All three vehicles have a capacity of $1$, and all three commodities have a demand of $1$. The (non outside-option) paths are displayed to the right.
Note that the flow that sends one unit along $p_2$ and all other particles along their outside option is a user equilibrium.

\newcommand{\inlinetikz}[1]{\begin{tikzpicture}[baseline,every node/.append style={anchor=base, text depth=0}, thick]%
 #1
\end{tikzpicture}}

\newcommand{\cyclingheuristic}{
 \begin{minipage}{0.55\textwidth}
  \centering
  \begin{tikzpicture}[yscale=0.25, xscale=0.4 , thick]
      \newcommand{\height}{12}

      \lifelineCompact{0}{$s_1$}
      \lifelineCompact{2}{$s_2$}
      \lifelineCompact{4}{$s_3$}
      \lifelineCompact{6}{$u$}
      \lifelineCompact{8}{$v$}
      \lifelineCompact{10}{$t_1$}
      \lifelineCompact{12}{$w$}
      \lifelineCompact{14}{$t_{2,3}$}
      
      \def\yAxisCoord{-1.5}
      \draw[->] (\yAxisCoord,\height - .5) -- (\yAxisCoord, -0.5) node[below] {Time};
      \foreach \y in {1,4,7, 10}
      \draw (\yAxisCoord + 0.1,\height - \y) -- (\yAxisCoord -0.1,\height - \y) node[left] {\sbTime{\y}{00}};

            \draw[->, draw=color3] (0, \height - 1) -- (2, \height - 2);
      \draw[->, draw=color3] (2, \height - 2) -- (6, \height - 4);
      \draw[->, draw=color3] (6, \height - 4) -- (8, \height - 5);
      \draw[->, draw=color1] (4, \height - 5) -- (8, \height - 7);
      \draw[->, draw=color1] (8, \height - 7) -- (10, \height - 8);
      \draw[->, draw=color1] (10, \height - 8) -- (12, \height - 9);
      \draw[->, draw=color2] (6, \height - 8) -- (12, \height - 11);
      \draw[->, draw=color2] (12, \height - 11) -- (14, \height - 12);
  \end{tikzpicture}%
\end{minipage}%
\begin{minipage}{0.45\textwidth}%
  \small\vspace{-1em}%
  \begin{align*}
      p_1 & = \inlinetikz{
          \node (s_1) {$s_1$};
          \node (s_2) [right of=s_1] {$s_2$};
          \node (u) [right of=s_2] {$u$};
          \node (v) [right of=u] {$v$};
          \node (t_1) [right of=v] {$t_1$};
          \draw[->, color3] (s_1) -- (s_2);
          \draw[->, color3] (s_2) -- (u);
          \draw[->, color3] (u) -- (v);
          \draw[->, color1] (v) -- (t_1);
      }                    \\
      p_2 & = \inlinetikz{
          \node (s_2) {$s_2$};
          \node (u) [right of=s_2] {$u$};
          \node (v) [right of=u] {$v$};
          \node (t_1) [right of=v] {$t_1$};
          \node (w) [right of=t_1] {$w$};
          \node (t_23) [right of=w] {$t_{2,3}$};
          \draw[->, color3] (s_2) -- (u);
          \draw[->, color3] (u) -- (v);
          \draw[->, color1] (v) -- (t_1);
          \draw[->, color1] (t_1) -- (w);
          \draw[->, color2] (w) -- (t_23);
      }                    \\
      p_3 & = \inlinetikz{
          \node (s_2) {$s_2$};
          \node (u) [right of=s_2] {$u$};
          \node (w) [right of=u] {$w$};
          \node (t_23) [right of=w] {$t_{2,3}$};
          \draw[->, color3] (s_2) -- (u);
          \draw[->, color2] (u) -- (w);
          \draw[->, color2] (w) -- (t_23);
      }                    \\
      p_4 & = \inlinetikz{
          \node (s_3) {$s_3$};
          \node (v) [right of=s_3] {$v$};
          \node (t_1) [right of=v] {$t_1$};
          \node (w) [right of=t_1] {$w$};
          \node (t_23) [right of=w] {$t_{2,3}$};
          \draw[->, color1] (s_3) -- (v);
          \draw[->, color1] (v) -- (t_1);
          \draw[->, color1] (t_1) -- (w);
          \draw[->, color2] (w) -- (t_23);
      }  \end{align*}%
 \end{minipage}%
}

\begin{figure}[ht]
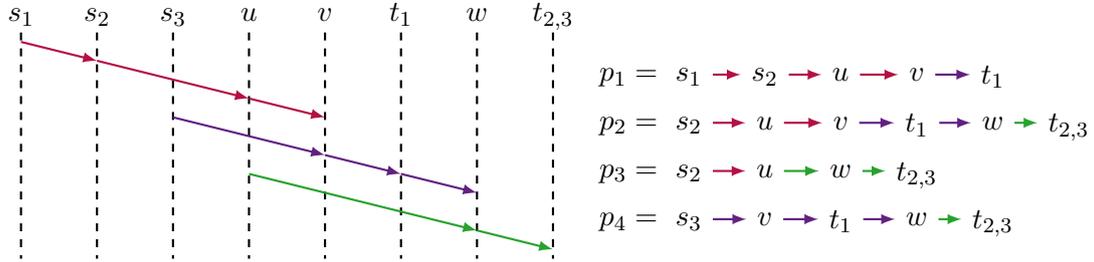

            \centering
        \cyclingheuristic
        \caption{An example for infinite cyclic behaviour of the heuristic}\label{fig:cyclic-behavior}
    \end{figure}

If, however, the heuristic first chooses to fill $p_1$, it ends up in a cycle:
After changing along this direction, only path $p_4$ is not blocked.
Thus, the heuristic chooses to fill path $p_4$ and inevitably removes the flow from $p_1$ again.
Once $p_4$ is filled, the only non-blocked path is $p_3$, and thus $p_3$ is filled and $p_4$ is emptied again.
Finally, $p_1$ is again the only non-blocked path and after filling $p_1$ (and thus removing the flow from $p_3$), we end up in the same situation as after the first augmentation.

\end{example}

\subsubsection*{Reducing the instance complexity}

We employ a technique to reduce the initial complexity of a given instance:
There is a class of paths from which the procedure will never remove flow.
Thus, we first fill these paths directly when initializing the heuristic.
For that we define so-called fixed initial solutions, which represent the prerouted part of the flow.

\newcommand{\ins}{\mathrm{in}}
\newcommand{\outs}{\mathrm{out}}
\begin{definition}
    Let $f=f^{\ins}+f^{\outs}$ be a feasible flow such that $f^{\ins}$ vanishes on the outside options and $f^{\outs}$ vanishes on $\pathsWithoutOutside$.
    Let $f^{(k)}$ denote the flow after iteration $k$ of the procedure given the initial flow $f^{(0)} = f$.
    We call $f^{\ins}$ a \emph{fixed initial solution}, if $f^{\ins}\leq f^{(k)}$ holds component-wise for all $k\in\N$ (for any implementation of the heuristic).
\end{definition}

\begin{definition}
    Let $f^{\ins}$ be a fixed initial solution.
    A path $p\in\pathsWithOutside_i$ is called \emph{uninterruptible} for $f^{\ins}$ if it never boards a vehicle for which the previous driving edge $\tilde e$ of the same vehicle is not saturated, i.e., $f^{\ins}_{\tilde e} < \nu_{\tilde e}$.
\end{definition}
{}
In other words, $p$ is uninterruptible for $f^{\ins}$ if whenever $p$ boards a vehicle at some departure node~$v$ and there is a chain of a driving edge $\tilde e$ and a dwelling edge with head $v$, we must have $f^{\ins}_{\tilde e} = \nu_{\tilde e}$.

\begin{restatable}{proposition}{PropFixedInitialSolutionUninterruptible}\label{prop:fixed-initial-solution-uninterruptible}
    Let $f^{\ins}$ be a fixed initial solution with corresponding feasible flow $f=f^{\ins}+f^\outs$, and let $p\in\pathsWithOutside_i$ be an uninterruptible path of a commodity $i$ with $\sum_{q\in\pathsWithOutside_i} f^\ins_q < \demand_i$.
    If $p$ minimizes $\pi_{i,p}$ on $\{ q \in \pathsWithOutside_i \mid \forall e\in E_B\cap q: f_{e^+} < \nu_{e^+} \}$ and $\pi_{i,p} < \pi_{\outopt_i}$, then $f^\ins + \lambda\cdot 1_{i,p}$ is a fixed initial solution for $\lambda = \min\{ \nu_{e^+} - f_{e^+} \mid e\in E_B\cap p \}\cup\{ \demand_i - \sum_{q\in\pathsWithOutside_i} f^{\ins}_q \}$.
\end{restatable}
\newcommand{\proofPropFixedInitialSolutionUninterruptible}[1][Proof]{
\begin{proof}[#1]
    Let $g$ be the flow that sends the remaining particles of $f^\ins + \lambda \cdot 1_{i,p}$ onto the outside options, i.e., $g_{i,\outopt_i}\coloneqq \demand_i - \sum_{q\in\pathsWithOutside_i}(f^\ins+\lambda \cdot 1_{i,p})_q$.
    Then, $f^*\coloneqq (f^\ins +\lambda \cdot 1_{i,p}) + g$ clearly is a feasible flow.

    The vector $1_{i,p} - 1_{i,\outopt_i}$ is a possible choice for the direction in the first iteration of the procedure given the initial flow $f$ because
    $f_{i,\outopt_i} > 0$ and $p\in \avPaths_{i,\outopt_i}(f)$ hold.
    This direction is also feasible due to \Cref{prop:feasible-direction-boarding-edge}.
    This means, $f+\lambda (1_{i,p} - 1_{i,\outopt_i})$ might be the flow after the first iteration, and thus, as $f^{\ins}$ was a fixed initial solution, $f^{\ins}\leq f^{(k)}$ for all $k\in\N$ where $f^{(k)}$ denotes the flow after $k$ iterations given the initial flow $f+\lambda (1_{i,p} - 1_{i,\outopt_i})$.
    Assume now that there is some $k\in\N$ such that $f^{\ins}_{i,p} + \lambda > f^{(k)}_{i,p}$, and let $k$ be minimal with this property.
    Then, in iteration $k$, some positive amount of flow was removed from entry $(i,p)$.
    This can have two reasons:

    The first reason could be that there is some better path $q\in \pathsWithOutside_i$ with $q\in \avPaths_{i,p}(f^{(k-1)})$, i.e., for any boarding edge $e$ on $q$ for which $e^+$ is not on $p$ we have $f^{(k-1)}_{e^+} < \nu_{e^+}$.
    This, however, implies that for all $e\in E_B\cap q$ we have $f^\ins_{e^+} < \nu_{e^+}$ and thus $\pi_{i,q} < \pi_{i,p}$ contradicts the minimality of $p$.

    The second reason could be that $p$ is removed when establishing feasibility of the direction in~\Cref{alg:remove-overfilled-edges}.
    This can only happen if there is a boarding edge of $p$ for which $f^{(k-1)}_{e^+} = \nu_{e^+}$ and $d_{e^+} > 0$ hold, where $d$ is the direction in iteration~$k-1$.
    This, however, is only possible if the initially chosen direction $1_{i,q'} - 1_{i,p'}$ fulfils $e^+\in q'\setminus p'$.
    As the boarding edge $e$ is already saturated after iteration $(k-1)$, $q'$ must have used the previous driving edge of the same vehicle as $e$.
    As $p$ is uninterruptible for $f^\ins$ (and $f^\ins \leq f^{(k-1)}$), this edge is also already saturated, and thus $p'$ must be a path with $f^\ins_{i,p'} > 0$.
    This implies that $\lambda'\cdot (1_{i,q'} - 1_{i,p'})$ is a possible choice for the first iteration given initial flow $f$ (for some $\lambda > 0$) and therefore contradicts that $f^\ins$ is a fixed initial solution.
\end{proof}
}
\proofPropFixedInitialSolutionUninterruptible
In our implementation, we initialize the flow by filling uninterruptible paths until no more uninterruptible paths exist.
In our computational study in \Cref{sec:comp-study}, this handles between 10\% and 25\% of the total demand before entering the main loop of the heuristic.
Additionally, we initialize the remaining demand not assigned to uninterruptible paths by solving a linear program minimizing the social cost such that no capacity is exceeded.

\section{Computational Study}\label{sec:comp-study}

To gain insights into the applicability of the proposed heuristic, we conduct a computational study on real world train networks.
We analyse the performance of the heuristic and compare the computed equilibrium solutions with system optima.
We consider both the case of departure time choice (DTC) and the case of fixed departure times (FDT).

\subsection{Experiment Setup}

Our dataset provides schedule-based transit networks in the form of periodic schedules.
These periodic schedules are unrolled into a time-expanded transit network as described in \Cref{sec:model} covering the vehicle trips of a typical work day from 5~a.m.\ until 11~p.m.
In the dataset, the demands are given as expected values for each origin-destination pair within one period of the schedule.
For each such pair, we generate a commodity for every 10-minute interval. 
We scale the demand values for these commodities such that they follow the distribution of the travel demand of a typical work day, for which we use the distribution for the Swiss national public transportation network for the year~2000 developed by \citet{Vrtic2007}, as shown in \Cref{fig:dynamic-profile}.

\begin{figure}[b]
    \newcommand{\figureContent}{    
        \begin{tikzpicture}
            \begin{axis}[
                height=4cm,
                width=6.67cm,
                xmin=0, xmax=24,
                ybar,
                bar width=1,
                xtick={0,...,24},
                xticklabels={0,,,,4,,,,8,,,,12,,,,16,,,,20,,,,24},
                xlabel={Hour of day},
                ylabel style={align=center},
                ylabel={Demand share\\(in \%)},
                label style={font=\small},
                tick label style={font=\small},
                enlarge x limits={abs=2},
            ]
                \addplot[ybar, bar width=1.0] table[x expr=\thisrow{time}+0.5, y=demand_share, col sep=comma] {dp.csv};
            \end{axis}
        \end{tikzpicture}
    }
    \newcommand{\figureCaption}{The demand distribution of trips in public transportation networks over a typical work day in Switzerland in the year 2000 \citep{Vrtic2007}.}
            \centering
        \figureContent
        \caption{\figureCaption}\label{fig:dynamic-profile}
    \end{figure}

For the case (DTC), a commodity's time offset represents the target arrival time.
Hence, we instantiate the set of feasible departure times as $\departureTimes_i \coloneqq \R$ and the cost function with a penalty factor of $1$ for both travel time and early arrival, and a penalty factor of $3$ for late arrival, i.e., \[
    \pi_{i,p}\coloneqq \tau_{p} + \max\{0, T_i - \patharrival_{p}\} + 3\cdot \max\{0, \patharrival_{p} - T_i\}.
\]

For the case (FDT), the time offset represents the fixed departure time{} (i.e., $\departureTimes_i$ is a singleton containing this offset).
Here, we assume that users minimize their travel time, i.e., $\pi_{i,p}=\tau_{i,p}$.
In both cases, the cost of the outside option of all commodities are fixed to a common, instance-specific value.

\newcommand*{\BR}{\pi^*}
\newcommand*{\absReg}{r}
\newcommand*{\relReg}{r^{\text{rel}}}
\newcommand*{\approxFactor}{\rho}

To measure the quality of the computed flows, we consider several metrics:
For a given flow~$f$, if a particle of commodity $i$ uses path $p$, then the particle's \emph{(absolute) regret} is defined as the difference of the cost of $p$ and the minimum cost over all available alternatives in $A_{i,p}(f)$, i.e., $\absReg_{i,p}(f)\coloneqq \pi_{i,p} - \BR_{i,p}(f)$ with $\BR_{i,p}(f)\coloneqq \min_{q\in A_{i,p}(f)}\pi_{i,q}$.
We define the particle's \emph{approximation factor} as the ratio $\approxFactor_{i,p}(f)\coloneqq \pi_{i,p} / \BR_{i,p}(f)$.
Note that a flow is a user equilibrium if and only if all particles have an approximation factor of $1$.
Thus, we use the distribution of the approximation factor to measure how close a flow is to a user equilibrium.
This includes the mean and 99\ordinalth{} percentile (P99) of the distribution.
Here, the P99 approximation factor is defined as the minimum value $v\in\R_{\geq0}$ such that at least 99\% of all particles have an approximation factor of at most~$v$.

For both the (DTC) and the (FDT) scenarios, we first compute a system-optimal flow by solving the
following linear program using a column generation method: 
\begin{align*}
    \min_{f\in\feasFlows}\  & \sum_{(i,p)\in\pathsWithOutside}\pi_{i,p} \cdot f_{i,p}.
\end{align*}
Then, we produce flows using our proposed heuristic.
The computation time of the main loop of the heuristic is limited to at most 2 hours.
The flow with the minimal mean approximation factor discovered during the main loop of the heuristic is used as the final solution.

As the problem changes significantly with a higher demand-capacity ratio, we conduct the same experiments once more by artificially scaling the demand of the network by a factor of $10$.

We implemented the heuristic for computing user equilibria (available in \citealt{HeuristicImplementation}) and the computation of system optima using the Rust programming language.
All experiments were conducted on an AMD Ryzen~9 5950X CPU with Gurobi~11 as LP solver.

\subsection{Data}

A pool of periodic timetables of real-world public transportation networks is provided in the publicly available TimPassLib \citep{SchieweGL23}.
We consider the following networks: a network for regional trains in Lower Saxony, Germany, two networks for the district of Erding including a slice of the Munich S-Bahn, the Hamburg S-Bahn, the Athens Metro and the long-distance train networks for Germany and Switzerland.
As an example, the Hamburg S-Bahn network is displayed in \Cref{fig:map-sbahn-hamburg}.{}

In some instances of this dataset, the demand values are unrealistically high, in which case we scale the demand values down to a more realistic level.
Similarly, in some instances, concrete numbers of vehicle capacities are not provided, in which case we use realistic estimates.
\newcommand{\datasetSpecifics}{
More specifically, for the Lower Saxony Regional network, we scaled the demand down to a nominal demand of $\num{5215488}$ trips per day, close to numbers reported by \citet{NiedersachsenDemand} in 2023, and vehicles with a capacity of $\num{500}$ passengers.
For the Erding networks, the original dataset \citep{PublicTransportNetworks2018} prescribes a demand of $\num{6700}$ per hour, which we multiplied by 16 hours for a total demand of $\num{107200}$.
These networks consist of lines operated by S-Bahn trains with a capacity of $\num{1000}$ passengers and bus lines with a capacity of $\num{70}$ passengers.
For the Hamburg S-Bahn network, we use as nominal demand (in trips per day) a value of $\num{750000}$ as taken from~\citep{sbahn-hamburg-data}, and vehicles with a capacity of $\num{1000}$ passengers.
For the Athens Metro network, the original dataset represents the peak demand, which we scale by a factor of $8$ hours to obtain a nominal demand of $\num{3039504}$ trips per day.
}
\datasetSpecifics\par\Cref{table:considered-networks} describes the considered networks and schedules in more detail.

\begin{figure}[t]
    \centering
    \iftrue
        \includesvg[width=\textwidth]{Karte_der_S-Bahn_Hamburg_no_river.svg}
        \vspace{-2em}
        \caption{Map of the considered S-Bahn Hamburg as operated until 2023 \citep{MapHamburg}}
        \label{fig:map-sbahn-hamburg}
    \else
        \FIGURE{\includesvg[width=\textwidth]{Karte_der_S-Bahn_Hamburg_no_river.svg}}{Map of the considered S-Bahn Hamburg as operated until 2023
        \label{fig:map-sbahn-hamburg}\citep{MapHamburg}}{}
    \fi
\end{figure}

\newcommand{\tabularinstances}{
    \small
    \begin{tabular}{lrrrrr}
        \toprule
        Name                  & \# stations    & \# vehicles   & nominal demand  & \# commodities & $\pi_{\outopt}$  \\\midrule
        Lower Saxony Regional & $\num{34}$     & $\num{468}$   & $\num{5215488}$ & $\num{31680}$  & 180  \\
        Erding\_NDP\_S020     & $\num{51}$     & $\num{1728}$  & $\num{107200}$ & $\num{64800}$   & 180 \\
        Erding\_NDP\_S021     & $\num{51}$     & $\num{1728}$  & $\num{107200}$ & $\num{64800}$   & 180 \\
        Hamburg S-Bahn        & $\num{68}$     & $\num{1512}$  & $\num{750000}$  & $\num{194880}$ & 180 \\
        Athens Metro          & $\num{51}$     & $\num{2160}$  & $\num{3039504}$ & $\num{228960}$ & 180 \\
        German Long Distance  & $\num{250}$    & $\num{1512}$  & $\num{6173888}$ & $\num{586176}$ & 360 \\
        Swiss Long Distance   & $\num{140}$    & $\num{1540}$  & $\num{1347686}$ & $\num{1159872}$ & 180 \\
        \bottomrule
    \end{tabular}
}

\begin{table}[ht]
            \centering
        \caption{Details of the considered networks}\label{table:considered-networks}
        \tabularinstances
        \vspace{0.5em}
    \end{table}

\FloatBarrier
\subsection{Results}

The results of our computational study are summarized in \Cref{table:results-both-settings-nominal} for the nominal demand case and in \Cref{table:results-both-settings-scaled} for the scaled demand case.
The tables show the following three metrics for both the flow produced by the heuristic and the system optimum:
The mean approximation factor, the P99 approximation factor\oxc and the percentage $S_{r=0}$ of particles that have no regret.

\newcommand*{\Cell}[2]{\tiny$\num{#1}$&\tiny$\num{#2}$}%
\newcommand{\hint}[1]{\textcolor{gray}{\footnotesize #1}}%

\newlength{\sepcols}
\setlength{\sepcols}{0.01cm}\newcommand\networkspec{\tiny}

\begin{table}    \newcommand{\tableContent}{
    \footnotesize    \begin{tabular}{l p{\sepcols} r@{~~}rr@{~~}rr@{~~}r p{\sepcols} r@{~~}rr@{~~}rr@{~~}r}
            \toprule
            && \multicolumn{6}{c}{(FDT)} & & \multicolumn{6}{c}{(DTC)} 
            \\\addlinespace
            Network               && \multicolumn{2}{c}{mean $\approxFactor$} & \multicolumn{2}{c}{P99 $\approxFactor$} & \multicolumn{2}{c}{$S_{r=0}$} && \multicolumn{2}{c}{mean $\approxFactor$} & \multicolumn{2}{c}{P99 $\approxFactor$} & \multicolumn{2}{c}{$S_{r=0}$} \\
            \midrule
            \networkspec Lower Saxony Regional & & \Cell{1.007}{1.190} & \Cell{1.235}{6.120} & \Cell{98.4}{92.9} & & \Cell{1.136}{1.149} & \Cell{4.250}{4.727} & \Cell{89.2}{91.0} \\
            \networkspec  Erding\_NDP\_S020     & & \Cell{1.000}{1.011} & \Cell{1.000}{1.225} & \Cell{100.0}{98.8} & & \Cell{1.005}{1.009} & \Cell{1.075}{1.078} & \Cell{98.8}{98.8}\\
            \networkspec Erding\_NDP\_S021     & & \Cell{1.000}{1.008} & \Cell{1.000}{1.031} & \Cell{100.0}{99.0} & &  \Cell{1.005}{1.006} & \Cell{1.000}{1.182} & \Cell{99.0}{98.5} \\ 
            \networkspec Hamburg S-Bahn        & & \Cell{1.000}{1.000} & \Cell{1.000}{1.000} & \Cell{100.0}{100.0} & & \Cell{1.000}{1.000} & \Cell{1.000}{1.000} & \Cell{100.0}{99.9} \\ 
            \networkspec Athens Metro          & & \Cell{1.736}{1.749} & \Cell{7.030}{7.317} & \Cell{70.6}{72.1} & & \Cell{1.734}{1.762} & \Cell{6.983}{7.143} & \Cell{67.1}{70.3} \\ 
            \networkspec German Long Distance  & & \Cell{1.120}{1.184} & \Cell{2.117}{2.463} & \Cell{63.1}{57.1} & & \Cell{1.106}{1.161} & \Cell{2.151}{2.479} & \Cell{72.2}{68.4} \\
            \networkspec Swiss Long Distance   & & \Cell{1.000}{1.049} & \Cell{1.000}{2.378} & \Cell{99.8}{90.3} & & \Cell{1.032}{1.040} & \Cell{1.776}{2.093} & \Cell{93.4}{91.8} \\
            \bottomrule
    \end{tabular}
    }
    \newcommand{\tableCaption}{Resulting performance metrics with nominal demand. The first number in every cell belongs to the flow computed by the heuristic, the second number to the system optimum.}

            \centering
        \caption{\tableCaption}\label{table:results-both-settings-nominal}
        \tableContent
    \end{table}
\begin{table}
    \newcommand{\tableContent}{
    \footnotesize    \begin{tabular}{l p{\sepcols} r@{~~}rr@{~~}rr@{~~}r p{\sepcols} r@{~~}rr@{~~}rr@{~~}r}
            \toprule
            && \multicolumn{6}{c}{(FDT)} & & \multicolumn{6}{c}{(DTC)} 
            \\\addlinespace
            Network               && \multicolumn{2}{c}{mean $\approxFactor$} & \multicolumn{2}{c}{P99 $\approxFactor$} & \multicolumn{2}{c}{$S_{r=0}$} && \multicolumn{2}{c}{mean $\approxFactor$} & \multicolumn{2}{c}{P99 $\approxFactor$} & \multicolumn{2}{c}{$S_{r=0}$} \\
            \midrule
            \networkspec Lower Saxony Regional & & \Cell{1.000}{2.621} & \Cell{1.000}{12.000} & \Cell{100.0}{65.9} & & \Cell{1.061}{2.590} & \Cell{2.792}{12.857} & \Cell{95.2}{67.3} \\ 
            \networkspec Erding\_NDP\_S020     & & \Cell{1.000}{2.334} & \Cell{1.000}{6.882} & \Cell{100.0}{53.5} & & \Cell{1.013}{3.165} & \Cell{1.343}{7.200} & \Cell{98.1}{54.3} \\
            \networkspec Erding\_NDP\_S021      & & \Cell{1.000}{3.169} & \Cell{1.000}{7.500} & \Cell{100.0}{52.5} & & \Cell{1.008}{2.836} & \Cell{1.184}{7.500} & \Cell{98.5}{54.9} \\
            \networkspec Hamburg S-Bahn       & & \Cell{1.656}{2.921} & \Cell{8.167}{18.500} & \Cell{57.3}{57.5} & & \Cell{1.900}{3.137} & \Cell{8.947}{21.000} & \Cell{59.0}{58.0} \\
            \networkspec Athens Metro         & & \Cell{1.166}{3.434} & \Cell{4.896}{14.062} & \Cell{88.7}{38.0} & & \Cell{1.222}{3.792} & \Cell{5.767}{18.367} & \Cell{88.8}{36.6} \\
            \networkspec German Long Distance    & & \Cell{1.227}{1.389} & \Cell{2.823}{3.636} & \Cell{52.4}{40.3} & & \Cell{1.112}{1.338} & \Cell{2.494}{3.913} & \Cell{76.7}{56.3} \\
            \networkspec Swiss Long Distance    & & \Cell{1.000}{1.224} & \Cell{1.000}{4.286} & \Cell{100.0}{76.4} & & \Cell{1.019}{1.233} & \Cell{1.473}{4.000} & \Cell{97.3}{77.1} \\
            \bottomrule
    \end{tabular}
    }
    \newcommand{\tableCaption}{Resulting performance metrics with 10x demand. The first number in every cell belongs to the flow computed by the heuristic, the second number to the system optimum.}

            \centering
        \caption{\tableCaption}\label{table:results-both-settings-scaled}
        \tableContent
    \end{table}

It can be observed that, for the case of (FDT) with nominal demand, the heuristic produces flows performing significantly better with respect to the regret metrics than the system optimum flow, except in the case of the Athens Metro network.
In particular, for the Erding and Hamburg network, user equilibria with no regret at all are found; besides these also for the Swiss Long Distance network a flow with a P99 approximation factor of $1$ is found.
The social cost of the flow of the heuristic is at most $7.6\%$ higher than that of the system optimum flow except for the Lower Saxony regional network, where it is $20.4\%$ higher.

For the departure time choice scenario with nominal demand, the heuristic in most cases produces flows that are slightly better than the system optimum with respect to the regret metrics.
Only for one network, the heuristic could produce an exact (up to machine precision) user equilibrium.
It is worth noting that the P99 approximation factor is smaller than $2.5$ for $5$ out of $7$ networks for both the flow produced by the heuristic and the system optimum.
Further, the social cost is at most $6.4\%$ higher than that of the system optimum flow.

For the scaled demand cases, the results of the heuristic and the system optimum differ more significantly.
In particular, for the (FDT) scenario, the heuristics computes exact user equilibria for four out of seven networks; for two of the remaining three networks, it produces a P99 approximation factor of less than halve compared to the system optimum flow.
For the (DTC) scenario, the P99 approximation factor is smaller by a factor of $5$ compared to the system optimum flow for two networks, and smaller by a factor of more than $2$ for six out of seven networks.

\FloatBarrier

\section{Conclusion}
We presented a side-constrained user equilibrium model for a schedule-based transit network incorporating hard vehicle capacities.
As our main results, we proved that equilibria exist for fixed departure times and that they can be computed efficiently for single-commodity instances.
The existence result generalizes a classical result of \citet{BernsteinS94}; its proof is based on a new condition (weak regularity) implying existence of BS-equilibria for a class of discontinuous and non-separable cost maps.
For general multi-commodity instances we showed hardness results and devised a heuristic, which was implemented and tested on several realistic transportation networks.

\paragraph*{Open Problems.}
Firstly, a side-constrained user equilibrium is not unique and, hence, the issue of equilibrium selection or determining which equilibrium is likely to be observed in practice deserves further study.
From an algorithmic point of view, while we proved the NP-hardness in the multi-commodity setting, it is unclear whether these hardness results also apply to the single-commodity case{} and whether the described decision problems lie in NP.
Another open problem is to determine the complexity of computing a user equilibrium for multi-commodity networks with fixed departure times in the sense of a total function problem rather than a decision problem.
Similarly, the computational complexity for single-commodity networks
with periodic timetables and a compactly representable time-expanded graph is also open.

\newcommand{\listOfSymbols}{
    The following list contains the symbols used to model side-constrained user equilibria in schedule-based transit networks in this paper.
    Symbols that are used only in single subsections are at the end of the list.

    \medskip\noindent
{
    \small
    \begin{tabular}{p{.18\textwidth}p{.82\textwidth}}%
        \toprule
        Symbol & Description \\ \midrule

        $\stations$ & set of stations \\
        $\vehicles$ & set of vehicle trips \\
        $G=(V, E)$ & time-expanded graph with nodes $V$ and edges $E$ \\
        $E_B \subseteq E$ & set of boarding edges \\
        $E_D \subseteq E$ & set of driving edges \\
        $e^+$ & succeeding driving edge of a boarding edge $e$ \\
        $\mu_z, \mu_e$ & capacity of a vehicle $z$ and a driving edge $e$ \\
        $\theta(v)$ & time of a node $v$ \\
        $\tau_e, \tau_p$ & traversal time of edge $e$ and path $p$ \\
        $I$ & set of commodities \\
        $s_i, t_i$ & origin and destination station of commodity $i$ \\
        $\departureTimes_i$ & set of feasible departure times of commodity $i$ \\
        $T_i$ & target arrival time of commodity $i$ \\
        $\demand_i$ & total demand of commodity $i$ \\
        $\pathsWithoutOutside_i$ & set of paths of commodity $i$ \\
        $\outopt_i$ & outside option of commodity $i$ \\
        $\pathsWithOutside_i$ & strategy set of commodity $i$: $\pathsWithOutside_i \coloneqq \pathsWithoutOutside_i \cup \{\outopt_i\}$ \\
        $\beta_i, \gamma_i^+, \gamma_i^-$ & penalty factors of commodity $i$ for travel time, late arrival\oxc and early arrival \\
        $\pi_{i,p}$ & cost of strategy $p\in\pathsWithOutside_i$ of commodity $i$ \\

        $f_{i,p}$ & flow of commodity $i$ on path $p$ \\
        $\demFeasFlows,\capFeasFlows,\feasFlows$ & sets of demand-feasible, capacity-feasible and feasible flows \\
        $f_{i,p\to q}(\varepsilon)$ & $\varepsilon$-deviation of commodity $i$ from path $p$ to path $q$ \\
        $\avPaths_{i,p}(f)$ & set of available alternatives to path $p$ for commodity $i$ given flow $f$ \\
        $\pi(f)$ & social cost of flow $f$ \\[1em]

        \textsc{\Cref{sec:model}} \\
        $J$ & set of groups of particles with common penalty factors, origin and destination, target arrival time\oxc and feasible departure time interval \\
        $\elasticDemand_j(\pi)$ & volume of particles of a group $i\in J$ willing to travel if the cost does not exceed~$\pi$\\[1em]

        \textsc{\Cref{sec:characterization-existence-pos}} \\
        $\admissibleDeviations(f)$ & set of admissible $\varepsilon$-deviations of flow $f$ \\
        $c_{i,p}$ & cost function of path $p$ for commodity $i$ \\
        $G'=(V', E')$, $\alpha_i$, $\omega_i$ & expansion graph of $G$ with source nodes $\alpha_i$ and sink nodes $\omega_i$ \\
        $\bar c_e, \bar c_{i,e}$ & upper hull of a cost function $c_e$ or $c_{i,e}$ (see \Cref{def:regular-cost-function}) \\[1em]

        \textsc{\Cref{sec:computation}} \\
        $p \prec q$ & path $p$ has priority over path $q$ \\
        $\pi'_i(\theta)$ & cost when arriving at time $\theta$ (given fixed departure times) \\

        \bottomrule
    \end{tabular}
    }
}

\clearpage

    \appendix
    \section{List of Symbols}\label{sec:list-of-symbols}
    \listOfSymbols
    \clearpage
    \printbibliography

\end{document}